\documentclass[runningheads]{llncs}
\usepackage{url,verbatim,multirow}
\usepackage{amsfonts,amsmath,amssymb,color,wasysym}
\usepackage{latexsym,verbatim,threeparttable,rotating}
\usepackage{hyperref}
\usepackage{array,hhline,color}
\usepackage[tableposition=below]{caption}
\usepackage{tabu}

\usepackage{caption}
\usepackage{adjustbox,multirow}

\usepackage[svgnames]{xcolor}
\usepackage{framed}

\usepackage{framed}

\definecolor{shadecolor}{named}{LightGray}

\usepackage{booktabs,graphicx}

\newtheorem{condition}[theorem]{Condition}


\usepackage[ruled]{algorithm2e}

\SetAlFnt{\small}
\SetAlCapFnt{\small}
\SetAlCapNameFnt{\small}
\SetAlCapHSkip{0pt}
\IncMargin{-\parindent}


\usepackage{enumitem}
\usepackage{latexsym}
\usepackage{mathtools}
\usepackage{multirow}
\usepackage{multicol}
\usepackage{mwe}
\usepackage{makecell}
\usepackage{pgfplots}
\usepackage[section]{placeins}
\usepackage{pifont}
\usepackage{ragged2e}
\usepackage{rotating}
\usepackage{subcaption}
\usepackage{tabu}
\usepackage{tabularx}
\usepackage{threeparttable}
\usepackage{tikz}
\usepackage[normalem]{ulem}
\usepackage{url}
\usepackage{verbatim}
\usepackage{wrapfig}
\usepackage{xcolor}
\usepackage{xspace}
\usepackage[framemethod=tikz]{mdframed}
\usepackage{algpseudocode}
\input{symbols_env}
\usepackage{verbatim,color}
\newcommand{\F}{\ensuremath{\mathbb{F}}}
\newcommand{\K}{\ensuremath{\mathbb{K}}}
\newcommand{\SharingSpec}{\ensuremath{\mathbb{S}}}

\newcommand{\Q}{\ensuremath{\mathcal{Q}}}
\newcommand{\R}{\ensuremath{\mathcal{R}}}
\newcommand{\Order}{\ensuremath{\mathcal{O}}}

\newcommand{\defined}{\ensuremath{\stackrel{def}{=}}}

\newcommand{\C}{\mathcal{C}}
\newcommand{\Z}{\mathcal{Z}}
\newcommand{\ckt}{\ensuremath{\mathsf{cir}}}

\newcommand{\pref}{\ensuremath{\mathsf{pref}}}

\newcommand{\Con}{\ensuremath{\mathsf{Con}}}

\newcommand{\PartySet}{\ensuremath{\mathcal{P}}}
\newcommand{\Partyset}{\ensuremath{\mathcal{P}}}
\newcommand{\AdvStruct}{\ensuremath{\mathcal{Z}}}
\newcommand{\AdvStructure}{\ensuremath{\mathcal{Z}}}
\newcommand{\Honest}{\ensuremath{\mathcal{H}}}

\newcommand{\E}{\ensuremath{\mathcal{E}}}

\newcommand{\CoreSet}{\ensuremath{\mathcal{CS}}}
\newcommand{\CSet}{\ensuremath{\mathcal{C}}}
\newcommand{\Kings}{\ensuremath{\mathcal{K}}}

\newcommand{\Adv}{\ensuremath{\mathcal{A}}}
\newcommand{\D}{\ensuremath{\mathsf{D}}}
\newcommand{\Sender}{\ensuremath{\mathsf{S}}}


\newcommand{\BA}{\ensuremath{\Pi_\mathsf{BA}}}
\newcommand{\VSS}{\ensuremath{\Pi_\mathsf{VSS}}}

\newcommand{\PiACast}{\ensuremath{\Pi_\mathsf{ACast}}}

\newcommand{\BCAST}{\ensuremath{\Pi_{\mathsf{BC}}}}

\newcommand{\ABA}{\ensuremath{\Pi_{\mathsf{ABA}}}}
\newcommand{\SBA}{\ensuremath{\Pi_{\mathsf{SBA}}}}

\newcommand{\Rec}{\ensuremath{\Pi_{\mathsf{Rec}}}}
\newcommand{\Mult}{\ensuremath{\Pi_{\mathsf{Mult}}}}

\newcommand{\ACS}{\ensuremath{\Pi_{\mathsf{ACS}}}}

\newcommand{\BatchBeaver}{\ensuremath{\Pi_{\mathsf{Beaver}}}}
\newcommand{\PiMPC}{\ensuremath{\Pi_{\mathsf{CirEval}}}}

\newcommand{\Offline}{\ensuremath{\Pi_{\mathsf{PreProcessing}}}}

\newcommand{\TimeBCAST}{\ensuremath{T_\mathsf{BC}}}
\newcommand{\TimeBA}{\ensuremath{T_\mathsf{BA}}}

\newcommand{\TimeABA}{\ensuremath{T_\mathsf{ABA}}}
\newcommand{\TimeSBA}{\ensuremath{T_\mathsf{SBA}}}

\newcommand{\TimeVSS}{\ensuremath{T_\mathsf{VSS}}}

\newcommand{\TimeACS}{\ensuremath{T_\mathsf{ACS}}}

\newcommand{\TimeMult}{\ensuremath{T_\mathsf{{Mult}}}}
\newcommand{\TimeOffline}{\ensuremath{T_\mathsf{{PreProcessing}}}}

\newcommand{\init}{\ensuremath{\texttt{init}}}
\newcommand{\echo}{\ensuremath{\texttt{echo}}}
\newcommand{\ready}{\ensuremath{\texttt{ready}}}
\newcommand{\Ready}{\ensuremath{\texttt{ready}}}
\newcommand{\OK}{\ensuremath{\texttt{OK}}}
\newcommand{\NOK}{\ensuremath{\texttt{NOK}}}
\newcommand{\resolve}{\ensuremath{\texttt{Resolve}}}


\newenvironment{myitemize}
{\begin{list}{$\bullet$}{ 
\itemindent=-0.1in
\itemsep=0.0in
\parsep=0.0in
\topsep=0.0in
\partopsep=0.0in}}{\end{list}}
\newcounter{itemcount}

\newenvironment{myenumerate}
{\setcounter{itemcount}{0}\begin{list}
{\arabic{itemcount}.}{\usecounter{itemcount} \itemindent=-0.2cm
\itemsep=0.0in
\parsep=0.0in
\topsep=5pt
\partopsep=0.0in}}{\end{list}}

\newenvironment{mydescription}
{\setcounter{itemcount}{0}\begin{list}
{\arabic{itemcount}.}{\usecounter{itemcount} \itemindent=-0.5cm
\itemsep=0.0in
\parsep=0.0in
\topsep=5pt
\partopsep=0.0in}}{\end{list}}

\begin{document}

\title{\bf Perfectly Secure Synchronous MPC with Asynchronous Fallback Guarantees Against General Adversaries}
\author{Ananya Appan\thanks{The work was done when the author was a student at the International Institute of Information Technology, Bangalore India.
    Email: {\tt{ananya.appan@iiitb.ac.in}}}
  \and Anirudh Chandramouli\thanks{The work was done when the author was a student at the International Institute of Information Technology, Bangalore India.
    Email: {\tt{anirudh.c@iiitb.ac.in}}}
  \and Ashish Choudhury\thanks{International Institute of Information Technology, Bangalore India.
 Email: {\tt{ashish.choudhury@iiitb.ac.in}}. }}
\institute{}
\date{}

\maketitle
\begin{abstract}
In this work, we study  {\it perfectly-secure} {\it multi-party computation} (MPC) against general ({\it non-threshold})
 adversaries. 
  Known protocols in a {\it synchronous} network are secure 
  against 
  $\Q^{(3)}$ adversary structures, while in an {\it asynchronous} network, known protocols are secure
  against  $\Q^{(4)}$ adversary structures.
   A natural question is whether there exists a {\it single} protocol which remains secure
   against $\Q^{(3)}$ and $\Q^{(4)}$ adversary structures in a {\it synchronous} and in an 
   {\it asynchronous} network respectively,
   where the parties are {\it not aware} of the network type.
    We design the {\it first} such {\it best-of-both-worlds}  protocol
    against general adversaries. 
    Our result generalizes the result of Appan, Chandramouli and Choudhury (PODC 2022),
   which presents a best-of-both-worlds perfectly-secure protocol against {\it threshold} adversaries. 
   
   To design our protocol, we present two important building blocks which are of independent interest.
   The first building block is a best-of-both-worlds perfectly-secure {\it Byzantine agreement} (BA) protocol
   for $\Q^{(3)}$ adversary structures, which remains secure {\it both} in a synchronous, as well as 
   an asynchronous network. The second building block is a {\it best-of-both-worlds} perfectly-secure 
    {\it verifiable secret-sharing} (VSS) protocol, which remains secure against 
    $\Q^{(3)}$ and $\Q^{(4)}$ adversary structures in 
    a {\it synchronous}
    network and an {\it asynchronous} network respectively. 
\end{abstract}


\section{Introduction}
\label{sec:intro}
Secure {\it multi-party computation} (MPC) \cite{Yao82,GMW87,BGW88} is one of the central pillars in modern cryptography.
 Informally, an MPC protocol 
 allows a set of mutually distrusting parties,  $\Partyset = \{P_1, \ldots, P_n \}$, to securely perform any computation over their private inputs
  without revealing any additional information about their inputs. 
  In any MPC protocol, the distrust among the parties is
   modeled by a centralized {\it adversary} $\Adv$, who can corrupt and control a subset of the parties during the protocol execution.
   We consider {\it computationally unbounded}, Byzantine (malicious) adversaries. This is the most powerful form of corruption where $\Adv$ can force the corrupt parties to behave {\it arbitrarily} during protocol
   execution. Security achieved against such an adversary is called {\it perfect security}. 
   
   Traditionally, the {\it corruption capacity} of $\Adv$ is modeled through a publicly-known {\it threshold} $t$, where it is assumed that
   $\Adv$ can corrupt {\it any} $t$ parties \cite{BGW88,CCD88,RB89}.  A more general form of corruption capacity
   is the {\it general-adversary} model (also known as the {\it non-threshold} setting) \cite{HM97}. 
    Here, $\Adv$ is characterized by a publicly-known {\it adversary structure} 
    $\AdvStructure \subset 2^{\PartySet}$, which enumerates {\it all possible}
     subsets of potentially corrupt parties, where $\Adv$ can select any subset from $\AdvStructure$ for corruption. 
     Notice that a {\it threshold} adversary is a {\it special} type of {\it non-threshold} adversary, where 
     $\Z$ consists of all subsets of $\PartySet$ of size up to $t$. 
     It is well-known that modelling $\Adv$ through $\AdvStructure$
      allows for more flexibility, especially when
    $\PartySet$ is small \cite{HM97,HM00}. 
  \paragraph{\bf Our Motivation and Results:}
  Traditionally, MPC protocols are designed assuming either a {\it synchronous} or {\it asynchronous} communication model. 
    In
    a {\it synchronous} MPC (SMPC) protocol, 
    the communication channels between the parties are assumed to be {\it synchronized},
    and every message is assumed to be delivered within some {\it known} time.
    Unfortunately, maintaining such time-outs in real-world networks like the Internet is extremely challenging. 
   Asynchronous MPC (AMPC) protocols operate assuming an {\it asynchronous} communication network, 
   where the channels are not synchronized, and messages can be arbitrarily (yet finitely) delayed. 
   Designing AMPC protocols is {\it more challenging} when compared to SMPC protocols.
   This is because, inherently, in {\it any} AMPC protocol, a receiver party {\it cannot distinguish}
   between a {\it slow sender} party (whose messages are arbitrarily delayed) and a {\it corrupt sender} party (who does not send
   any messages). Hence, to avoid an endless wait, no party can afford to receive messages from {\it all} the parties,
   as corrupt parties may never send their designated messages. 
   So, at every step in an AMPC protocol, 
   a receiver party can wait for messages from only a ``subset" of parties, ignoring messages from the remaining parties
   which may be {\it potentially honest}.
   In fact, in {\it any} AMPC protocol, it is {\it impossible}
    to ensure that the inputs of {\it all} honest parties are considered for computation, 
    since waiting for the inputs of 
    {\it all} the parties may turn out to be an endless wait. 
     
     Against {\it threshold} adversaries, perfectly-secure SMPC can tolerate 
     up to $t_s < n/3$ corrupt parties \cite{BGW88}. On the other hand, perfectly-secure AMPC
    can tolerate  up to $t_a < n/4$ corrupt parties \cite{BCG93}. 
     These impossibility results have been generalized to the following bounds against a {\it non-threshold} adversary:
     SMPC against a general-adversary is possible provided the underlying {\it synchronous adversary structure} 
     $\Z_s$ satisfies the $\Q^{(3)}$ condition \cite{HM00}.
     On the other hand, AMPC against a general-adversary is possible provided the underlying {\it asynchronous adversary structure} 
     $\Z_a$ satisfies the $\Q^{(4)}$ condition \cite{MSR02}.\footnote{An adversary structure $\Z$ satisfies
     the $\Q^{(k)}$ condition if the union of any $k$ subsets from $\Z$ {\it does not} cover $\PartySet$.}
     
     Typically, in any MPC protocol, it is {\it assumed} that the parties will be knowing whether 
     the underlying network is synchronous or asynchronous {\it beforehand}. 
      We envision a scenario where the parties are {\it not aware} of the network type, and 
      aim to design a {\it single} MPC protocol which offers the best possible security guarantees, both in the synchronous and the asynchronous communication model.
      We call such a protocol as a {\it best-of-both-worlds} MPC protocol.
               In a recent work, Appan et al.~\cite{ACC22a} presented a best-of-both-worlds {\it perfectly-secure}
         MPC protocol against {\it threshold} adversaries which could tolerate up to $t_s$ and $t_a$
         corruptions in a {\it synchronous} and {\it asynchronous} network respectively, for any $t_a < t_s$ where $t_a < n/4$
         and $t_s < n/3$,
         provided $3t_s + t_a < n$ holds. We aim to generalize this result against {\it general} adversaries, and ask the following question:         
        \begin{center}
        Let $\Adv$ be an adversary, characterized by adversary structures $\Z_s$ and $\Z_a$  
        in a 
        synchronous network and asynchronous network respectively, where $\Z_s \neq \Z_a$.      
        Then, is there a best-of-both-worlds 
        perfectly-secure MPC protocol which is secure against $\Adv$, irrespective of the 
        network type?        
        \end{center}
        No 
        prior work has addressed the above question. We present a 
        best-of-both-worlds perfectly-secure MPC protocol, provided $\Z_s, \Z_a$ satisfy the 
        $\Q^{(3, 1)}$ condition and if every subset in $\Z_a$ is a subset of some subset in 
        $\Z_s$.\footnote{$\Z_s, \Z_a$ satisfy the $\Q^{(k, k')}$ condition if the union of any $k$ and $k'$ subsets
        from $\Z_s$ and $\Z_a$ respectively {\it does not} cover $\PartySet$.} 
          Note that we focus on the case where $\Z_s \neq \Z_a$ as otherwise, the question is {\it trivial} to solve. 
          More specifically, if $\Z_s = \Z_a$, then the necessary condition of AMPC implies that 
          {\it even} $\Z_s$ satisfies the $\Q^{(4)}$ condition. Hence,           
           one can use {\it any} existing perfectly-secure AMPC protocol against general-adversaries
            (with appropriate time-outs) \cite{MSR02,CP20,ACC22b},
          which will
          work {\it even} in the synchronous network, with the guarantee that  the inputs of {\it all honest} parties are considered for the
          computation.
          Our goal, however, is to achieve security against $\Q^{(3)}$ adversary structures, if the underlying network is {\it synchronous}.
          For example, let $\PartySet = \{P_1, \ldots, P_8\}$. Consider the adversary structures
           $\Z_s = \{ \{P_1, P_2, P_3\}, \{P_2, P_3, P_4\}, \{P_3, P_4, P_5\}, \{P_4, P_5, P_6\}, \allowbreak \{P_7\}, \{P_8\}\}$
           and $\Z_a = \{ \{P_1, P_3\}, \{P_2, P_4\}, \{P_3, P_5\}, \{P_4, P_6\}\}$. Since $\Z_s$ and $\Z_a$ satisfy $\Q^{(3)}$
           and $\Q^{(4)}$ conditions respectively, it follows that {\it existing} SMPC protocols can tolerate $\Z_s$, while
           existing AMPC protocols can tolerate $\Z_a$. However, we show that {\it even} if the 
           parties are {\it not aware} of the exact network type, then using our protocol, one can {\it still} achieve security
           against $\Z_s$ if the network is {\it synchronous} or against $\Z_a$ if the network is {\it asynchronous}.
           The above example {\it also} demonstrates the flexibility offered by the non-threshold
           adversary model, in terms of tolerating {\it more} number of faults.
           More specifically, in the {\it threshold} model, using the protocol of  \cite{ACC22a}, one can tolerate up to
          $t_s = 2$ and $t_a = 1$ faults, in a {\it synchronous} and {\it asynchronous} network respectively.
          However, in the {\it non-threshold} model, our protocol can tolerate subsets of size larger than the maximum
          allowed
          $t_s$ and $t_a$ in synchronous and asynchronous network.       
          
            Even though our results generalize the results of \cite{ACC22a}, our protocols are relatively simpler compared to theirs. 
            For instance, one of the main ingredients used in their protocol is a best-of-both-worlds 
            {\it verifiable secret-sharing} (VSS) protocol. Their VSS is involved and 
            built upon another
            primitive called {\it weak polynomial-sharing} (WPS). 
            On the contrary, our best-of-both-worlds VSS protocol is relatively {\it simpler} and is {\it not}
            based on any WPS protocol.
\subsection{Technical Overview}
 Like in any generic MPC protocol, we assume that the underlying computation (which the parties want to perform securely)
  is modelled as some publicly-known function, abstracted 
   by some arithmetic circuit $\ckt$, over some algebraic structure $\K$, consisting of linear and non-linear (multiplication) gates.
   The problem of secure computation then reduces to secure {\it circuit-evaluation}, where
   the parties jointly and securely ``evaluate" $\ckt$ in a secret-shared fashion, such that all the values during the 
   circuit-evaluation remain {\it verifiably secret-shared} and where
   the shares of the corrupt parties {\it fail} to reveal the exact underlying value. 
   The secret-sharing used is typically {\it linear} \cite{CDM00}, thus allowing the parties to evaluate the linear gates
   {\it locally} (non-interactively). On the other hand, non-linear gates are evaluated by deploying
    the standard Beaver's method \cite{Bea91}
   using random, secret-shared {\it multiplication-triples} which are generated in a circuit-independent {\it preprocessing phase}. 
   Then, once all the gates are securely evaluated, the parties publicly reconstruct the secret-shared circuit-output.
   Apart from VSS \cite{CGMA85}, the parties also need to run instances of a
   {\it Byzantine agreement} (BA) protocol \cite{PSL80}
    to ensure that all the parties are
   on the ``same page" during the various stages of the circuit-evaluation. 
   The above framework  for shared circuit-evaluation is defacto 
   used in {\it all} generic perfectly-secure SMPC and AMPC protocols.
        Unfortunately, there are several challenges to adapt the framework in our  setting, where
      the parties will be
     {\it unaware} of the exact network type.
      \paragraph{\bf First Challenge --- A Best-of-Both-Worlds BA Protocol:}
     Informally, a BA protocol \cite{PSL80} allows parties with private inputs
      to reach agreement on a {\it common} output ({\it consistency}), such that the output is the 
       input of
     {\it honest} parties, if all honest parties have the same input ({\it validity}).
     {\it Perfectly-secure} BA protocols can be designed 
     against $\Q^{(3)}$ adversary structures {\it irrespective} of the network type \cite{FM98,Cho22}.
            However, the {\it termination} (also called {\it liveness}) guarantees are {\it different} for {\it synchronous} BA (SBA)
            and {\it asynchronous} BA (ABA) protocols. 
            The (deterministic) SBA protocols ensure that all honest parties obtain their output after some fixed time
             ({\it guaranteed liveness}) \cite{FM98}.
              On the other hand, to circumvent the FLP impossibility result \cite{FLP85},
             ABA protocols are {\it randomized} and provide  {\it almost-surely liveness} \cite{ADH08,BCP20,Cho22},
             where  the parties terminate the protocol with probability $1$, if they keep on running the protocol forever.           
             Known SBA protocols become insecure in an {\it asynchronous} network, while existing ABA protocols 
              can provide {\it only} almost-surely liveness in a {\it synchronous} network. 
             
             The {\it first} challenge to perform shared circuit-evaluation
             in our setting
              is to get a best-of-both-worlds BA protocol which provides 
              the security guarantees of SBA and ABA in a {\it synchronous} and an {\it asynchronous}
              network respectively. 
             We are {\it not} aware of any such BA protocol and hence, present a 
              BA protocol against $\Q^{(3)}$ adversary structures with the above properties.            
              As our BA protocol is technical, we defer its informal discussion to Section \ref{sec:BA}.
   \paragraph{\bf Second Challenge --- A Best-of-Both-Worlds VSS Protocol:}
   In a VSS protocol,
     there exists a {\it dealer} $\D$ with some private input $s$. 
     The protocol allows $\D$ to ``verifiably" distribute shares of $s$ to the parties, 
       such that 
       adversary's view remains independent of $s$, provided $\D$ is {\it honest} ({\it privacy}). 
        Moreover, 
        in a {\it synchronous} VSS (SVSS) protocol,
    every (honest) party obtains its shares after some {\it known} time-out ({\it correctness}).
    The {\it verifiability} here guarantees that even a {\it corrupt} $\D$ 
   shares some value ``consistently" within the known time-out ({\it commitment} property).
    Perfectly-secure SVSS against general adversaries is possible, provided the underlying
    adversary structure $\Z_s$ satisfies $\Q^{(3)}$ condition \cite{Mau02,HT13}.
  
    For an {\it asynchronous} VSS (AVSS) protocol, the {\it correctness} guarantees that for an {\it honest} 
    $\D$, the secret $s$ is eventually secret-shared.
    However, a {\it corrupt}
    $\D$ {\it may not} invoke the protocol in the first place, in which case the honest parties may not obtain any
    shares. 
     Hence, the {\it commitment} property of AVSS guarantees that if $\D$ is {\it corrupt} {\it and}
     if some honest party computes a share (implying that $\D$ has invoked the protocol), then all honest parties eventually 
     compute their shares.  Perfectly-secure AVSS against general adversaries is possible, 
    provided the underlying
    adversary structure $\Z_a$ satisfies the $\Q^{(4)}$ condition \cite{HT13,ACC22b}.
 
     Existing SVSS protocols become completely insecure in an asynchronous network, even if a
     single expected message from an {\it honest} party is {\it delayed}. 
     On the other hand, existing AVSS protocols become insecure against $\Q^{(3)}$ adversary structures
     (which will be the case, if the network is {\it synchronous}).
       Since ,in our setting, the parties will {\it not} be knowing the exact network type, 
    to maintain {\it privacy} during the shared circuit-evaluation, 
    we need to ensure that each value 
     remains secret-shared with respect to $\Z_s$ and {\it not} $\Z_a$, {\it even}
    if the network is {\it asynchronous}.\footnote{Since we are assuming that every subset in $\Z_a$ is a subset of some subset in
    $\Z_s$, the privacy will be maintained, both in the synchronous as well as
    asynchronous network, 
    if each value remains secret-shared with respect to $\Z_s$.} The {\it second} challenge to perform shared circuit-evaluation
    in our setting is to get a perfectly-secure VSS protocol which is secure with respect to $\Z_s$ and $\Z_a$
    in a {\it synchronous} and {\it asynchronous} network respectively, where {\it privacy}
    {\it always holds} with respect to $\Z_s$, {\it irrespective} of the network type. 
    We are not aware of any VSS protocol against general adversaries
     with these guarantees. Hence, we present a best-of-both-worlds  perfectly-secure VSS protocol satisfying the above properties.
     Since our VSS is slightly technical, we defer the informal discussion about the protocol to Section \ref{sec:VSS}.
    \subsection{Other Related Work}
    All existing works in the domain of the best-of-both-worlds protocols
    focus  only on {\it threshold} adversary model. The works of 
        \cite{BKL19} and \cite{BZL20,DHL21} show that the condition $2t_s + t_a < n$ is necessary and sufficient for 
         best-of-both-worlds {\it cryptographically-secure}
        BA and MPC respectively, tolerating {\it computationally bounded} adversaries. 
       Using the same condition, \cite{BKL21} presents a best-of-both-worlds {\it cryptographically-secure} atomic broadcast protocol. 
     The work of \cite{MO21} studies 
     Byzantine fault tolerance and state machine replication protocols for multiple thresholds, including $t_s$ and $t_a$. 
     The work of \cite{GLW22} presents best-of-both-worlds protocol for the task of approximate agreement using the condition
     $2t_s + t_a < n$.

\section{Preliminaries and Definitions}
\label{sec:prelims}
The parties in  $\PartySet$ are assumed to be connected 
 by pair-wise secure channels.
    The underlying communication network can be either synchronous or asynchronous, with parties
     being {\it unaware} about the exact type.
   In a {\it synchronous} network, every sent message  is delivered in the same order, within time
   $\Delta$. 
     In an {\it asynchronous} network, messages can be delayed  arbitrarily, but finitely,
      and {\it need not} be delivered in the same order. The only guarantee is that every sent message 
      is {\it eventually} delivered. The distrust is modeled by a centralized {\it malicious} (Byzantine)
      adversary $\Adv$, who can corrupt a subset of the parties in $\PartySet$ and force them to behave in any 
      arbitrary fashion during the execution of a protocol. 
      The adversary  is assumed to be {\it static}, and
      decides the set of corrupt parties at the beginning of the protocol execution. The
      adversary $\Adv$ is characterized by a {\it synchronous adversary structure}
      $\Z_s \subset 2^{\PartySet}$ and an {\it asynchronous adversary structure}
      $\Z_a \subset 2^{\PartySet}$. While in a {\it synchronous} network, $\Adv$ can corrupt
      any subset of parties from $\Z_s$, in an {\it asynchronous} network, 
      $\Adv$ can corrupt any subset from $\Z_a$.
      
      Given an arbitrary $\PartySet' \subseteq \PartySet$, and an arbitrary adversary structure $\Z \subset 2^{\PartySet}$, we
      say that $\Z$ satisfies the $\Q^{(k)}(\PartySet, \Z)$ condition \cite{HM97}, if the union of {\it any} $k$ subsets from $\Z$, 
      {\it does not} cover $\PartySet'$; i.e.~for every $Z_{i_1}, \ldots, Z_{i_k} \in \Z$,
      the condition $\PartySet' \not \subseteq Z_{i_1} \cup \ldots \cup Z_{i_k}$ holds. Given $\Z_s$ and $\Z_a$, we say that
      $\Z_s$ and $\Z_a$ satisfy the $\Q^{(k, k')}(\PartySet, \Z_s, \Z_a)$ condition, if the union of 
        any $k$ subsets from $\Z_s$ and any $k'$ subset from $\Z_a$, {\it does not} cover $\PartySet$. That is,
        for every $Z_{i_1}, \ldots,  Z_{i_k} \in \Z_s$ and every $Z_{j_1}, \ldots, Z_{j_{k'}} \in \Z_a$, 
        the condition $\PartySet \not \subseteq Z_{i_1} \cup \ldots \cup Z_{i_k} \cup Z_{j_1} \cup \ldots \cup Z_{j_{k'}}$ holds.
      
       We assume that
      $\Z_s$ and $\Z_a$ satisfy the following conditions, which we refer throughout the paper as {\it conditions $\Con$}.
      \begin{condition}[{\bf $\Con$}]
      \label{condition:Con}
      $\Z_s$ and $\Z_a$ satisfy the following conditions.
      \begin{myitemize}
        \item[--] $\Z_s \neq \Z_a$, and $\Z_s, \Z_a$ satisfy the $\Q^{(3, 1)}(\PartySet, \Z_s, \Z_a)$ condition.
        \item[--] For every subset $Z \in \Z_a$, there exists a subset $Z' \in \Z_s$, such that $Z \subseteq Z'$;
      \end{myitemize}
      \end{condition}
  Conditions $\Con$ imply that $\Z_s$ and $\Z_a$ satisfy the $\Q^{(3)}(\PartySet, \Z_s)$ 
   and  $\Q^{(4)}(\PartySet, \Z_a)$ conditions respectively.
   In our VSS and MPC protocols, 
   all computations are done over a finite algebraic structure $(\K, +, \cdot)$, which
   could be a ring or a field. We assume that each $P_i$ has an input $x_i \in \K$,
      and parties want to securely compute a function $f:\K^n \rightarrow \K$.
      Without loss of generality, $f$ is represented by an arithmetic circuit $\ckt$ over $\K$, 
      consisting of linear and non-linear (multiplication) gates, where
      $\ckt$ has $c_M$ multiplication gates and a multiplicative depth of $D_M$.
\paragraph{\bf  Termination Guarantees of Our Sub-Protocols:}
 As done in \cite{ACC22a},  for simplicity, we will {\it not} be specifying any {\it termination} criteria
 for our sub-protocols. The parties will keep on participating in these sub-protocol instances, {\it even} after receiving
 their outputs. The termination criteria of our MPC protocol will ensure the 
 termination of {\it all} underlying sub-protocol instances. We will be using an existing {\it randomized} ABA protocol \cite{Cho22}
  which ensures that the
 honest parties (eventually) obtain their respective output {\it almost-surely}. That is:
   \[ \underset{T \rightarrow \infty}{\mbox{lim}} \mbox{Pr}[\mbox{An honest } P_i \mbox{ obtains its output by local time } T] = 1,\]
 where the probability is over the random coins of the honest parties and adversary in the protocol. The property of
 almost-surely obtaining an output carries over to the “higher" level protocols, where ABA is used as a building block.
      
      We next discuss the 
   syntax and semantics of the secret-sharing, used in our VSS and MPC protocol.
    The secret-sharing is based on \cite{Mau02}, and is defined with respect to
   a given {\it sharing specification} $\SharingSpec$, which is a tuple of subsets of $\PartySet$. 
    \begin{definition}[\cite{Mau02}]
   Let $\SharingSpec = (S_1, \ldots, S_{|\SharingSpec|})$ be a sharing specification where,
   for $m = 1, \ldots, |\SharingSpec|$, each set $S_m \subseteq \PartySet$. Then 
   a value $s \in \K$ is said to be {\it secret-shared} with respect to $\SharingSpec$ if 
   there exist shares $s_1, \ldots, s_{|\SharingSpec|}$ such that $s = s_1 + \ldots + s_{|\SharingSpec|}$ and, 
   for $m = 1, \ldots, |\SharingSpec|$, the share $s_m$ is available with every (honest) party in $S_m$.  
  \end{definition}
  \noindent A secret-sharing of $s$ will be denoted by $[s]$, where $[s]_m$ denotes the $m^{th}$ share. Note that
  each $P_i$ holds multiple shares $\{[s]_m\}_{P_i \in S_m}$, corresponding to the sets from 
  $\SharingSpec$ to which it belongs. 
  The above secret-sharing is {\it linear} as 
  $[c_1s_1 + c_2 s_2] = c_1[s_1] + c_2[s_2]$ holds for any publicly-known $c_1, c_2 \in \K$.
  Hence, the parties can {\it non-interactively} compute any linear function over secret-shared inputs.
  
  For our protocols, we consider the specific sharing specification
    $\SharingSpec = (S_1, \ldots, \allowbreak S_q)$, where,
   for $m = 1, \ldots, q$, the set $S_m \defined \PartySet \setminus Z_m$, and where $\Z_s = \{Z_1, \ldots, Z_q \}$
    is the {\it synchronous} adversary structure. 
      
\subsection{Existing Primitives}
  \paragraph{\bf Asynchronous Reliable Broadcast (Acast):}
   An Acast protocol allows a {\it sender} 
  $\Sender \in \Partyset$ to send some message $m \in \{0, 1 \}^{\ell}$
   {\it identically} to all the parties.  The work of \cite{KF05} presents an Acast protocol against 
   $\Q^{(3)}$ adversary structures by generalizing the classic Bracha's Acast protocol against {\it threshold}
   adversaries. 
    While the protocol has been designed for an {\it asynchronous}
   network, it also provides certain guarantees in a {\it synchronous} network, as stated in Lemma \ref{lemma:Acast}.
     The Acast protocol $\PiACast$ and proof of Lemma \ref{lemma:Acast} are available in Appendix \ref{app:ExistingPrimitives}.
  \begin{lemma}
  \label{lemma:Acast}
Let $\Adv$ be an adversary characterized by an adversary structure $\AdvStructure$ satisfying
 the $\Q^{(3)}(\PartySet, \AdvStructure)$ condition. Then $\PiACast$ achieves the following.
 \begin{myitemize}
\item[--] {\it Asynchronous Network}: 
   {\bf (a) $\AdvStructure$-Liveness}: If $\Sender$ is {\it honest},  all honest parties eventually have an output.
    {\bf (b) $\AdvStructure$-Validity}: If $\Sender$ is {\it honest}, then each honest $P_i$ with an output, outputs $m$.
    {\bf (c) $\AdvStructure$-Consistency}: If $\Sender$ is {\it corrupt} and some honest $P_i$
     outputs $m^{\star}$, then all honest parties {\it eventually} output $m^{\star}$.
  \item[--] {\it Synchronous Network}:
           {\bf (a) $\AdvStructure$-Liveness}: If $\Sender$ is {\it honest}, then all honest parties obtain an output within time $3\Delta$.
    {\bf (b) $\AdvStructure$-Validity}: If $\Sender$ is {\it honest}, then every honest party with an output, outputs $m$.
    {\bf (c) $\AdvStructure$-Consistency}: If $\Sender$ is {\it corrupt} and some honest party outputs $m^{\star}$ at time $T$, then every honest $P_i$ outputs $m^{\star}$ by the
     end of time $T + 2 \Delta$.
  \item[--] {\it Communication Complexity}: $\Order(n^2 \ell)$ bits are communicated by the honest parties,
   where $\Sender$'s message is of size $\ell$ bits.
\end{myitemize}
\end{lemma}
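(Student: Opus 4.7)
The plan is to analyze $\PiACast$ assuming the standard Bracha-style skeleton adapted to the $\Q^{(3)}$ setting: $\Sender$ disseminates an $\init$-message; every party, upon seeing a well-formed $\init$, broadcasts $\echo$ for that value; a party emits $\ready$ either after collecting $\echo$-messages from a ``supporting'' set of the form $\PartySet \setminus Z$ with $Z \in \AdvStructure$, or after collecting $\ready$-messages from a set whose complement is not contained in any $Z \in \AdvStructure$ (the amplification rule); and finally outputs after collecting $\ready$-messages from a supporting set. The proof then has two halves --- one per network model --- but both rely on the same two combinatorial facts derived from $\Q^{(3)}(\PartySet, \AdvStructure)$: (i) the honest parties themselves form a set of the form $\PartySet \setminus Z^\star$ for some $Z^\star \in \AdvStructure$, and (ii) any two supporting sets intersect in an honest party, and any supporting set contains at least one honest party that triggers amplification at every other honest party.

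For the asynchronous network I would first dispatch $\AdvStructure$-liveness and $\AdvStructure$-validity with an honest $\Sender$. Liveness follows because every honest party eventually receives $\init$ and then $\echo$ from the honest set, so the ``supporting'' $\echo$-threshold is met at every honest party, $\ready$'s propagate for the same reason, and the output threshold is met. Validity follows because honest parties only $\echo$ the value $m$ sent by $\Sender$; a short tracing argument using fact (i) shows no honest party can emit $\ready$ for $m' \neq m$, since doing so would require either an honest $\echo$ for $m'$ (impossible) or amplification from a set necessarily containing an honest party that itself sent $\ready$ for $m'$, and minimal-time induction reduces to the first case.

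The main obstacle is $\AdvStructure$-consistency, which is where fact (ii) does the real work. Suppose honest parties $P_i$ and $P_j$ output $m^\star$ and $m'^\star$ respectively. Each output witnesses a supporting set of $\ready$-senders for its value. By $\Q^{(3)}$ applied to the corrupt $Z^\star$ together with the two discarded sets $Z, Z'$, the two supporting sets must share a common honest party, forcing some honest party to have sent $\ready$ for two distinct values --- impossible. To propagate output, I would show that if an honest $P_i$ outputs $m^\star$, its supporting $\ready$-set hits the amplification threshold at every honest party (again by fact (i)); each honest party therefore emits $\ready$ for $m^\star$, a full supporting $\ready$-set for $m^\star$ is then visible to every honest party, and each such party outputs $m^\star$.

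For the synchronous network the arguments are identical, and only timing bookkeeping remains. With an honest $\Sender$: $\init$ reaches every honest party by time $\Delta$; the resulting $\echo$'s arrive by $2\Delta$, triggering $\ready$; those $\ready$'s arrive by $3\Delta$, at which point every honest party outputs --- this yields the $3\Delta$ liveness bound. For synchronous consistency, if an honest party outputs $m^\star$ at time $T$, the $\ready$-messages it used were in transit by time $T$ and hence delivered to every honest party by $T + \Delta$; by the amplification argument above, every honest party emits $\ready$ for $m^\star$ by $T + \Delta$, and those $\ready$'s are delivered everywhere by $T + 2\Delta$, giving the stated bound. The communication-complexity claim is immediate: each party sends $\Order(\ell)$-bit messages to the other $n-1$ parties across a constant number of phases, for $\Order(n^2 \ell)$ bits in total.
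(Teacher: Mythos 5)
Your consistency argument for the asynchronous network has a gap. You intersect the two $\ready$-supporting sets $\PartySet \setminus Z$ and $\PartySet \setminus Z'$ together with the set of honest parties (valid by $\Q^{(3)}$), find a common honest party who sent $\ready$ for both $m^{\star}$ and $m^{\star\star}$, and declare this ``impossible.'' But $\PiACast$ as written places no at-most-once restriction on $\ready$ --- only step 2 (the $\echo$ step) carries an explicit ``execute at most once.'' Nothing in the protocol text stops an honest party from sending $(\Ready, m^{\star})$ via step 3 and later $(\Ready, m^{\star\star})$ via step 4; that this never happens is a derived invariant, not a protocol rule, and your proof does not establish it. You actually state the right tool in your validity paragraph (minimal-time induction: the earliest honest $\ready$ for a value must have been triggered by step 3, i.e., by $\echo$-support), but you do not re-deploy it here, where it is essential.

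The paper avoids this hole by not intersecting $\ready$-supports at all: it traces each output back to some honest party that received $\echo$-support $\PartySet \setminus Z_{\alpha}$ (respectively $\PartySet \setminus Z_{\beta}$) for its value, then intersects the two $\echo$-supports with the honest set. That intersection is non-empty by $\Q^{(3)}$, and its member must have echoed both values --- a genuine contradiction, because step 2 does enforce at-most-once on $\echo$. Your argument becomes correct once you insert that trace, or equivalently prove as a sub-lemma that an honest party never sends $\ready$ for two distinct values. The rest --- liveness, validity, the propagation-to-output half of consistency, the $3\Delta$ and $T + 2\Delta$ synchronous timing, and the $\Order(n^2 \ell)$ bound --- is sound and matches the paper's reasoning.
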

\paragraph{\bf Terminologies for Using $\PiACast$:}
 We will say that  {\it $P_i$ Acasts $m$} to mean that $P_i$ acts as a sender $\Sender$ and invokes an instance of $\PiACast$
  with input $m$, and the parties participate in this instance. 
   Similarly, we say that {\it $P_j$ receives $m$ from the Acast of $P_i$} to mean that $P_j$ outputs $m$
  in the corresponding instance of $\PiACast$.
 \paragraph{\bf Public Reconstruction of a Secret-Shared Value:}
   Let  $s \in \K$ be a value, which is secret-shared with respect to the sharing specification 
  $\SharingSpec = \{S_m : S_m = \PartySet \setminus Z_m \; \mbox{ and } \; Z_m \in \Z_s \}$. 
   Let the goal is to publicly reconstruct $s$. 
   Since $\Z_s$ satisfies the $\Q^{(3)}(\PartySet, \Z_s)$ condition, we can use
   the reconstruction protocol $\Rec(s, \SharingSpec)$ of \cite{Mau02} which, {\it irrespective} of the network type, allows the parties to robustly reconstruct $s$. 
    In a {\it synchronous} network, the protocol will take
   $\Delta$ time, while in an {\it asynchronous} network, the parties eventually output $s$.
   The protocol incurs a communication of $\Order(|\Z_s| \cdot n^2 \log{|\K|})$ bits; see 
     Appendix \ref{app:ExistingPrimitives} for the details.
    \paragraph{\bf Beaver's Circuit-Randomization Method \cite{Bea91}:}
    Let $u$ and $v$ be secret-shared among the parties. The goal is to compute a secret-sharing of $w = u \cdot v$.
  Moreover, let $(a, b, c)$ be a shared {\it multiplication-triple} available with the parties such that $c = a \cdot b$.
  Then, Beaver's method allows the parties to compute a secret-sharing of $w$ such that,
   if $a$ and $b$ are random for the adversary, then 
   the view of the adversary remains independent of $u$ and $v$. 
   In a {\it synchronous} network, the parties compute $[w]$ within time $\Delta$, while
   in an {\it asynchronous} network, the parties eventually compute $[w]$. 
   Protocol $\BatchBeaver(([u], [v]), ([a], [b], [c]))$
    incurs a communication of $\Order(|\Z_s| \cdot n^2 \log{|\K|})$ bits, and is 
    presented in 
     Appendix \ref{app:ExistingPrimitives}.

\section{Best-of-Both-Worlds Byzantine Agreement (BA)}
\label{sec:BA}
 We begin with the definition of BA, which is 
 adapted from \cite{BZL20,ACC22a}.
\begin{definition}[{\bf BA}]
\label{def:BA}
Let $\Pi$ be a protocol for $\Partyset$, where every party $P_i$ has an input $b_i \in \{0, 1\}$ and a possible output from  $\{0, 1, \bot \}$.
 Moreover, let $\Adv$ be an adversary, characterized by adversary structure $\AdvStructure$, where $\Adv$ can corrupt
  any subset of parties from $\AdvStructure$ during the execution of $\Pi$.
     \begin{myitemize}
    \item[--] {\bf $\AdvStructure$-Guaranteed Liveness}: $\Pi$ has $\AdvStructure$-guaranteed liveness
    if all honest parties obtain an output.
       \item[--] {\bf $\AdvStructure$-Almost-Surely Liveness}: $\Pi$ has $\AdvStructure$-almost-surely liveness
       if, almost-surely, all honest parties obtain some output.
      \item[--] {\bf $\AdvStructure$-Validity}: $\Pi$ has $\AdvStructure$-validity if the following holds:
      If all honest parties have input $b$, then every honest party with an output, outputs $b$. 
      \item[--] {\bf $\AdvStructure$-Weak Validity}: $\Pi$ has $\AdvStructure$-weak validity if the following holds:
      If all honest parties have input $b$, then every honest party with an output, outputs $b$ or $\bot$.      
      \item[--] {\bf $\AdvStructure$-Consistency}: $\Pi$ has $\AdvStructure$-consistency
      if all honest parties with an output, output the same value (which can be $\bot$).
      \item[--] {\bf $\AdvStructure$-Weak Consistency}: $\Pi$ has $\AdvStructure$-weak consistency
      if all honest parties with an output, output either a common $v \in \{0, 1 \}$ or $\bot$.
      \end{myitemize}
  $\Pi$ is called a {\it $\AdvStructure$-perfectly-secure synchronous BA} (SBA) protocol if, in a {\it synchronous} network, it has
    {\it $\AdvStructure$-guaranteed liveness}, {\it $\AdvStructure$-validity}, and {\it $\AdvStructure$-consistency}.
   $\Pi$ is called a {\it $\AdvStructure$-perfectly-secure 
  asynchronous BA} (ABA) protocol if, in an {\it asynchronous network} it has 
  {\it $\AdvStructure$-almost-surely liveness},  {\it $\AdvStructure$-validity} and {\it $\AdvStructure$-consistency}.
 \end{definition}
 To design our BoBW BA protocol, we will be using an
  {\it existing} perfectly-secure SBA and a perfectly-secure ABA protocol, whose properties
  we review next.
  \paragraph{\bf Existing SBA and ABA Protocols:}
  We assume the existence of a $\AdvStructure$-perfectly-secure SBA protocol $\SBA$ with 
   $\Q^{(3)}(\PartySet, \AdvStructure)$ condition, which {\it also} provides 
   $\AdvStructure$-guaranteed liveness in an {\it asynchronous} network.\footnote{We {\it do not} require any other
   property from $\SBA$ in an {\it asynchronous} network.}. For the sake of efficiency, we design a candidate for
   $\SBA$ by generalizing the simple SBA protocol of \cite{BGP89}, which was designed to tolerate
   $t < n/3$ corruptions. The protocol requires at most $3n$ rounds in a {\it synchronous} network and hence, within
   time $\TimeSBA \defined 3n \cdot \Delta$, all honest parties will get an output in a {\it synchronous}
   network. The protocol incurs a communication of
    $\Order(n^3 \ell)$ bits if the inputs of the parties are of size $\ell$ bits.
   To achieve $\AdvStructure$-guaranteed liveness in an {\it asynchronous} network,  
    the parties can run $\SBA$ till time $\TimeSBA$, and 
    then output $\bot$ if no ``valid" output is computed as per the protocol at time 
    $\TimeSBA$. 
    This guarantees that even in an {\it asynchronous} network, all {\it honest}
    parties obtain {\it some} output at local time $\TimeSBA$. 
    Our $\SBA$ protocol and the proof of its properties are available in Appendix \ref{app:SBA}.
       
   For the {\it asynchronous} network, \cite{Cho22} presents a $\AdvStructure$-perfectly-secure ABA 
   protocol $\ABA$, provided
   $\AdvStructure$ satisfies the $\Q^{(3)}(\PartySet, \AdvStructure)$ condition. Protocol $\ABA$ has the following properties
   in the synchronous and asynchronous network.
    \begin{lemma}[\cite{Cho22}]
 \label{lemma:ABAGuarantees}
 Let $\Adv$ be an adversary characterized by an adversary structure 
   $\AdvStructure$, satisfying the $\Q^{(3)}(\PartySet, \AdvStructure)$ condition. Then, there exists 
  a BA protocol $\ABA$ tolerating $\Adv$ such that:
 \begin{myitemize}
 \item[--] {\bf Asynchronous Network}: The protocol is a {\it $\AdvStructure$-perfectly-secure} ABA protocol with the following liveness
  guarantees.
     \begin{myitemize}
      \item[-- ] If the inputs of all {\it honest} parties are the same, then $\ABA$ achieves $\AdvStructure$-guaranteed liveness.
      Else, $\ABA$ achieves  $\AdvStructure$-almost-surely liveness.
      \end{myitemize}
 \item[--] {\bf Synchronous Network}: The protocol achieves $\AdvStructure$-validity, $\AdvStructure$-consistency, and 
    the following liveness guarantees.
    \begin{myitemize}
    \item[--] If the inputs of all {\it honest} parties are the same, then $\ABA$ achieves $\AdvStructure$-guaranteed liveness, and 
    all honest parties obtain their output within time $\TimeABA = k \cdot \Delta$, for some constant $k$. 
     \item[--] Else, $\ABA$ achieves $\AdvStructure$-almost-surely liveness
    and requires $\Order(\mbox{poly}(n) \cdot \Delta)$ expected time to generate the output.
    \end{myitemize}
 \item[--] {\bf Communication Complexity}: 
  $\Order(|\AdvStructure| \cdot n^5 \log|\F| + n^6 \log n)$ bits are communicated by the honest parties, if their inputs
  are the same. Else, $\ABA$ incurs an expected communication of 
   $\Order(|\AdvStructure| \cdot n^7 \log|\F| + n^8 \log n)$ bits. Here $\F$ is a finite field such that
   $|\F| > n$ holds.
 \end{myitemize}
 \end{lemma}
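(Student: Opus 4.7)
The plan is to leverage the existing construction from \cite{Cho22} for the asynchronous properties and then carefully re-analyse its behaviour when the same protocol is executed in a synchronous network, treating synchrony as a special case of asynchrony where every sent message is delivered within time $\Delta$. Concretely, I would first recall that the Cho22 ABA proceeds in a sequence of iterations, where in each iteration the parties exchange ``vote'' and ``auxiliary vote'' messages determined by $\Z$-based quorum conditions (derived from the $\Q^{(3)}(\PartySet,\AdvStructure)$ assumption), and then invoke a common-coin sub-protocol to break symmetry if the votes are not already concentrated on a single value. The crucial feature of this protocol is that it is \emph{timing oblivious}: no step of the protocol uses a local clock, and progress in each iteration is driven by the arrival of sufficiently many messages from parties in a ``$\AdvStructure$-complement'' quorum, which is well defined purely from the $\Q^{(3)}$ condition.

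For the asynchronous guarantees, I would simply invoke the analysis of \cite{Cho22}: validity and consistency follow from the quorum-based voting argument (any two honest parties that decide in the same iteration decide on the same bit, and this bit propagates to all subsequent iterations), while the almost-sure liveness follows because, whenever the honest parties enter an iteration without already being locked on a common value, the common coin produces a good value with constant probability, so the expected number of iterations before termination is $\Order(\mathrm{poly}(n))$. If all honest inputs agree on $b$, then in the very first iteration every honest party sees enough votes for $b$ to lock in on $b$ without ever invoking the coin, giving guaranteed liveness.

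For the synchronous guarantees, the key observation is that a synchronous execution is a special asynchronous execution, so validity and consistency transfer verbatim from the asynchronous analysis. What requires additional work is the timing. Each iteration of the protocol consists of a constant number of communication ``rounds'' (vote, auxiliary vote, coin invocation, decision), each of which completes within $\Delta$ time in a synchronous network because every honest party's message is guaranteed to arrive within $\Delta$ and the quorum conditions will therefore be satisfied by the set of all honest parties (which is a valid quorum by $\Q^{(3)}(\PartySet,\AdvStructure)$). When all honest inputs agree, the very first iteration terminates without a coin call, so every honest party outputs within $\TimeABA = k\Delta$ for a constant $k$ equal to the number of communication rounds in one iteration. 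When inputs differ, each iteration still costs $\Order(\Delta)$ time synchronously, and combined with the $\Order(\mathrm{poly}(n))$ expected iteration count this gives the claimed $\Order(\mathrm{poly}(n)\cdot\Delta)$ expected running time.

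The communication complexity is simply read off from the per-iteration cost of $\ABA$: each iteration has parties sending $\Order(n)$ short votes and invoking a common coin whose perfectly-secure instantiation (based on a batch of VSSs over $\F$ with $|\F|>n$) costs $\Order(|\AdvStructure|\cdot n^4\log|\F|+n^5\log n)$ bits; multiplying by one iteration (inputs agreeing case) or by the expected $\Order(n)$ iterations needed to hit a good coin (disagreeing case) yields the stated bounds. The main obstacle I expect is the synchronous timing bound in the agreeing-inputs case: one must argue that, even though the protocol was engineered to be timing oblivious, in a synchronous network no honest party can be ``stuck'' waiting because the honest parties themselves already form a quorum satisfying the progress condition, and hence every waiting step resolves within $\Delta$. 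Making this rigorous requires a careful inductive bookkeeping over the sequence of wait-for-quorum steps inside one iteration, showing that the $k$ in $\TimeABA = k\Delta$ is indeed an absolute constant independent of $n$ and of $\AdvStructure$.
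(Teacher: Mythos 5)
The paper does not prove this lemma at all: Lemma~\ref{lemma:ABAGuarantees} is stated with the explicit tag ``[\cite{Cho22}]'' and is imported verbatim as a property of an existing ABA protocol, with no argument supplied anywhere in the body or the appendices. There is therefore no ``paper's own proof'' to compare your sketch against; the authors simply assume these guarantees of $\ABA$ as a black box and build $\BA$ and $\VSS$ on top of them.

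That said, a few remarks on your reconstruction. Your high-level outline (quorum-based voting under $\Q^{(3)}(\PartySet,\AdvStructure)$, common coin for symmetry breaking, timing-obliviousness so that validity and consistency carry over to synchronous executions, and a per-iteration $\Order(\Delta)$ timing bound when all messages arrive within $\Delta$) is the standard way one would argue such a best-of-both-worlds claim for a Bracha/Mostefaoui-style ABA generalized to non-threshold adversaries, and it is consistent with the way the present paper \emph{uses} the lemma (e.g.\ in Theorem~\ref{thm:BA} it relies precisely on $\AdvStructure$-validity and $\AdvStructure$-guaranteed liveness in the synchronous, agreeing-inputs case, and on almost-sure liveness otherwise). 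Two cautions are worth flagging, though. First, the claim that synchronous validity and consistency ``transfer verbatim'' from the asynchronous analysis is only sound if the protocol is genuinely timing-oblivious; you assert this about \cite{Cho22} but cannot verify it here, and the whole argument hinges on it. Second, the specific constant $k$ in $\TimeABA = k\cdot\Delta$ and the two communication bounds are protocol-specific numbers that must be read off from \cite{Cho22} itself; your sketch gives a plausible accounting (constant rounds per iteration, coin cost dominated by a batch of VSS instances over $\F$ with $|\F|>n$), but it is not something you can derive from the $\Q^{(3)}$ condition alone. In short, your proposal is a reasonable outline of the external proof, but it is a reconstruction of someone else's result rather than an argument the present paper makes or that can be checked against it.
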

 We note that $\ABA$ {\it cannot} be considered as a best-of-both-worlds BA protocol.
  This is because
  it achieves {\it $\AdvStructure$-guaranteed liveness} in a {\it synchronous} network {\it only} when {\it all} honest parties have the {\it same} input. 
  If the honest parties start  $\ABA$ with {\it different} inputs, then {\it instead} of guaranteed liveness,
  the parties may keep on running the protocol forever, {\it without} obtaining any output (though the probability
  of this happening is asymptotically $0$). 
   We next design a BA protocol which gets rid of this problem,
   and which is secure in {\it any} network. 
   To design the protocol, we need a special type of {\it broadcast} protocol, which we design first.
 \subsection{Synchronous Broadcast with Asynchronous Guarantees}
  We begin with the definition of broadcast, 
   adapted from \cite{BZL20,ACC22a}.
\begin{definition}[{\bf Broadcast}]
\label{def:BCAST}
Let $\Pi$ be a protocol, where a sender $\Sender \in \Partyset$ has input $m \in \{0, 1\}^{\ell}$, and parties obtain a possible output,
  including $\bot$.
 Moreover, let $\Adv$ be an adversary, characterized by adversary structure $\AdvStructure$, where $\Adv$ can corrupt
  any subset of parties from $\AdvStructure$ during the execution of $\Pi$.
   \begin{myitemize}
   \item[--] {\bf $\AdvStructure$-Liveness}: $\Pi$ has $\AdvStructure$-liveness if 
    all honest parties obtain some output.
    \item[--] {\bf $\AdvStructure$-Validity}: $\Pi$ has $\AdvStructure$-validity if the following holds:
    if $\Sender$ is {\it honest}, then every honest party with an output, outputs $m$.
     \item[--] {\bf $\AdvStructure$-Weak Validity}: $\Pi$ has $\AdvStructure$-weak validity if the following holds: 
      if $\Sender$ is {\it honest}, then every honest party with an output, outputs either $m$ or $\bot$.
     \item[--] {\bf $\AdvStructure$-Consistency}: $\Pi$ has $\AdvStructure$-consistency if the following holds:
     if $\Sender$ is {\it corrupt}, then every honest party with an output, outputs a common value.
     \item[--] {\bf $\AdvStructure$-Weak Consistency}: $\Pi$ has $\AdvStructure$-weak consistency if the
     following holds: if $\Sender$ is {\it corrupt}, then every honest party with an output, outputs a common
      $m^{\star} \in \{0, 1 \}^{\ell}$ or $\bot$.
  \end{myitemize}
$\Pi$ is called a {\it $\AdvStructure$-perfectly-secure broadcast} protocol if it has {\it $\AdvStructure$-Liveness}, 
{\it $\AdvStructure$-Validity}, and {\it $\AdvStructure$-Consistency}.
 \end{definition}
 We next design a special broadcast protocol $\BCAST$, which is a $\AdvStructure$-perfectly-secure broadcast protocol
  in a {\it synchronous} network. Additionally, in an {\it asynchronous} network, the protocol has
  $\AdvStructure$-Liveness, $\AdvStructure$-Weak Validity and $\AdvStructure$-Weak Consistency. 
  Looking ahead, we will combine the protocols $\BCAST$ and $\ABA$ to get our best-of-both-worlds
  BA protocol.
    We note that
  the existing Acast protocol $\PiACast$ {\it does not} guarantee the same properties as $\BCAST$ since,
  for a {\it corrupt} $\Sender$, there is {\it no} liveness guarantee (irrespective of the network type).
  Moreover, in a {\it synchronous} network, 
     {\it all} honest parties {\it may not} obtain an output within the {\it same} time, if $\Sender$ is {\it corrupt}
    (see Lemma \ref{lemma:Acast}).\footnote{Looking ahead, this property from
    $\BCAST$ will be crucial when we design our best-of-both-worlds
    BA protocol.}  
    
    To design $\BCAST$ (Fig \ref{fig:BCAST}), we generalize an idea used in \cite{ACC22a} against {\it threshold} adversaries.
     The idea is to carefully ``stitch" together protocol $\PiACast$ with
    the  protocol $\SBA$. In the protocol, $\Sender$ first Acasts its message. 
    If the network is {\it synchronous}, then at time $3\Delta$, all honest parties should have an output.
    To confirm this, the parties start participating in an instance of $\SBA$, where the input of each party is the output that party has
    obtained from $\Sender$'s Acast instance 
     at time $3\Delta$. It is possible that a party has no output at time $3\Delta$ (implying that either the network is {\it asynchronous}
     or $\Sender$ is {\it corrupt}), in which case the input of the party for $\SBA$ will be $\bot$.
     Finally,  at time $3\Delta + \TimeSBA$, the parties output $m^{\star}$,
     if it has been received from the
     Acast of $\Sender$ {\it and} is the output of $\SBA$ as well, else, they output $\bot$.
      It is easy to see that the protocol has {\it guarantees liveness} in {\it any} network
          since all parties will have some output at (local) time $3\Delta + \TimeSBA$. 
          Moreover, in a {\it synchronous} network, if some {\it honest} party has an output $m^{\star} \neq \bot$ at time
          $3\Delta + \TimeSBA$, then {\it all} the honest parties will also have the output $m^{\star}$ at time 
        $3\Delta + \TimeSBA$. This is
         because if any {\it honest} party obtains an output $m^{\star}$,
    then {\it at least} one {\it honest} party must have received $m^{\star}$ from $\Sender$'s Acast by time $3\Delta$
    and so by time $3\Delta + \TimeSBA$, {\it all} honest parties will receive $m^{\star}$ from $\Sender$'s Acast. 
    \paragraph {\bf Eventual Consistency and Validity for $\BCAST$ in Asynchronous Network:}
 In $\BCAST$, the parties set a ``time-out" of $3\Delta + \TimeSBA$ to guarantee liveness. However, in this process,
  the protocol {\it only} guarantees {\it weak validity} and {\it weak consistency}  in an {\it asynchronous} network. This is because
   some {\it honest} parties may receive $\Sender$'s message from the Acast of $\Sender$ within time $3\Delta + \TimeSBA$, 
   while others may fail to do so. 
  The time-out is essential, as we need {\it liveness} from
   $\BCAST$ (irrespective of the network type) when 
   used later in our best-of-both-worlds BA protocol. 
   
    Looking ahead, we will use $\BCAST$ in our VSS protocol for broadcasting values.
     However, the  {\it weak validity} and {\it weak consistency} properties may lead to a situation
       where, in an {\it asynchronous} network,
              one subset of {\it honest} parties may output a value different from $\bot$ at the end of the time-out, while 
      others may output $\bot$. For the security of the VSS protocol, we would require the latter category of parties 
      to {\it eventually} output the common non-$\bot$ value if the parties {\it continue} participating in $\BCAST$.
      To achieve this goal, we make a provision in $\BCAST$. Namely,  
      each
    $P_i$ who outputs $\bot$ at time $3\Delta + \TimeSBA$ ``switches" its output to $m^{\star}$, if
     $P_i$ {\it eventually} receives $m^{\star}$ from $\Sender$'s Acast. 
    We stress that this switching is {\it only} for the parties who obtained $\bot$
    at time $3\Delta + \TimeSBA$. 
    To differentiate between the two ways of obtaining output, we use the terms {\it regular-mode}
    and {\it fallback-mode}. Regular-mode consists of the process of deciding the output at time $3\Delta + \TimeSBA$, while
    fallback-mode is the process of deciding the output beyond time $3\Delta + \TimeSBA$.

       \begin{protocolsplitbox}{$\BCAST$}{Synchronous broadcast with asynchronous guarantees.}{fig:BCAST}
\centerline{\underline{(Regular Mode)}}
   \begin{myitemize}
   \item[--] On having the input $m \in \{0, 1 \}^{\ell}$, sender $\Sender$ Acasts $m$.
   \item[--] At time $3\Delta$, each $P_i \in \Partyset$ participates in an instance of $\SBA$, where the input of $P_i$ is $m^{\star}$ if
    $m^{\star} \in \{0, 1\}^{\ell}$ is received from the Acast of 
    $\Sender$, else the input is $\bot$.  
    \item[--] {\bf (Local Computation)}: At time $3\Delta + \TimeSBA$, each $P_i \in \Partyset$ does the following.
       \begin{myitemize}
        \item[--] If some $m^{\star} \in \{0, 1 \}^{\ell}$ is received from the Acast of $\Sender$ {\it and} $m^{\star}$ is 
        computed as the output during the instance of $\SBA$, then output $m^{\star}$. 
         Else output $\bot$.
      \end{myitemize}
\end{myitemize}
\centerline{\underline{{\color{blue}(Fallback Mode)}}}
 \begin{myitemize}
 \item[--] {\color{blue} Every $P_i \in \Partyset$ who has computed the output $\bot$ at time $3\Delta + \TimeSBA$,
 changes it to $m^{\star}$, if $m^{\star}$ is received by $P_i$ from the Acast of $\Sender$.
 }
\end{myitemize}
\end{protocolsplitbox}

The properties of $\BCAST$, stated in Theorem \ref{thm:BCAST}, are proved in 
  Appendix \ref{app:HBA}. 
\begin{theorem}
\label{thm:BCAST}
Let $\Adv$ be an adversary, characterized by $\AdvStructure$, satisfying $\Q^{(3)}(\PartySet, \AdvStructure)$
 condition. Let $\Sender$ has input
   $m \in \{0, 1 \}^{\ell}$ for $\BCAST$. Then
  $\BCAST$ achieves the following,
   with a communication complexity of $\Order(n^3 \ell)$ bits,
    where $\TimeBCAST =  3\Delta + \TimeSBA$.   
 \begin{myitemize}
   \item[--] {\it Synchronous} network: 
      {\bf (a) $\AdvStructure$-Liveness}: At time $\TimeBCAST$, each honest party has an output. 
    {\bf (b) $\AdvStructure$-Validity}: If $\Sender$ is {\it honest}, then at time $\TimeBCAST$, each honest party
    outputs $m$.
      {\bf (c) $\AdvStructure$-Consistency}: If $\Sender$ is {\it corrupt}, then the output of every honest party is the same at 
      time $\TimeBCAST$.     
     {\bf (d) $\AdvStructure$-Fallback Consistency}: If $\Sender$ is {\it corrupt}, and some honest 
     party outputs $m^{\star} \neq \bot$ at time
    $T$ through fallback-mode, then every honest party outputs $m^{\star}$ by 
    time $T + 2\Delta$.    
\item[--] {\it Asynchronous Network}:
   {\bf (a) $\AdvStructure$-Liveness}: At time $\TimeBCAST$, each honest party has an output.
    {\bf (b) $\AdvStructure$-Weak Validity}: If $\Sender$ is {\it honest}, then at  time $\TimeBCAST$, 
    each honest party
    outputs $m$ or $\bot$.
    {\bf (c) $\AdvStructure$-Fallback Validity}: If $\Sender$ is {\it honest}, then each honest party
     with output $\bot$ at time $\TimeBCAST$, eventually outputs 
    $m$ through fallback-mode.
    {\bf (d) $\AdvStructure$-Weak Consistency}: If $\Sender$ is {\it corrupt}, then at time $\TimeBCAST$, 
    each honest party
    outputs a common
     $m^{\star} \neq \bot$ or $\bot$.
    {\bf (e) $\AdvStructure$-Fallback Consistency}: If $\Sender$ is {\it corrupt}, and some honest party 
    outputs $m^{\star} \neq \bot$ at  time
    $T$ where $T \geq \TimeBCAST$, then 
    each honest party eventually outputs $m^{\star}$.
   \end{myitemize}
\end{theorem}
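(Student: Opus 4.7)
The plan is to reduce each claim to a careful combination of the Acast guarantees of Lemma~\ref{lemma:Acast} and the properties of the underlying $\SBA$ protocol, keeping close track of the time-out $\TimeBCAST = 3\Delta + \TimeSBA$. Liveness in both networks is immediate: every honest party enters $\SBA$ at time $3\Delta$ and obtains some output (possibly $\bot$) from $\SBA$ by local time $\TimeSBA$ later; the final local computation at time $\TimeBCAST$ is deterministic.

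For the synchronous network with an \emph{honest} $\Sender$, Acast $\AdvStructure$-Liveness and $\AdvStructure$-Validity ensure every honest party has $m$ from $\Sender$'s Acast by time $3\Delta$, so every honest party inputs $m$ to $\SBA$; $\SBA$-Validity then forces the $\SBA$-output to be $m$ and hence the regular-mode output of $\BCAST$ is $m$. For a \emph{corrupt} $\Sender$ in the synchronous network, suppose some honest $P_i$ outputs $m^{\star}\neq\bot$ at time $\TimeBCAST$; then $P_i$ has received $m^{\star}$ from Acast and has $\SBA$-output $m^{\star}$, and by $\SBA$-consistency every honest party's $\SBA$-output is also $m^{\star}$. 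The delicate step is showing every honest party has $m^{\star}$ from Acast by $\TimeBCAST$: since an $\SBA$-output of $m^{\star}\neq\bot$ can arise only if at least one honest party fed $m^{\star}$ as its $\SBA$-input at time $3\Delta$, that party had $m^{\star}$ from Acast by time $3\Delta$, and Acast sync-consistency then propagates $m^{\star}$ to all honest parties by time $5\Delta \le \TimeBCAST$. For fallback consistency, if an honest $P_i$ switches to $m^{\star}$ at time $T$, then its regular-mode output was $\bot$ and by the consistency just shown every honest party's regular-mode output was $\bot$; Acast sync-consistency delivers $m^{\star}$ to every honest party by time $T+2\Delta$, making each one switch.

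For the asynchronous network, the central observation is that any honest party outputting $m^{\star}\neq\bot$ (in either mode) has received $m^{\star}$ from $\Sender$'s Acast, and asynchronous Acast consistency forbids two distinct values from ever being delivered to honest parties. This immediately yields $\AdvStructure$-Weak Validity (honest $\Sender$: Acast only ever delivers $m$, so the non-$\bot$ regular-mode output must be $m$) and $\AdvStructure$-Weak Consistency (the common non-$\bot$ output is unique). For $\AdvStructure$-Fallback Validity and $\AdvStructure$-Fallback Consistency, every honest party whose regular-mode output was $\bot$ will by Acast liveness/consistency eventually receive the unique non-$\bot$ Acast value and switch to it in fallback-mode. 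The communication bound $\Order(n^3\ell)$ follows by adding the $\Order(n^2\ell)$ bits of the single $\PiACast$ invocation to the $\Order(n^3\ell)$ bits of the single $\SBA$ invocation.

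The step I expect to be the main obstacle is the synchronous consistency argument for a corrupt $\Sender$: one must rule out the situation where some honest party's regular-mode output is $m^{\star}\neq\bot$ while another honest party's regular-mode output is $\bot$ because the latter has not yet received $m^{\star}$ from Acast by $\TimeBCAST$. This is resolved by exploiting the intrusion-tolerance flavour of $\SBA$ (a non-$\bot$ $\SBA$-output forces some honest party to have had that value as $\SBA$-input, hence from Acast by time $3\Delta$), so that the $2\Delta$ Acast sync-consistency window lies entirely inside the slack $\TimeSBA \ge 2\Delta$ of $\TimeBCAST$.
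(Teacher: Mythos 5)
Your proposal is correct and follows essentially the same route as the paper's own proof: you reduce each property to a combination of the Acast guarantees and the $\SBA$ guarantees, and the key step — that a non-$\bot$ $\SBA$-output forces at least one honest party to have held $m^{\star}$ from Acast by time $3\Delta$, so the $2\Delta$ Acast-consistency window closes before $\TimeBCAST$ — is exactly the observation the paper uses (noting $5\Delta < 3\Delta + \TimeSBA$).
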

In the rest of the paper, we use the following terminologies while using $\BCAST$.
\paragraph {\bf Terminologies for $\BCAST$:}  
 We say that  {\it $P_i$ broadcasts $m$} to mean that $P_i$ invokes an instance of 
 $\BCAST$ as  $\Sender$
  with input $m$, and the parties participate in this instance. Similarly, we say that {\it $P_j$ receives $m$ from the broadcast
   of $P_i$ through regular-mode (resp.~fallback-mode)},
   to mean that $P_j$ has the output $m$
  at time $\TimeBCAST$ (resp.~after time $\TimeBCAST$)
  during the instance of $\BCAST$. 
 \subsection{Protocols $\BCAST + \ABA \Rightarrow$ Best-of-Both-Worlds BA} 
 We now combine protocols $\BCAST$ and $\ABA$, by generalizing the
  idea used in \cite{ACC22a} against {\it threshold} adversaries.
  In the protocol, every party first
  broadcasts its input bit (for the BA protocol) through an instance of $\BCAST$. If the network is {\it synchronous}, then 
  all honest parties should have received the inputs of all the (honest) sender parties from their broadcasts through
  regular-mode by time $\TimeBCAST$.
  Consequently, at time $\TimeBCAST$, the parties decide an output for {\it all} the $n$ instances of $\BCAST$.
  Based on these outputs, the parties decide their respective inputs for the $\ABA$ protocol.
  Specifically, if ``sufficiently many" outputs from the $\BCAST$ instances are found to be {\it same}, then the 
  parties consider this output value as their input for the $\ABA$ instance. Else, they stick to their original inputs. The overall output for
  $\BA$ is then set to be the output from $\ABA$.
\begin{protocolsplitbox}{$\BA$}{The best-of-both-worlds BA.
  The above code is executed by every $P_i \in \PartySet$.}{fig:BA}
\justify
\begin{myitemize}
\item[--] 
 On having input $b_i \in \{0, 1 \}$, broadcast $b_i$.
\item[--] For $j = 1, \ldots, n$, let $b_i^{(j)} \in \{0, 1, \bot \}$ be received from the broadcast of $P_j$ through {\bf regular-mode}. 
  Include $P_j$ to a set $\R$ if $b_i^{(j)} \neq \bot$. 
  Compute the input $v_i^{\star}$ for an instance of $\ABA$ as follows.
    \begin{myitemize}
    \item[--] If 
    $\PartySet \setminus \R \in \AdvStructure$, then compute $v_i^{\star}$ as follows.
       \begin{myitemize}
       \item[--] If there exists a subset of parties $\R_i \subseteq \R$, such that $\R \setminus \R_i  \in \AdvStructure$
       and $b_i^{(j)} = b$ for all the parties $P_j \in \R_i$, then set
        $v_i^{\star} = b$.\footnote{If there are multiple such $\R_i$, then break the tie using some 
        pre-determined rule.}       
       \item[--] Else set $v_i^{\star} = 1$.
       \end{myitemize}
    \item[--] Else set $v_i^{\star} = b_i$.
    \end{myitemize}
\item[--] {\color{red} At time $\TimeBCAST$},
 participate in an instance of $\ABA$ with input $v_i^{\star}$. Output the result of $\ABA$.
\end{myitemize}
\end{protocolsplitbox}

The properties of $\BA$, stated in Theorem \ref{thm:BA}, are proved in  Appendix \ref{app:HBA}. 
\begin{theorem}
\label{thm:BA}
Let $\Adv$ be an adversary characterized by an adversary structure $\AdvStructure$ satisfying the $\Q^{(3)}(\PartySet, \AdvStructure)$
 condition. Moreover, let $\ABA$ be an ABA protocol, 
   satisfying the conditions as stated in Lemma \ref{lemma:ABAGuarantees}.
   Then, $\BA$ achieves the following.
\begin{myitemize}
\item[--] {\bf Synchronous Network}: The protocol is a {\it $\AdvStructure$-perfectly-secure} SBA 
 protocol, where all honest parties obtain an output within time $\TimeBA = \TimeBCAST + \TimeABA$.
  The protocol incurs a communication of $\Order(|\AdvStructure| \cdot n^5 \log|\F| + n^6 \log n)$ bits.
\item[--] {\bf Asynchronous Network}: The protocol is a {\it $\AdvStructure$-perfectly-secure} ABA protocol, 
with an expected communication of  $\Order(|\AdvStructure| \cdot n^7 \log|\F| + n^8 \log n)$ bits.
\end{myitemize}
\end{theorem}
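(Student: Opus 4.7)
The plan is to prove the synchronous and asynchronous guarantees separately, exploiting the two-stage structure of $\BA$: broadcasting inputs via $\BCAST$ followed by running $\ABA$ on the derived inputs $v_i^{\star}$. In both network types the guarantees from Theorem \ref{thm:BCAST} and Lemma \ref{lemma:ABAGuarantees} will be invoked as black boxes, combined with the $\Q^{(3)}(\PartySet,\AdvStructure)$ condition.

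For the synchronous case, I would first observe that at time $\TimeBCAST$ every honest party has an output for each of the $n$ instances of $\BCAST$ (by $\AdvStructure$-Liveness), and these outputs coincide across all honest parties (by $\AdvStructure$-Validity for honest senders and $\AdvStructure$-Consistency for corrupt senders). Consequently, the set $\R$, the values $b_i^{(j)}$, and the tie-breaking rule yield the \emph{same} $v_i^{\star}$ at every honest $P_i$. Moreover, every honest sender's broadcast produces a non-$\bot$ output, so $\PartySet \setminus \R$ contains only corrupt parties and lies in $\AdvStructure$; hence the first branch in the computation of $v_i^{\star}$ is always taken. Since honest inputs to $\ABA$ agree, Lemma \ref{lemma:ABAGuarantees} delivers a common output in time $\TimeABA$, so $\BA$ terminates by $\TimeBCAST + \TimeABA$ with both consistency and guaranteed liveness.

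The heart of the argument is validity: assuming every honest party has input $b$, I need $v_i^{\star} = b$ at every honest $P_i$. The candidate $\R_i$ consisting of all honest parties in $\R$ is feasible, because honest parties in $\R$ all carry value $b$ while $\R \setminus \R_i$ consists only of corrupt parties and thus lies in $\AdvStructure$. To preclude the tie-breaking rule from selecting some alternative $\R_i'$ with associated value $c \neq b$, I would apply the $\Q^{(3)}(\PartySet,\AdvStructure)$ condition to the three sets $\PartySet \setminus \R$, $\R \setminus \R_i'$, and the actual corrupt set $Z^{\star}$, each of which lies in $\AdvStructure$. Their union cannot cover $\PartySet$, so some honest $P_k$ must lie in $\R_i'$; as $P_k$ broadcast $b$, weak validity forces $b_i^{(k)} \in \{b, \bot\}$, and membership in $\R_i'$ rules out $\bot$, so $c = b$. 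The $\AdvStructure$-validity of $\ABA$ then propagates $v_i^{\star} = b$ to the final output.

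For the asynchronous case, honest parties may disagree on $\R$ and on $v_i^{\star}$, but the $\Q^{(3)}$-based argument just sketched applies verbatim (using $\AdvStructure$-Weak Validity and $\AdvStructure$-Weak Consistency of $\BCAST$ in place of their strong analogues) to show that when all honest inputs equal $b$ then $v_i^{\star} = b$ at every honest $P_i$ in the first branch, while in the second branch $v_i^{\star} = b_i = b$ holds trivially. Validity and consistency of $\BA$ thus follow from the corresponding properties of $\ABA$, and almost-sure liveness is inherited from $\ABA$ once every honest party exits the $\BCAST$ phase at time $\TimeBCAST$. The communication complexity follows by summing the $\Order(n^3)$ cost of each of the $n$ parallel $\BCAST$ instances with the bound from Lemma \ref{lemma:ABAGuarantees}, the latter dominating the former. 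I expect the validity analysis to be the principal obstacle, since one must rule out every adversarial tie-break on $\R_i'$; the $\Q^{(3)}$ invocation on three carefully chosen adversary sets is the key idea that uniformly handles both network types.
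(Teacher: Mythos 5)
Your proposal is correct and follows essentially the same two-stage approach as the paper: use the liveness/validity/consistency (resp.\ their weak variants) of $\BCAST$ to reason about the derived inputs $v_i^{\star}$, then inherit the final guarantees from $\ABA$. Your validity argument --- applying $\Q^{(3)}(\PartySet,\AdvStructure)$ to $\PartySet\setminus\R$, $\R\setminus\R_i'$, and the corrupt set to force an honest party into any candidate $\R_i'$ --- is a slightly more explicit packaging of the paper's observation that $\Honest\cap\R\notin\AdvStructure$ rules out a bad tie-break, but the underlying step and all other parts of the reasoning coincide with the paper's proof.
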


\section{Best-of-Both-Worlds VSS Protocol}
\label{sec:VSS}
 We present our best-of-both-worlds VSS protocol 
   $\VSS$ (Fig \ref{fig:VSS}), assuming that the conditions $\Con$
    (see Condition \ref{condition:Con} in Section \ref{sec:prelims}) hold.
     In the protocol, there exists a 
    {\it dealer} $\D \in \PartySet$ with a private input $s \in \K$. The goal is to ``verifiably" generate
   a secret-sharing of $s$ with respect to the sharing specification
   $\SharingSpec = \{S_m : S_m = \PartySet \setminus Z_m \; \mbox{ and } \; Z_m \in \Z_s \}$, {\it irrespective}
   of the network type.
  If $\D$ is {\it honest}, then in an {\it asynchronous} network, $s$ is {\it eventually} secret-shared. In a {\it synchronous} network, $s$ is secret-shared after a {\it fixed} time such that the view of the adversary
   remains independent of $s$, {\it irrespective} of the network type.
   Note that $s$ is {\it always} secret-shared with respect to $\SharingSpec$, which  
   is defined with respect to the {\it synchronous} adversary structure $\Z_s$, {\it even} if the network is {\it asynchronous}.
   
       The {\it verifiability}  of $\VSS$ guarantees that 
   if $\D$ is {\it corrupt}, then either no honest party obtains
   any output (this happens if $\D$ {\it does not} invoke the protocol at the first place), or
   there exists some value $s^{\star} \in \K$ (which may be different from $s$) to which
   $\D$ is ``committed" and which is secret-shared with respect to $\SharingSpec$.
    Note that in the latter case, we cannot bound the time within which $s^{\star}$ will be secret-shared, 
   even if the network is {\it synchronous}. This is because 
   a {\it corrupt} $\D$ may delay its messages arbitrarily,
   and the parties will {\it not know} the network type. 
   
   Protocol $\VSS$ is obtained by carefully ``stitching" together the {\it synchronous} VSS (SVSS)
    and {\it asynchronous} VSS (AVSS) protocols of
    \cite{Mau02} and \cite{CP20} respectively. We first explain
    the idea behind these protocols individually, and then proceed to explain how we stitch them together.
    \paragraph{\bf SVSS  Against $\Q^{(3)}$ Adversary Structures:}
     The SVSS protocol of \cite{Mau02} is executed in a sequence of {\it synchronized phases}, and can tolerate $\Q^{(3)}$ adversary 
     structures. Consider an arbitrary adversary structure  $\Z$ satisfying the $\Q^{(3)}$ condition, and let
      $\SharingSpec_{\Z} = (S_1, \ldots, S_{|\Z|})$ be the sharing specification where 
      $S_m = \PartySet \setminus Z_m$, for $m = 1, \ldots, |\Z|$. 
      To share $s$, during the {\it first} phase, $\D$ picks a random vector of shares $(s_1, \ldots, s_{|\Z|})$, which sum up to
      $s$. Then all the parties in $S_m$ are given the share $s_m$. To verify if all the (honest) parties 
      in $S_m$ have received the {\it same} share from $\D$, the parties in $S_m$ perform a {\it pairwise consistency} check of their
      supposedly common share during the {\it second} phase,
       and publicly broadcast the results during the {\it third} phase.
      If any party in $S_m$ publicly complaints for an inconsistency, then during the {\it fourth} phase, $\D$
      makes the share $s_m$ corresponding to $S_m$ {\it public} by broadcasting it. Note that this {\it does not} violate the privacy for an 
      {\it honest} $\D$, since a complaint for inconsistency from $S_m$ implies that $S_m$ has at least one {\it corrupt}
      party and so, the adversary will already know the value of $s_m$.
      If $\D$ {\it does not} ``resolve" any complaint during the fourth phase (implying $\D$ is {\it corrupt}),
      then it is {\it publicly discarded}, and everyone takes a default sharing of some publicly-known value on the behalf of
      $\D$.  
       The protocol ensures that by the end of the {\it fourth} phase,
       {\it all honest} parties in $S_m$ have the {\it same} share, and that the sum of these shares across all the $S_m$ sets is the value shared by $\D$.      
   \paragraph{\bf AVSS Against $\Q^{(4)}$ Adversary Structures:}       
    The AVSS protocol of \cite{CP20} closely follows the SVSS protocol of \cite{Mau02}. However, the phases are {\it no
    longer} synchronized. Moreover, during the pairwise consistency phase, the parties {\it cannot} afford to wait to know the status of the
    consistency checks between all pairs of parties, since the potentially {\it corrupt} parties may {\it never} respond.
    Instead, corresponding to every $S_m$, the parties check for the existence of a set of ``core" parties
    $\C_m \subseteq S_m$, with $S_m \setminus \C_m \in \Z$,
     who publicly confirmed the receipt of the same share from $\D$. To ensure that all the parties agree
     on the core sets, $\D$ is assigned the task of identifying the core sets and broadcasting them.
    The protocol proceeds {\it only} upon the receipt of core sets from $\D$ and their verification.
    While an {\it honest} $\D$ will eventually find and broadcast core sets, a {\it corrupt}
    $\D$ may {\it not} do so, in which case the parties obtain no shares.
    Once the core sets are identified and verified, it guarantees that all the (honest)
    parties in every core set $\C_m$ have received the same share from $\D$.
    The goal will then be to ensure that even the (honest) parties ``outside" $\C_m$
    (namely, the parties in $S_m \setminus \C_m$) get this common share.
    Since $\Z$ now satisfies the $\Q^{(4)}$ condition, the ``majority" of the parties in $\C_m$ are guaranteed to be
    {\it honest}. Hence, the parties in $S_m \setminus \C_m$ can ``filter" out the common share held by the parties in $\C_m$,
    by applying the ``majority rule" on the shares received from the parties in $\C_m$ during pairwise consistency tests.
    \paragraph{\bf Best-of-Both-Worlds VSS Protocol with Conditions $\Con$:}
    In protocol $\VSS$, the parties first start executing the steps of the above SVSS protocol, {\it assuming} a {\it synchronous} network,
    where all the instances of broadcast happen by executing an instance of $\BCAST$ with respect to the adversary structure
    $\Z_s$. If indeed the network in {\it synchronous}, then within time $2\Delta + \TimeBCAST$, the results of
    pairwise consistency tests will be publicly available. Moreover, if any inconsistency is reported, then within time
    $2\Delta + 2\TimeBCAST$, the dealer $\D$ should have resolved all those inconsistencies by making public the ``disputed" shares. 
    However, unlike the SVSS protocol, the parties {\it cannot} afford to discard $\D$ if it fails to resolve any inconsistency
    within time  $2\Delta + 2\TimeBCAST$,  as the network could be {\it asynchronous},
    and $\D$'s responses may be arbitrarily {\it delayed}, even if $\D$ is {\it honest}.
    Moreover, in an {\it asynchronous} network, some honest parties may be seeing the inconsistencies
    being reported within time $2\Delta + \TimeBCAST$ as well as $\D$'s responses within time  
    $2\Delta + 2\TimeBCAST$, while other honest parties {\it may not} be seeing these inconsistencies and $\D$'s
    responses within these timeouts. This may result in the {\it former} set of honest parties considering
    the shares made public by $\D$, while the latter set of honest parties, thinking that the network is {\it asynchronous}, wait for core sets of parties to be made public
    by $\D$ (as done in the AVSS). However, this may lead to the 
    {\it violation} of the {\it commitment} property in case
    $\D$ is {\it corrupt}, and network is {\it asynchronous}.
     In more detail, consider a set $S_m$ for which pairwise {\it inconsistency} is reported, and for which $\D$ also finds a set of
    core parties $\C_m$. Then, it might be possible that the parties in $\C_m$ have received the common share $s_m$ from $\D$,
    but in response to
    the inconsistencies reported for $S_m$, dealer $\D$ responds with $s'_m$, where $s'_m \neq s_m$.
    This will lead to a situation where one set of honest parties (who see inconsistencies and $s'_m$ within the timeout of
    $2\Delta + \TimeBCAST$ and $2\Delta + 2\TimeBCAST$ respectively) consider $s'_m$ as the share for $S_m$, while another set
    of honest parties, who do not see the inconsistencies and $s'_m$ within the timeout, eventually see $\C_m$ and filter out the share $s_m$.     
    
    To deal with the above problem, apart from resolving the inconsistencies reported for {\it any} set $S_m$, 
    the dealer $\D$ {\it also} finds and broadcasts a core set of parties $\C_m$, who have confirmed receiving the same share
    from $\D$ corresponding to {\it all} the sets $S_m$, such that $S_m \setminus \C_m \in \Z_s$. 
    {\it Additionally}, if there is any inconsistency reported for $S_m$, then {\it apart} from $\D$, {\it every}
    party in $S_m$ {\it also} makes public its version of the share corresponding to $S_m$, which it has received from $\D$.
    Now, at time  $2\Delta + 2\TimeBCAST$, the parties check if $\D$ has broadcasted a core set $\C_m$
     for each $S_m$. Moreover, if
    any inconsistency has been reported corresponding to $S_m$, the parties check if ``sufficiently many" parties from
    $\C_m$ have made public the same share, as made public by $\D$. This {\it prevents} a {\it corrupt}
    $\D$ from making public a share, which is {\it different} from the share which it distributed to the parties in $\C_m$.
    
   If the network is {\it asynchronous}, then different parties may have {\it different} ``opinion" regarding whether 
   $\D$ has broadcasted ``valid" core sets $\C_m$. Hence, at time $2\Delta + 2\TimeBCAST$, the parties run an instance of
   our $\BA$ protocol to decide what the case is. If the parties find that $\D$ has broadcasted valid core sets
   $\C_m$ corresponding to each $S_m$, then the parties in $S_m$ proceed to compute their share as follows:
   if $\D$ has made public the share for $S_m$ in response to any inconsistency, then it is taken as the share for $S_m$.
   If no share has been made public for $S_m$, then the parties check if ``sufficiently many" parties have reported the same share
   during the pairwise consistency test within time $2\Delta$, which we show should have happened if the network is
   {\it synchronous}, and if the parties maintain sufficient timeouts. If none of these conditions hold, then the parties
   proceed to filter out the common share, held by the parties in $\C_m$, through ``majority rule".
   
   If $\BA$ indicates that $\D$ has {\it not} made public core sets within time $2\Delta + 2\TimeBCAST$, then 
   either the network is {\it asynchronous} or $\D$ is {\it corrupt}. So the parties resort to the steps used in
   AVSS. Namely,
   $\D$ finds and broadcasts a set of core parties $\E_m$ corresponding to each $S_m$, where
    $S_m \setminus \E_m \in \Z_a$.
   Then, the parties filter out the common share, held by the parties in $\E_m$, through majority rule.

\begin{protocolsplitbox}{$\VSS(\D, s, \SharingSpec)$}{Best-of-both-worlds VSS protocol for the sharing 
 specification $\SharingSpec$.}{fig:VSS}
Let $\SharingSpec = (S_1, \ldots, S_m, \ldots, S_q)$, where for $m = 1, \ldots, q$,
 the set $S_m \defined \PartySet \setminus Z_m$, and where $\Z_s = \{Z_1, \ldots, Z_q \}$
    is the {\it synchronous} adversary structure. 
\begin{myitemize}
    \item[--] {\bf Phase I --- Share Distribution}: $\D$, on having the input $s$, does the following.
        \begin{myitemize}
            \item[--] Randomly select shares $s^{(1)}, \ldots, s^{(q)} \in \K$ such that $s = s^{(1)} + \ldots + s^{(q)}$.
             For $m = 1, \ldots, q$, send the share $s^{(m)}$ to every party in the set $S_m$.
        \end{myitemize}
    \item[--] {\bf Phase II --- Pairwise Consistency Checks}: For $m = 1, \ldots, q$, each party $P_i \in S_m$ does the following. 
        \begin{myitemize}
            \item[--] Upon receiving $s_i^{(m)}$ from $\D$, {\color{red} wait till the local time becomes a multiple of $\Delta$}. Then,
            send $s_i^{(m)}$ to every party $P_j \in S_m$. 
        \end{myitemize}
    \item[--] {\bf Phase III  --- Broadcasting Results of Pairwise Consistency Checks}: 
    For $m = 1, \ldots, q$, each party $P_i$ in $S_m$ does the following. 
        \begin{myitemize}
            \item[--] On receiving $s_j^{(m)}$ from any $P_j \in S_m$, {\color{red} wait till the local time becomes a multiple of $\Delta$}. Then,  
            do the following.
              \begin{myitemize}
              \item[--] If a share $s_i^{(m)}$ corresponding to $S_m$ has been received from $\D$, then, 
              broadcast $\OK(m, i, j)$ if $s_i^{(m)} = s_j^{(m)}$ holds. Else, broadcast $\NOK(m, i)$.
              \item[--] If $s_j^{(m)}$ and $s_k^{(m)}$ have been received from any $P_j$ and $P_k$ respectively, belonging to
              $S_m$ such that $s_j^{(m)} \neq s_k^{(m)}$, then broadcast $\NOK(m, i)$.
              \end{myitemize}
           \end{myitemize}
    \item[--] {\bf Local Computation --- Constructing Consistency Graphs}: Each party $P_i \in \PartySet$ does the following.
        \begin{myitemize}
            \item[--] For $m = 1, \ldots, q$, construct an undirected consistency graph $G_i^{(m)}$ over the parties in $S_m$, 
            where the edge $(P_j,P_k)$ is included in $G_i^{(m)}$ if $P_i$ has received $\OK(m, j, k)$ and $\OK(m, k, j)$
             from the broadcast of $P_j$ and $P_k$ respectively, either through regular or fallback mode. 
        \end{myitemize}
    \item[--] {\bf Phase IV --- Resolving Complaints and Broadcasting Core Sets Based On $\Z_s$}: 
    Each $P_i \in \PartySet$ (including $\D$) does the following at time $2\Delta + \TimeBCAST$.
        \begin{myitemize}
            \item[--] If $\NOK(m, j)$ is received from the broadcast of any $P_j \in S_m$ through regular-mode corresponding to any
            $m \in \{1, \ldots, q \}$, 
            then do the following:
              \begin{myitemize}
               \item[--] {\bf If $P_i = \D$}: Broadcast $\resolve(m, s^{(m)})$.
               \item[--] {\bf If $P_i \neq \D$}: Broadcast $\resolve(m, s_i^{(m)})$, provided $P_i \in S_m$ and $P_i$ has received
               $s_i^{(m)}$ from $\D$.
             \end{myitemize}
             \item[--] {\bf (If $P_i = \D$)}: For $m = 1, \ldots, q$, check if there exists some
             $\C_m \subseteq S_m$ which constitutes
             a clique in the graph $G_\D^{(m)}$, such that $S_m \setminus \C_m \in \Z_s$. 
             If $\C_1, \ldots, \C_q$ are found, then broadcast
             $\C_1, \ldots, \C_q$.
           \end{myitemize}    
    \item[--] {\bf Local Computation --- Verifying and Accepting Core sets}: Each party $P_i \in \PartySet$ (including $\D$) does the following
    at time $2\Delta + 2\TimeBCAST$.
        \begin{myitemize}
            \item[--] If $\C_1, \ldots, \C_q$ is received from the broadcast of $\D$ through the regular mode, 
            then {\it accept} these sets, if all the following hold.
            \begin{myitemize}
                \item[--] For $m = 1, \ldots, q$, the set $\C_m$ constitutes a clique in the consistency graph $G_i^{(m)}$ at time
                $2\Delta + \TimeBCAST$. In addition, $S_m \setminus \C_m \in \Z_s$.
                \item[--] For $m = 1, \ldots, q$, if $\NOK(m, j)$ was received from the broadcast of any
                 $P_j \in S_m$ through regular mode at time $2\Delta + \TimeBCAST$, then the following 
                 must hold true at time $2\Delta + 2\TimeBCAST$.
                \begin{myitemize}
                    \item[--] $\resolve(m, s^{(m)})$ is received from the broadcast of $\D$ through regular-mode.
                    \item[--] $\resolve(m, s^{(m)})$ is received from the broadcast of a set of parties $\C'_m$ through regular-mode,
                    where $\C'_m \subseteq \C_m$, and $\C_m \setminus \C'_m \in \Z_s$.
                \end{myitemize}
            \end{myitemize}
        \end{myitemize}
    \item[--] {\bf Phase V --- Deciding Whether Core Sets Based on $\Z_s$ have Been Accepted by Any Honest Party}:
     At time $2\Delta + 2\TimeBCAST$, each $P_i \in \PartySet$ participates in an instance of $\BA$ with input $b_i = 1$ if 
     it has accepted sets $\C_1, \ldots, \C_q$, else, with input $b_i = 0$, and {\color{red} waits for time $\TimeBA$}.  
    \item[--] {\bf Local Computation --- Computing Shares Through Core Sets Based on $\Z_s$}: 
    If the output of $\BA$ is 1, then each party $P_i \in \PartySet$ does the following.
      \begin{myitemize}
       \item[--] If $\C_1, \ldots, \C_q$ are not received yet, then {\color{red} wait to receive them}
        from the broadcast of $\D$ through
       fallback-mode. 
       \item[--] For $m = 1, \ldots, q$, compute the share $s^{(m)}_i$ corresponding to $S_m$ as follows, provided
       $P_i \in S_m$.
        \begin{myitemize}
            \item[--] If, at time $2\Delta + 2\TimeBCAST$,
             $\resolve(m, s^{(m)})$ was received from the broadcast of $\D$ and from a subset of parties $\C'_m \subseteq \C_m$ 
             {\bf through
             regular-mode},
            where $\C_m \setminus \C'_m \in \Z_s$, then output $s^{(m)}_i = s^{(m)}$.
            \item[--] Else, if a common value, say $s^{(m)}$, was received from a subset of parties $\C''_m \subseteq \C_m$ 
            {\bf at time 
            $2\Delta$} where $\C_m \setminus \C''_m \in \Z_s$, then output $s^{(m)}_i = s^{(m)}$.
            \item[--] Else wait till there exists a subset of parties $\C'''_m \subseteq \C_m$ where
            $\C_m \setminus \C'''_m \in \Z_a$, such that a common value, say
            $s^{(m)}$, is received from all the parties in $\C'''_m$. Upon finding such a $\C'''_m$, 
            output $s^{(m)}_i = s^{(m)}$.
       \end{myitemize}

      \end{myitemize}
    \item[--] {\bf Phase VI --- Broadcasting Core Sets Based on $\Z_a$}: If the output of $\BA$ is $0$, 
    then for $m = 1, \ldots, q$, dealer $\D$ does the following in its graph $G_\D^{(m)}$.
        \begin{myitemize}
            \item[--] Check if there exists a subset of parties
            $\E_m \subseteq S_m$, which constitutes a clique in the graph $G_\D^{(m)}$, such that
            $S_m \setminus \E_m \in \Z_a$. 
            \item[--] Upon finding $\E_1, \ldots, \E_q$, broadcast $\E_1, \ldots, \E_q$.
        \end{myitemize}
    \item[--] {\bf Local Computation --- Computing Shares Through Core Sets Based on $\Z_a$}:
     If the output of $\BA$ is 0, then each party $P_i \in \PartySet$ does the following.
          \begin{myitemize}
          \item[--] Participate in any instance of $\BCAST$ invoked by $\D$ for broadcasting sets $\E_1, \ldots, \E_q$, {\color{red} only after time}
          $2\Delta + 2\TimeBCAST + \TimeBA$. 
          \item[--] Wait till sets $\E_1, \ldots, \E_q$ are received from the broadcast of $\D$, and then {\it accept}
          these sets if they satisfy the following conditions.
             \begin{myitemize}
              \item[--] For $m = 1, \ldots, q$, the set $\E_m$ constitutes a clique in the graph $G_i^{(m)}$.
              \item[--] For $m = 1, \ldots, q$, the condition $S_m \setminus \E_m \in \Z_a$ holds.
             \end{myitemize}
           \item[--] If $\E_1, \ldots, \E_q$ are accepted, then
           compute the share $s_i^{(m)}$ corresponding to every $S_m$ where $P_i \in S_m$ as follows.
             \begin{myitemize}
             \item[--] If $P_i \in \E_m$, then output $s_i^{(m)}$ received from $\D$.
             \item[--] Else, wait till there exists a subset  $\E'_m \subseteq \E_m$ where
             $\E_m \setminus \E'_m \in \Z_s$, such that there exists a common values, say $s^{(m)}$, received
             from all the parties in $\E'_m$. Upon finding such an $\E'_m$, output $s^{(m)} _i= s^{(m)}$.
             \end{myitemize}
            \end{myitemize}
\end{myitemize}
\end{protocolsplitbox}

The properties of $\VSS$, stated in Theorem \ref{thm:VSS}, are proved in  Appendix \ref{app:VSS}. 
\begin{theorem}
\label{thm:VSS}
Let $\Z_s$ and $\Z_a$ be adversary structures, 
  satisfying the conditions $\Con$ (see Section \ref{sec:prelims}).
   Moreover, let $\SharingSpec = \{S_m : S_m = \PartySet \setminus Z_m \; \mbox{ and } \; Z_m \in \Z_s \}$.
    Then, $\VSS$ achieves the following, where $\D$ has input $s \in \K$ for $\VSS$.
   \begin{myitemize}
   \item[--] If $\D$ is honest, then the following hold.
         \begin{myitemize}
         \item[--] {\bf $\Z_s$-correctness}: In a synchronous network, $s$ is secret-shared with respect to $\SharingSpec$ at 
            time $\TimeVSS = 2\Delta + 2\TimeBCAST + \TimeBA$. 
         \item[--] {\bf $\Z_a$-correctness}: In an asynchronous network, almost-surely, $s$ is eventually secret-shared
          with respect to $\SharingSpec$.
         \item[--] {\bf Privacy}: Adversary's view remains independent of $s$ in any network.  
         \end{myitemize}   
   \item[--] If $\D$ is corrupt, then either no honest party obtains any output or there exists some
   $s^{\star} \in \K$, such that the following hold.\footnote{In the {\it best-of-both-worlds}
   setting, it is {\it not} necessary that the honest parties obtain an output within a known time-out
   in a {\it synchronous} network for a {\it corrupt} $\D$ (unlike the {\it commitment}
   property of traditional SVSS). This is because a {\it corrupt} $\D$ may not invoke the
   protocol and the parties will {\it not} be knowing the network type.} 
          \begin{myitemize}
          \item[--]  {\bf $\Z_a$-commitment}: In an asynchronous network, almost-surely, $s^{\star}$ is eventually secret-shared
          with respect to $\SharingSpec$.          
           \item[--]  {\bf $\Z_s$-commitment}: In a synchronous network, $s^{\star}$ is secret-shared with respect to $\SharingSpec$,
            such that the following hold.
              \begin{myitemize}
		       \item[--] If any honest party outputs its shares at time $\TimeVSS$, then all honest parties
		       output their shares at time $\TimeVSS$.
		       \item[--] If any honest party outputs its shares at time $T > \TimeVSS$,
		    then every honest party outputs its shares by time $T + 2\Delta$.  
	       \end{myitemize}   
          \end{myitemize}
   \item[--] {\bf Communication Complexity}: The protocol incurs a communication of
    $\Order(|\Z_s| \cdot n^4 (\log{|\K|} + \log{|\Z_s|} + \log n) + n^5 \log n)$ bits, and invokes one instance of $\BA$.
   \end{myitemize}
 \end{theorem}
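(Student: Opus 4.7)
The proof will split along two axes: whether $\D$ is honest or corrupt, and whether the network is synchronous or asynchronous.

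For an honest $\D$ in a synchronous network, the plan is to trace through the phases. Every honest $P_i \in S_m$ holds the true $s^{(m)}$ by time $\Delta$, and the $\OK(m, \cdot, \cdot)$ messages between honest pairs are received through regular-mode $\BCAST$ by time $2\Delta + \TimeBCAST$. In each graph $G_\D^{(m)}$, the set of honest parties in $S_m$ forms a clique whose complement inside $S_m$ is a subset of $Z_m$ together with the corrupt set, hence lies in $\Z_s$. So $\D$ identifies and broadcasts valid $\C_1, \ldots, \C_q$; any $\NOK(m, j)$ comes from a corrupt $P_j$ and is resolved by $\D$ and by a $\Z_s$-subset of honest parties in $\C_m$ using the true $s^{(m)}$. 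Every honest party thus accepts the core sets, enters $\BA$ with input $1$, and by $\BA$'s validity obtains output $1$, so the share-extraction rules produce $s^{(m)}$ for each $m$, giving a sharing of $s$ by time $\TimeVSS$. In the asynchronous case the same reasoning works modulo timing; if $\BA$ outputs $0$ instead, $\D$ switches to Phase~VI cliques $\E_m$, which exist by $\Q^{(4)}(\PartySet, \Z_a)$. Termination is almost-sure because $\BA$ is almost-surely live.

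Privacy reduces to exhibiting a ``safe'' index $m^\star$ with $S_{m^\star} \cap \Bad = \emptyset$, where $\Bad$ is the set of corrupt parties. Choosing $Z_{m^\star} \supseteq \Bad$ suffices, since $\Bad \in \Z_s$ by Condition $\Con$. No corrupt party ever receives $s^{(m^\star)}$, and no $\resolve(m^\star, \cdot)$ message is ever broadcast (no honest party issues $\NOK(m^\star, \cdot)$ since they all agree on $s^{(m^\star)}$, and corrupt parties are outside $S_{m^\star}$). Since $s^{(m^\star)}$ is uniform given the other shares, the adversary's view is independent of $s$.

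For a corrupt $\D$, the substantive claim is commitment: if any honest party outputs shares, then a unique $s^\star$ is fixed and every honest party eventually outputs that sharing. Consistency of $\BA$ forces every honest party onto the same branch. In the output-$1$ branch, by fallback consistency of $\BCAST$ every honest party eventually receives $\C_1, \ldots, \C_q$; the honest members of $\C_m$ form a clique, so they pairwise agree on some $\hat{s}^{(m)}$ equal to what $\D$ sent them. I will then show that all three share-extraction rules yield $\hat{s}^{(m)}$: rules 2 and 3 obviously do, since honest parties in $\C_m$ dominate any admissible filter; rule 1 does because the $\Z_s$-corroborating subset $\C'_m$ must contain an honest member, using $\Q^{(3,1)}(\PartySet, \Z_s, \Z_a)$ together with $S_m \setminus \C_m \in \Z_s$ to guarantee $\C_m \cap \Honest \not\subseteq Z$ for every $Z \in \Z_s$. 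The output-$0$ branch is symmetric using $\E_m$, $\Z_a$-cliques, and $\Q^{(4)}(\PartySet, \Z_a)$. The timing assertions come from checking that if the first honest output occurs at time $\TimeVSS$ then all relevant broadcasts were in regular mode for every honest party by $\TimeVSS$, and otherwise the $2\Delta$ fallback-consistency guarantee of $\BCAST$ propagates the common value.

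The main obstacle will be the corrupt-$\D$, output-$1$ case: I must simultaneously rule out inconsistencies across three very different share-extraction rules for a single $S_m$. The delicate step is the $\Z_a$-majority filter over $\C_m$ — although $\C_m$ is only a $\Z_s$-clique in $S_m$, I need its honest members to survive every $\Z_a$-sized removal. Condition $\Con$ (specifically $\Q^{(3,1)}$ combined with $S_m \setminus \C_m \in \Z_s$ and the fact that the corrupt subset of $S_m$ lies in $\Z_s$) forces this, but writing the bookkeeping cleanly — tracking which set is removed from which and why a non-empty honest remainder survives — is where the proof will be longest. A secondary subtlety is that a corrupt $\D$ may reveal $\C_m$ only in fallback mode to some honest parties; ensuring those parties still eventually match the others' shares requires combining $\BCAST$'s fallback consistency with the monotonicity of the consistency graph, since edges can only accumulate over time.
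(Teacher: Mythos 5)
Your overall decomposition (honest versus corrupt dealer crossed with synchronous versus asynchronous network, privacy via a corruption-free index $m^\star$, consistency of the share-extraction rules via $\Q^{(3,1)}$-type arguments) matches the paper's proof in spirit. The gap is in the synchronous, corrupt-$\D$, $\BA$-output-$1$ branch. You concentrate on showing that \emph{whichever} of the three extraction rules fires, the result is the common share $\hat{s}^{(m)}$ held by the honest members of $\C_m$; but you never establish that \emph{some} rule actually fires by time $\TimeVSS$, and that is the substance of the $\Z_s$-commitment time bound. Condition~A is unavailable when no $\NOK$ was broadcast for $S_m$. Condition~C asks for $\C'''_m \subseteq \C_m$ with $\C_m \setminus \C'''_m \in \Z_a$ reporting a common value; in a synchronous network the corrupt set lies only in $\Z_s$, so $\C_m \setminus (\C_m \cap \Honest)$ need not be in $\Z_a$, and Condition~C may never trigger at all. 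You write that you need the honest members of $\C_m$ to ``survive every $\Z_a$-sized removal'', but that is the consistency claim (every admissible filter hits an honest party), not the existence claim (some admissible filter exists); the second one fails in the synchronous world with a corrupt $\D$.

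What the paper proves, and what you would need, is the separate timing lemma (Lemma~\ref{lemma:GenVSSSync}): if the network is synchronous and $\BA$ outputs $1$, then every honest party in every $\C_m$ received its share from $\D$ by time $\Delta$. This is forced by the ``wait until local time is a multiple of $\Delta$'' steps in Phases~II and~III: a share delivered after time $\Delta$ pushes the dependent $\OK$ broadcasts out to round $(c+1)\Delta$ with $c \ge 2$, so they cannot land by time $2\Delta + \TimeBCAST$, so the corresponding edges are absent when $\C_m$ is verified, so $\C_m$ is not accepted --- contradiction. With this lemma the pairwise-consistency messages from $\C_m \cap \Honest$ reach everyone by time $2\Delta$, and since $\C_m \setminus (\C_m \cap \Honest) \in \Z_s$, Condition~B fires for every honest party with the correct value; that is the backstop guaranteeing output at $\TimeVSS$. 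Without it, your assertion that ``if the first honest output occurs at time $\TimeVSS$ then all relevant broadcasts were in regular mode'' begs the question of whether any honest party outputs by $\TimeVSS$ at all. A secondary slip: for the $\BA$-output-$0$ branch you cite $\Q^{(4)}(\PartySet, \Z_a)$, but the sets $Z_m$ and $\E_m \setminus \E'_m$ lie in $\Z_s$, not $\Z_a$; the contradiction used in the paper is again against $\Q^{(3,1)}(\PartySet, \Z_s, \Z_a)$, using additionally that every $\Z_a$-set is contained in some $\Z_s$-set. The $\Q^{(4)}$ condition on $\Z_a$ is a consequence of $\Con$, not the hypothesis directly applied.
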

\paragraph{\bf $\VSS$ for $L$ Secrets:} If $\D$ has $L$ secrets to share, then it can invoke
 $L$ independent instances of $\VSS$. However, instead of computing and broadcasting $L$ number of 
  $\C_1, \ldots, \C_q$ sets,
  $\D$ can compute and broadcast sets $\C_1, \ldots, \C_q$ once, for {\it all} the $L$ instances of 
  $\VSS$.\footnote{Such common $\C_1, \ldots, \C_q$ sets for all the $L$ instances are guaranteed for an {\it honest} $\D$.}
   The parties
   will need to execute a {\it single} instance of $\BA$ to decide whether $\D$ has broadcasted valid
   $\C_1, \ldots, \C_q$ sets, corresponding to {\it all} $L$ instances of $\VSS$.
   The resultant protocol will incur a 
   communication of
    $\Order(L \cdot |\Z_s| \cdot n^4 (\log{|\K|} + \log{|\Z_s|} + \log n) + n^5 \log n)$ bits and invokes one instance of $\BA$.
    To avoid repetition, we do not provide the formal details.

\section{The Preprocessing Phase Protocol}
\label{sec:preprocessing}
 Our protocol for the preprocessing phase allows the parties to generate secret-sharing
 of $c_M$ number of multiplication-triples, random for the adversary, with respect to the sharing specification
  $\SharingSpec = \{S_m : S_m = \PartySet \setminus Z_m \; \mbox{ and } \; Z_m \in \Z_s \}$.
    Before discussing the protocol, we discuss two sub-protocols used.
  \subsection{Agreement on a Common Subset (ACS)}
\label{sec:ACS}
 In the protocol $\ACS$, there exists a set
 $\Q \subseteq \PartySet$, such that it will be {\it guaranteed} that $\Z_s$ and $\Z_a$
   {\it either} satisfy the $\Q^{(1, 1)}(\Q, \Z_s, \Z_a)$ condition 
   {\it or} $\Q^{(3, 1)}(\Q, \Z_s, \Z_a)$ condition.  
  Moreover, each party in $\Q$ will have
  $L$ values, which it would like to secret-share using 
   $\VSS$.\footnote{Looking ahead, 
   in our {\it preprocessing phase} protocol, $\Q = S_l \cap S_m$ corresponding to some $S_l, S_m \in \SharingSpec$ and hence,
   the $\Q^{(1, 1)}(\Q, \Z_s, \Z_a)$ condition will be satisfied.
   In our MPC protocol, 
   $\Q = \PartySet$ and hence 
   the $\Q^{(3, 1)}(\Q, \Z_s, \Z_a)$ condition will be satisfied.}
   As {\it corrupt} dealers may {\it not} invoke their instances of $\VSS$, 
   the parties can compute outputs from {\it only} a subset of $\VSS$ instances
   corresponding to a subset of parties $\Q \setminus Z$, for some $Z \in \Z_s$
    ({\it even} in a {\it synchronous} network).
   However, in an {\it asynchronous} network, {\it different} parties may
   compute outputs from $\VSS$ instances of {\it different} subsets of  $\Q \setminus Z$ parties, corresponding to 
   {\it different}
   $Z \in \Z_s$.
   Protocol $\ACS$ allows the parties to agree on a {\it common} subset $\CoreSet$ of parties, where
   $\Q \setminus \CoreSet \in \Z_s$, such that {\it all} honest parties will be able to compute
   their outputs corresponding to the $\VSS$ instances of the parties in $\CoreSet$. 
   Moreover, in a {\it synchronous} network,  {\it all honest}
   parties from $\Q$ 
   are guaranteed to be present in $\CoreSet$.\footnote{This property will be very crucial in a {\it synchronous} network.}  
   Protocol $\ACS$ is obtained by generalizing the ACS protocol of \cite{ACC22a}, which was designed for {\it threshold}
   adversaries. The formal description of the protocol $\ACS$ and its properties 
      are available in Appendix \ref{app:Preprocessing}.  
  \subsection{The Multiplication Protocol}
  \label{sec:mult}
  Protocol $\Mult$ takes as input the secret-shared pairs of values $\{([a^{(\ell)}], [b^{(\ell)}]) \}_{\ell = 1, \ldots L}$, and securely generates
   $\{[c^{(\ell)}] \}_{\ell = 1, \ldots, L}$, where $c^{(\ell)} = a^{(\ell)} \cdot b^{(\ell)}$, without revealing any additional information
    to the adversary. 
    The protocol is obtained by ``combining" the {\it synchronous} multiplication protocol of \cite{Mau02}, with the 
   {\it asynchronous} multiplication protocol of \cite{CP20}, and adapting them to the best-of-both-worlds setting. 
     For simplicity, we discuss the idea of $\Mult$ for the case when $L = 1$. The modifications for a general 
     $L$ are straight forward.
   
    Let $[a]$ and $[b]$ be the inputs to the protocol. The goal is to securely compute $[c]$, where
   $c = a \cdot b$. For this, the parties securely compute a secret-sharing of each summand $[a]_l \cdot [b]_m$.
   A secret-sharing of $c$ can then be obtained by summing the secret-sharing of each summand $[a]_l \cdot [b]_m$.
   To generate a secret-sharing of the summand $[a]_l \cdot [b]_m$, the parties do the following:
   let $\Q_{l, m}$ be the set of parties who are guaranteed to have both the shares $[a]_l$, as well as $[b]_m$.
   Notice that $\Q_{l, m}$ is {\it not} empty and $\Z_s$ and $\Z_a$ satisfy the $\Q^{(1, 1)}(\Q_{l, m}, \Z_s, \Z_a)$ condition,
   since $\Z_s$ and $\Z_a$ satisfy the $\Q^{(3, 1)}(\PartySet, \Z_s, \Z_a)$ condition. Hence, {\it irrespective} of the network type,
   the set $\Q_{l, m}$ is {\it bound} to have {\it at least} one honest party. 
   Each party in the set $\Q_{l, m}$ is asked to independently secret-share the summand $[a]_l \cdot [b]_m$, 
   and the parties then agree on a common subset of parties $\R_{l, m}$ from $\Q_{l, m}$, where
   $\Q_{l, m} \setminus \R_{l, m} \in \Z_s$, which have shared some summand.
   For this, the parties execute an instance of the $\ACS$ protocol. The properties of $\ACS$ guarantees that
   in a {\it synchronous} network, {\it all honest} parties from $\Q_{l, m}$ are present in $\R_{l, m}$. On the other hand,
   even if the network is {\it asynchronous}, the set $\R_{l, m}$ is {\it bound} to have {\it at least} one honest party from 
   $\Q_{l, m}$. Hence, {\it irrespective} of the network type, it will be {\it guaranteed} that {\it at least} one
   party in $\R_{l, m}$ has secret-shared the summand $[a]_l \cdot [b]_m$. 
    However, since the exact identity of the honest parties in $\R_{l, m}$ is {\it not} known, the parties 
    check if {\it all} the parties in
    $\R_{l, m}$ have shared the same summand. 
    The idea here is that if {\it all} the parties in $\R_{l, m}$ have shared the same summand,
     then any of these secret-sharings can be taken as a secret-sharing of $[a]_l \cdot [b]_m$. Else,
     the parties publicly reconstruct the shares $[a]_l$ and $[b]_m$ and 
      compute a default secret-sharing of $[a]_l \cdot [b]_m$.
     Notice that in the latter case, the privacy of $a$ and $b$ is still preserved, as in this case,
     the set $\R_{l, m}$ consists of {\it corrupt} parties, who already know the values of both $[a]_l$ as well
     as $[b]_m$.
   
   Protocol $\Mult$ and its properties are available in Appendix \ref{app:Preprocessing}.
\subsection{The Preprocessing Phase Protocol}
Given protocols $\ACS$ and $\Mult$, the preprocessing phase protocol $\Offline$ is standard and straight forward.
The protocol has two stages. During the {\it first} stage, the parties  securely generate
  secret-sharing of $c_M$ pairs of random values. For this, 
   the parties run an instance of $\ACS$, where the input for each party
   will be $c_M$ pairs of random values.
    During the {\it second} stage, a secret-sharing of the product of each pair is computed securely
  by executing an instance of $\Mult$.
  Protocol $\Offline$ and its properties are available  in Appendix \ref{app:Preprocessing}.

\section{Best-of-both-Worlds Circuit-Evaluation Protocol}
\label{sec:MPC}
Given protocols $\Offline$ and $\ACS$, the circuit-evaluation protocol $\PiMPC$
  for evaluating $\ckt$ is standard and straight forward. Here, we outline the protocol steps and state its properties. We defer the full details to Appendix \ref{app:MPC}. 
    Protocol $\PiMPC$
    consists of four phases. 
  In the {\it first} phase, the parties generate secret-sharing of $c_M$ random multiplication-triples through $\Offline$. 
   Additionally, they invoke $\ACS$ to generate secret-sharing of their respective inputs for $f$, and 
     agree on a {\it common} subset of parties $\CoreSet$, where $\PartySet \setminus \CoreSet \in \Z_s$, such that the inputs
     of the parties in $\CoreSet$ are secret-shared. The inputs of the remaining parties are set to
     $0$. Note that in a {\it synchronous} network,
    {\it all honest} parties will be in $\CoreSet$. 
     In the {\it second} phase, the parties securely evaluate each gate in the circuit in a secret-shared fashion,
     after which the parties {\it publicly} reconstruct 
  the secret-shared output in the {\it third} phase. The {\it last} phase is the  {\it termination phase},
   where the parties check
    whether ``sufficiently many" parties have obtained the same output, in which case the parties
    ``safely" take that output
   and terminate the protocol (and all the underlying sub-protocols). 
    \begin{theorem}
\label{thm:MPC}
Let $\Adv$ be an adversary, characterized by adversary structures $\Z_s$ and $\Z_a$
 in a synchronous and asynchronous network respectively,
  satisfying the conditions $\Con$ (see Condition \ref{condition:Con} in Section \ref{sec:prelims}). 
 Moreover, let $f: \K^n \rightarrow \K$ be a function represented by an arithmetic circuit $\ckt$ over $\K$, consisting of
 $c_M$ number of multiplication gates, with a multiplicative depth of $D_M$, with each party having an input $x_i \in \K$.
 Then, $\PiMPC$ incurs a communication of 
 $\Order(c_M \cdot |\Z_s|^3 \cdot n^5 (\log{|\K|} + \log{|\Z_s|} + \log n) + |\Z_s|^2 \cdot n^6 \log n)$ bits, invokes $\Order(|\Z_s|^2 \cdot n)$ instances of $\BA$,
   and achieves the following. 
      \begin{myitemize}
      \item[--] In a synchronous network, all honest parties output $y = f(x_1, \ldots, x_n)$ at time $(30n + D_M + 6k + 38) \cdot \Delta$,
       where $x_j = 0$ for every $P_j \not \in \CoreSet$, such that
       $\PartySet \setminus \CoreSet \in \Z_s$,
       and every honest party is
        present in $\CoreSet$; here $k$ is the constant from Lemma \ref{lemma:ABAGuarantees}, as determined by the 
        protocol $\ABA$.
       \item[--] In an asynchronous network, almost-surely, the honest parties eventually output $y = f(x_1, \ldots, x_n)$ 
       where $x_j = 0$ for every $P_j \not \in \CoreSet$
       and where $\PartySet \setminus \CoreSet \in \Z_s$.
       \item[--] The view of $\Adv$ remains independent of the inputs of the honest parties in $\CoreSet$.                
   \end{myitemize}
\end{theorem}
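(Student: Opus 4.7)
The plan is to establish the theorem by analyzing each of the four phases of $\PiMPC$ in turn, leveraging the properties of the building blocks $\Offline$, $\ACS$, $\Mult$, $\VSS$, $\BA$ and $\Rec$. First I would argue correctness of the input-sharing sub-phase: invoking $\ACS$ on the secret-shared inputs proposed by the parties (each through an instance of $\VSS$) guarantees by the properties of $\ACS$ that all honest parties agree on a common subset $\CoreSet$ with $\PartySet \setminus \CoreSet \in \Z_s$, such that the inputs of the parties in $\CoreSet$ are secret-shared with respect to $\SharingSpec$; moreover, in the synchronous case, the $\ACS$-guarantee ensures every honest party is in $\CoreSet$. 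The inputs of parties outside $\CoreSet$ are set to a default value of $0$. In parallel, $\Offline$ yields secret-sharings of $c_M$ random multiplication triples with respect to $\SharingSpec$, by the guarantees of Theorem on $\Offline$ (stated in Appendix).

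Next I would handle the circuit-evaluation phase. Since the secret-sharing $[\cdot]$ is linear, all linear gates are evaluated non-interactively by the parties, which preserves the invariant that every wire value is secret-shared with respect to $\SharingSpec$. For each multiplication gate, the parties invoke $\BatchBeaver$ using a fresh pre-generated triple; by the properties of $\BatchBeaver$, the result is a valid secret-sharing of the product, and adversary's view remains independent of the two multiplicands since the mask values $a, b$ of the triple are random. By a straightforward induction along the multiplicative layers of $\ckt$, after all gates are processed, the output wire is secret-shared with respect to $\SharingSpec$ such that the adversary's view is independent of all honest parties' inputs in $\CoreSet$. For output reconstruction, the parties invoke $\Rec$, whose properties guarantee that every honest party obtains $y = f(x_1, \ldots, x_n)$ in both network types (eventually in the asynchronous case, within $\Delta$ in the synchronous case). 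The termination phase is then standard: each party waits to hear the same output from a sufficiently large subset of parties (one whose complement lies in $\Z_s$), at which point it safely terminates all underlying sub-protocol instances.

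For the synchronous timing bound of $(30n + D_M + 6k + 38) \cdot \Delta$, I would add up the time consumed by each phase: $\Offline$ and the input-sharing $\ACS$ together account for the constants involving $\TimeVSS$, $\TimeBA$ and the $\Mult$ latency (each of which is $\Order(\Delta)$ plus terms proportional to $n$ from $\SBA$ inside $\BCAST$ and constant $k$ from $\ABA$); the circuit-evaluation contributes $D_M \cdot \Delta$ from the sequential multiplication gates (linear gates are local); reconstruction contributes $\Delta$ from $\Rec$; and termination contributes a further constant number of $\Delta$-rounds. The communication complexity follows by summing $c_M$ instances of $\Mult$ (each contributing $\Order(|\Z_s|^3 \cdot n^5 \cdot (\log|\K| + \log|\Z_s| + \log n))$), the $\ACS$ and $\Offline$ invocations (each involving $\Order(|\Z_s|^2 \cdot n)$ instances of $\VSS$ and $\BA$), the reconstruction, and the broadcast overhead $\Order(n^6 \log n)$ from BA instances.

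The main obstacle I expect is the careful accounting in the asynchronous case, where different honest parties might complete different instances of $\VSS$ and $\Mult$ at different points in time, and where one must ensure almost-sure termination: this requires showing that $\Mult$ and $\Offline$ each inherit the almost-sure liveness of $\BA$ (Lemma \ref{lemma:ABAGuarantees}), and that the total number of $\BA$ instances invoked is a fixed polynomial so that the probability of all of them terminating tends to $1$. A secondary delicate point is the privacy argument: because every $Z \in \Z_a$ is contained in some $Z' \in \Z_s$ (condition $\Con$), a secret-sharing with respect to $\SharingSpec$ reveals nothing to the adversary in either network type, and this is what allows us to use one and the same sharing specification throughout. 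Combined with the privacy of $\VSS$ and $\Mult$, a standard simulation argument then completes the proof that the view of $\Adv$ remains independent of the honest parties' inputs in $\CoreSet$.
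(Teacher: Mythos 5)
Your proposal follows essentially the same route as the paper's proof: preprocessing via $\Offline$, input-sharing via $\ACS$ (with all honest parties landing in $\CoreSet$ in the synchronous case), Beaver-based circuit evaluation layer by layer, public reconstruction via $\Rec$, and a Bracha-style $\ready$-amplification termination phase, with privacy argued directly from the privacy properties of $\ACS$, $\Offline$, and $\BatchBeaver$ (no explicit simulator is constructed in the paper either). One imprecision worth noting in your communication accounting: you write of ``$c_M$ instances of $\Mult$'', but the protocol makes a single call to $\Mult$ with batch parameter $L = c_M$ inside $\Offline$; taken literally, $c_M$ separate $\Mult$ calls would inflate the non-$c_M$ term to $\Order(c_M \cdot |\Z_s|^2 \cdot n^6 \log n)$ bits and the $\BA$ count to $\Order(c_M \cdot |\Z_s|^2 \cdot n)$, whereas batching is exactly what keeps those contributions independent of $c_M$ as stated in the theorem.
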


\bibliographystyle{plain}

\bibliography{main}

\newpage

\begin{center}
{\Large \bf Supplementary Material}
\end{center}

\appendix

\section{Existing Primitives}
\label{app:ExistingPrimitives}
\subsection{Asynchronous Reliable Broadcast with Weak Synchronous Guarantees}
Let $\AdvStructure$ be an adversary structure such that $\AdvStructure$ satisfies the $\Q^{(3)}(\PartySet, \AdvStructure)$
 condition. The Acast protocol $\PiACast$ with respect to $\AdvStructure$ is presented in Fig \ref{fig:Acast}. The current 
  description of the protocol is taken from \cite{ACC22b}.
\begin{protocolsplitbox}{$\PiACast(\Sender, m)$}{The perfectly-secure Acast protocol with respect to an adversary structure 
 $\AdvStructure$ satisfying the $\Q^{(3)}(\PartySet, \AdvStructure)$ condition. The above code is executed by every
  $P_i \in \PartySet$}{fig:Acast}
	\justify
\begin{myenumerate}
    \item If $P_i = \Sender$, then send the message ($\init, m$) to all the parties in $\PartySet$.
      \item If a message ($\init, m$) is received from $\Sender$, then send the message ($\echo, m$)
        to all the parties in $\PartySet$. {\color{red} Execute this step at most once.}
    \item If a message ($\echo, m'$) is received from a set of parties 
        $\PartySet \setminus Z$ for some $Z \in \AdvStructure$, then send the message ($\Ready, m'$) to all the parties.
    \item If a message ($\Ready, m'$) is received from a set of parties $C$
     where $C \not \in \AdvStructure$, then send the message $(\Ready, m')$ to all
        the parties in $\PartySet$.
    \item If ($\Ready, m'$) is received from a set of parties 
         $\PartySet \setminus Z$ for some $Z \in \AdvStructure$, then output $m'$.
    \end{myenumerate} 
\end{protocolsplitbox}

We next prove the properties of the protocol $\PiACast$. The proofs for the case of asynchronous network are borrowed from \cite{ACC22b}.  \\~\\
{\bf Lemma \ref{lemma:Acast}.}
{Let $\Adv$ be an adversary characterized by an adversary structure $\AdvStructure$,  satisfying
 the $\Q^{(3)}(\PartySet, \AdvStructure)$ condition, such that $\Adv$ can corrupt any subset of parties from $\AdvStructure$
 during the execution of $\PiACast$.
 Then $\PiACast$ achieves the following.
 \begin{myitemize}
\item[--]{\it Asynchronous Network}: 
    \begin{myitemize}
        \item[--]{\bf $\AdvStructure$-Liveness}: If $\Sender$ is {\it honest}, then all honest parties eventually obtain an output.
         \item[--]{\bf $\AdvStructure$-Validity}: If $\Sender$ is {\it honest}, then every honest party with an output, outputs $m$.
         \item[--]{\bf $\AdvStructure$-Consistency}: If $\Sender$ is {\it corrupt} and some honest party outputs $m^{\star}$, 
         then every honest party {\it eventually} 	outputs $m^{\star}$.
       \end{myitemize}	
  \item[--] {\it Synchronous Network}:
        \begin{myitemize}
           \item[--]{\bf $\AdvStructure$-Liveness}: If $\Sender$ is {\it honest}, then all honest parties obtain an output within time $3\Delta$.
   	  \item[--]{\bf $\AdvStructure$-Validity}: If $\Sender$ is {\it honest}, then every honest party with an output, outputs $m$.
	  \item[--]{\bf $\AdvStructure$-Consistency}: If $\Sender$ is {\it corrupt} and some honest party outputs $m^{\star}$ at time $T$, then every honest $P_i$ outputs $m^{\star}$ by the
     end of time $T + 2 \Delta$.
       \end{myitemize}   
  \item[--] {\it Communication Complexity}: $\Order(n^2 \ell)$ bits are communicated, where $\Sender$'s message is of size $\ell$ bits.
\end{myitemize}
}
\begin{proof}
We first prove the properties assuming an {\it  asynchronous} network.
 Let $Z_c \in \AdvStructure$ be the set of parties {\it corrupted} by the adversary during the protocol execution, and
  let $\Honest \defined \PartySet \setminus Z_c$ be the set of {\it honest} parties.
  We start with the {\it validity} property, for which we consider an {\it honest} $\Sender$. We show that 
  all honest parties eventually output $m$. 
  This is because all honest parties complete steps $2-5$ in the protocol even if the corrupt parties do not send their messages.
   This follows from the fact that the messages of the parties in $\Honest$ are eventually delivered to all the honest parties,
   and $\PartySet \setminus \Honest = Z_c \in \AdvStructure$. 
    The parties in $Z_c$ may send 
    $\echo$ messages for $m'$, where $m' \neq m$.
   Similarly, the parties in $Z_c$ may send 
   $\ready$ messages for $m'$, where $m' \neq m$.
   However, since $Z_c \in \AdvStructure$, and since $\PartySet \setminus Z_c = \Honest \not \in \AdvStructure$, 
   no honest party ever generates a $\ready$ message for $m'$,  neither in step $3$, nor in step
    $4$.\footnote{If $\Honest \in \AdvStructure$, then $\AdvStructure$
   {\it does not} satisfy the $\Q^{(2)}(\PartySet, \AdvStructure)$ condition, which is a contradiction.} 
   This also proves the {\it liveness} property. 
   
   We next prove the {\it consistency} property for which
   we consider a {\it corrupt} $\Sender$. Let $P_h$ be an {\it honest} party who outputs $m^{\star}$.
    We have to show that all honest parties
   eventually obtain the output $m^{\star}$.
     Since $P_h$ obtained the output $m^{\star}$, it received $\Ready$ messages for $m^{\star}$ during step $5$ of the protocol
    from a set of parties $\PartySet \setminus Z$, for some $Z \in \AdvStructure$. 
       Let $\Honest^{(m^{\star})}$
        be the set of {\it honest} parties whose $\Ready$ messages are received by $P_h$ during step $5$. It is easy to see that
   $\Honest^{(m^{\star})} \not \in \AdvStructure$, as otherwise, $\AdvStructure$ {\it does not} satisfy
   the  $\Q^{(3)}(\PartySet, \AdvStructure)$ condition, which is a contradiction.
    The $\Ready$ messages of the parties in $\Honest^{(m^{\star})}$ are eventually delivered to every honest party
   and hence, {\it each} honest party (including $P_h$) eventually executes step $4$ and sends a
   $\Ready$ message for $m^{\star}$.
   It follows that the $\Ready$ messages of {\it all} the parties in $\PartySet \setminus Z_c$
    are eventually delivered
   to every honest party (irrespective of whether the parties in $Z_c$ send all the required
   messages) guaranteeing that all honest parties eventually obtain {\it some} output. 
   We wish to show that this output is $m^{\star}$.
   
    On the contrary, let
   $P_{h'} \neq P_h$ be an {\it honest} party who outputs
   $m^{\star \star} \neq m^{\star}$. This implies that
   $P_{h'}$ received $\Ready$ message for $m^{\star \star}$ from at least
    one {\it honest} party.
   From the protocol steps, it follows that an {\it honest} party generates a $\Ready$
   message for some potential $m$, only if it either receives $\echo$ messages for $m$ during step 3 
   from a set of parties $\PartySet \setminus Z$ for some $Z \in \AdvStructure$,
   or $\Ready$ messages for $m$ from a set of parties $C \not \in \AdvStructure$ during step 4.
   So, in order that a subset of parties $\PartySet \setminus Z$, for some $Z \in \AdvStructure$, eventually generates
   $\Ready$ messages for some potential $m$ during step 5, it must be
   the case that some {\it honest} party has received $\echo$ messages for $m$ during step 1 from a set of parties $\PartySet \setminus Z'$ for some
   $Z' \in \AdvStructure$, and has generated a
   $\Ready$ message for $m$.
   
    Since $P_h$ received the $\Ready$ message for $m^{\star}$ from at least one honest party, it must be the case that 
   some honest party has received $\echo$ messages for $m^{\star}$ from a set of parties $\PartySet \setminus Z_{\alpha}$,
    for some $Z_{\alpha} \in \AdvStructure$.
   Similarly, since $P_{h'}$ received a $\Ready$ message for $m^{\star \star}$ from at least one honest party, it must be the case that 
   some honest party has received $\echo$ messages for $m^{\star \star}$ from a set of parties 
    $\PartySet \setminus Z_{\beta}$, for some $Z_{\beta} \in \AdvStructure$.
   Let ${\cal T} = (\PartySet \setminus Z_{\alpha}) \cap (\PartySet \setminus Z_{\beta})$. Since $\AdvStructure$
   satisfies the $\Q^{(3)}(\PartySet, \AdvStructure)$ condition, it follows
   that $\AdvStructure$
   satisfies the $\Q^{(1)}({\cal T}, \AdvStructure)$ condition, and hence, ${\cal T}$ is guaranteed to have at least one {\it honest} party. 
   This further implies that there exists some honest party who generated an $\echo$ message for $m^{\star}$ as well
   as $m^{\star \star}$ during step 1, which is impossible. This is because an honest party executes step 1 at most once,
   and hence, generates an $\echo$ message at most once.
      
   The proofs of the properties in a {\it synchronous} network closely follow the proofs of the properties in the {\it asynchronous} network.
   Abusing the notation, let $Z_c \in \AdvStructure$ be the set of {\it corrupt} parties,
   and let $\Honest \defined \PartySet \setminus Z_c$ be the set of {\it honest} parties.
       If $\Sender$ is {\it honest}, then its $\init$ message for $m$ is delivered within time
   $\Delta$. As a result, every
   {\it honest} party sends an $\echo$ message for $m$ to all the parties, which is delivered within time $2\Delta$. 
   Consequently, every
   {\it honest} party sends a $\ready$ message for $m$ to all the parties, which is delivered within time
   $3 \Delta$. Since $\PartySet \setminus \Honest = Z_c \in \AdvStructure$, every honest
   party will receive the $\ready$ messages for $m$ from all the parties in $\Honest$ within time $3 \Delta$ and output $m$.
   This proves the {\it liveness} and {\it validity} in the synchronous network.
   
   If $\Sender$ is {\it corrupt}, and some honest party $P_h$ outputs $m^{\star}$ at time $T$, then it implies that
   $P_h$ has received $\ready$ messages for $m^{\star}$ during step $5$ of the protocol at time $T$ from a set of {\it honest}
   parties $\Honest^{(m^{\star})}$, such that $\Honest^{(m^{\star})} \not \in \AdvStructure$. 
   These ready messages are guaranteed to be received by every other honest party within time $T + \Delta$. Consequently, 
   every {\it honest} party who has not yet executed step $4$ will do so, and will send a $\ready$ message for $m^{\star}$ at time $T + \Delta$.
   Hence, 
   by the end of time $T + \Delta$, every honest party would have sent a $\ready$ message for $m^{\star}$ to every other honest party, which will be
   delivered within time $T + 2\Delta$.
   As a result, every honest party will output $m^{\star}$ latest at time $T + 2 \Delta$. This proves the {\it consistency}
   in the synchronous network.
   
   The communication complexity (both in a synchronous as well as asynchronous network) 
   simply follows from the fact that every honest party may need to send 
   an $\echo$ and  $\ready$ message for $m$ to every other party.
\end{proof}
\subsection{Reconstruction Protocol}
Let  $s \in \K$ be a value, which is secret-shared with respect to the sharing specification 
  $\SharingSpec = \{S_m : S_m = \PartySet \setminus Z_m \; \mbox{ and } \; Z_m \in \Z_s \}$. 
  Then protocol $\Rec$ (Fig \ref{fig:Rec}) allows the parties to publicly reconstruct $s$.
  
  \begin{protocolsplitbox}{$\Rec(s, \SharingSpec)$}{Protocol for publicly reconstructing a secret-shared value}{fig:Rec}
Let $\SharingSpec = (S_1, \ldots, S_m, \ldots, S_q)$, where for $m = 1, \ldots, q$,
 the set $S_m \defined \PartySet \setminus Z_m$ and where $\Z_s = \{Z_1, \ldots, Z_q \}$
    is the {\it synchronous} adversary structure. 
   \begin{myitemize}  
     \item[--] Corresponding to every $S_m \in \SharingSpec$, each $P_i \in S_m$ sends the share $[s]_m$ to all the parties 
     in $\PartySet$.
     \item[--] Each $P_i \in \PartySet$ computes its output as follows.
       \begin{myitemize}
        \item[--] Corresponding to every $S_m \in \SharingSpec$, wait till a common value, say $s^{(m)}$, is received from a subset
        of parties $S_m \setminus Z$, for some $Z \in \Z_s$. 
        \item[--] Output $s = s^{(1)} + \ldots + s^{(m)}$.
       \end{myitemize}
    \end{myitemize}
\end{protocolsplitbox}

We next prove the properties of the protocol $\Rec$.
\begin{lemma}
\label{lemma:Rec}
Let  $s \in \K$ be a value, which is secret-shared with respect to the sharing specification 
  $\SharingSpec = \{S_m : S_m = \PartySet \setminus Z_m \; \mbox{ and } \; Z_m \in \Z_s \}$. 
  Then protocol $\Rec$ achieves the following with a communication complexity of
  $\Order(|\Z_s| \cdot n^2 \log{|\K|})$ bits.
   \begin{myitemize}
   \item[--] {\bf Synchronous network}: All honest parties output $s$ within time $\Delta$.   
   \item[--] {\bf Asynchronous network}: All honest parties eventually output $s$.
   \end{myitemize}
\end{lemma}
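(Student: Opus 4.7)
\medskip
\noindent\textbf{Proof proposal for Lemma \ref{lemma:Rec}.}
The plan is to argue share-by-share that every honest party recovers the correct summand $[s]_m$ for each $S_m \in \SharingSpec$, after which summing gives $s$ by linearity of the sharing. The starting observation is that, by the definition of the secret-sharing, every honest party in $S_m$ holds the \emph{same} value $[s]_m$; consequently, all honest parties in $S_m$ send the identical value in the reconstruction round. The only freedom the adversary has is in the behaviour of the parties it corrupts, either within $Z^{\star} \in \Z_s$ (synchronous case) or within $Z^{\star} \in \Z_a$ (asynchronous case).

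For the \emph{synchronous} case, within time $\Delta$ every message sent by an honest party is delivered. Fix any $S_m$ and let $Z^{\star} \in \Z_s$ be the corrupt set. The honest parties of $S_m$ form $S_m \setminus Z^{\star}$, which is of the form $S_m \setminus Z$ for some $Z \in \Z_s$, and they all report the value $[s]_m$; hence each honest receiver will find at least one admissible set and terminate within time $\Delta$. For \emph{safety}, suppose some honest party were to accept a wrong value $s' \neq [s]_m$ from a set $S_m \setminus Z$ with $Z \in \Z_s$. Then every party in $S_m \setminus Z$ that sent $s'$ must be corrupt, giving $S_m \setminus Z \subseteq Z^{\star}$, i.e. $\PartySet \subseteq Z_m \cup Z \cup Z^{\star}$. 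Since $Z_m, Z, Z^{\star}$ are three sets in $\Z_s$, this contradicts the $\Q^{(3)}(\PartySet,\Z_s)$ condition implied by the conditions $\Con$. So the only admissible common value is $[s]_m$.

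For the \emph{asynchronous} case, the adversary corrupts some $Z^{\star} \in \Z_a$. Eventually every message from each honest party in $S_m$ is delivered. By conditions $\Con$ there exists $Z' \in \Z_s$ with $Z^{\star} \subseteq Z'$, and so $S_m \setminus Z' \subseteq S_m \setminus Z^{\star}$ is eventually contained in the set of parties who have sent $[s]_m$; hence the waiting condition is eventually met and liveness holds. For safety, suppose an honest party accepts some $s' \neq [s]_m$ from a set $S_m \setminus Z$ with $Z \in \Z_s$; as before $S_m \setminus Z \subseteq Z^{\star}$, giving $\PartySet \subseteq Z_m \cup Z \cup Z^{\star}$. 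Here $Z_m, Z \in \Z_s$ and $Z^{\star} \in \Z_a$, which violates $\Q^{(3,1)}(\PartySet,\Z_s,\Z_a)$ (since adversary structures are downward closed and $\Q^{(3,1)}$ implies $\Q^{(2,1)}$ by repeating a set from $\Z_s$). Thus only $[s]_m$ can be accepted.

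Summing $[s]_1 + \cdots + [s]_q$ then yields $s$ in both network types. The communication bound is immediate: corresponding to each $S_m \in \SharingSpec$, at most $|S_m| \leq n$ parties send one element of $\K$ to each of $n$ parties, so the total is $\Order(|\Z_s| \cdot n^2 \log|\K|)$ bits. The only non-routine step is the safety argument, where one must be careful in each network setting to invoke exactly the right variant of the $\Q^{(\cdot)}$ condition (three sets from $\Z_s$ in the synchronous case; two from $\Z_s$ and one from $\Z_a$ in the asynchronous case), but both follow directly from conditions $\Con$; I anticipate no deeper obstacle.
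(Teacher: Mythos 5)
Your proof is correct and takes essentially the same approach as the paper's: liveness because the honest parties in $S_m$ themselves constitute an admissible set $S_m \setminus Z$ with $Z \in \Z_s$, and safety because any admissible set must intersect the honest parties by a $\Q^{(\cdot)}$ condition, forcing the accepted value to be $[s]_m$. The paper merely streamlines the asynchronous case by noting (via conditions $\Con$) that the corrupt set is contained in some $Z^{\star} \in \Z_s$ and then invoking $\Q^{(3)}(\PartySet, \Z_s)$ throughout, whereas you keep $Z^{\star} \in \Z_a$ and derive the needed $\Q^{(2,1)}$ from $\Q^{(3,1)}$ by repetition (which, as you note, works directly and does not actually require downward closure).
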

\begin{proof}
The communication complexity follows easily from the protocol steps, since corresponding to each $S_m \in \SharingSpec$,
  {\it every} party in $S_m$ needs to send its version of the share $[s]_m$ to {\it all} the parties in $\PartySet$.
 
 Next, consider an {\it asynchronous} network and let $Z^{\star} \in \Z_s$ be the set of {\it corrupt} parties. Consider an {\it arbitrary} 
  $S_m \in \SharingSpec$ and an {\it arbitrary honest} party $P_i$. We show that $P_i$ eventually sets $s^{(m)}$ to $[s]_m$.
  This will imply that $P_i$ eventually outputs $s$. Let $\Honest_m \defined S_m \setminus Z^{\star}$ be the set of
  {\it honest} parties in $S_m$. All the parties in $\Honest_m$ send $[s]_m$ to $P_i$, which are eventually delivered to $P_i$.
  Moreover, $S_m \setminus \Honest_m \in \Z_s$. Hence, it is confirmed that $P_i$ eventually sets
  $s^{(m)}$ to some value. To complete the proof, we need to show that this values will be the same as $[s]_m$.
  So, let there exist some $Z_{\alpha} \in \Z_s$, such that $P_i$ received a common value from all the parties in $S_m \setminus Z_{\alpha}$,
  and sets $s^{(m)}$ to this common value. Since $\Z_s$ satisfies the $\Q^{(3)}(\PartySet, \Z_s)$ condition, it follows that
  $\Honest_m \cap (S_m \setminus Z_{\alpha}) \neq \emptyset$. Thus there exists at least one {\it honest} party in the set
  $S_m \setminus Z_{\alpha}$, who would have sent $[s]_m$ to $P_i$, implying that $P_i$ sets $s^{(m)}$ to $[s]_m$.
  
  The proof of the lemma statement in a {\it synchronous} network is exactly the same as above. Moreover, all honest parties will output
  $s$ within time $\Delta$, since the shares of all honest parties will be delivered within time $\Delta$.
\end{proof}
\subsection{Beaver's Protocol}
  Let $u$ and $v$ be two values from $\K$, which are secret-shared with respect to the sharing specification
  $\SharingSpec = \{S_m : S_m = \PartySet \setminus Z_m \; \mbox{ and } \; Z_m \in \Z_s \}$.
  Moreover, let $(a, b, c) \in \K^3$ be a multiplication-triple, where $a, b$ and $c$ are secret-shared with respect to
  $\SharingSpec$ and where $c = a \cdot b$.  Then, Beaver's protocol for securely computing $[u \cdot v]$, is given in Fig \ref{fig:Beaver}.
\begin{protocolsplitbox}{$\BatchBeaver(([u],[v]), ([a],[b],[c]))$}{Beaver's protocol for securely computing a secret-sharing of the product of
  two secret-shared values}{fig:Beaver}
	\justify
    \begin{myenumerate}
    \item[--] The parties in $\Partyset$ locally compute $[d]$ and $[e]$, where $[d] =
        [u] - [a] = [u - a]$ and $[e] = [v] - [b] = [v - b]$.
    \item[--] The parties publicly reconstruct $d$ and $e$ by executing $\Rec(d, \SharingSpec)$ and
        $\Rec(e, \SharingSpec)$.
    \item[--] The parties in $\Partyset$ locally compute $[u\cdot v] = d\cdot e +
        d\cdot[b] + e\cdot[a] + [c]$ and output $[u \cdot v]$.
    \end{myenumerate}
\end{protocolsplitbox}
The properties of the protocol $\BatchBeaver$ are stated in Lemma \ref{lemma:BatchBeaver}.
\begin{lemma} \label{lemma:BatchBeaver}
  Let $\Adv$ be an adversary, characterized by an adversary structure $\Z_s$ in a synchronous network and adversary structure $\Z_a$ in an asynchronous network, satisfying the conditions $\Con$ (see Condition \ref{condition:Con} in Section \ref{sec:prelims}).
  Moreover, let $u$ and $v$ be two values from $\K$, which are secret-shared with respect to the sharing specification
  $\SharingSpec = \{S_m : S_m = \PartySet \setminus Z_m \; \mbox{ and } \; Z_m \in \Z_s \}$.
  Furthermore, let $(a, b, c) \in \K^3$ be a multiplication-triple, where $a, b$ and $c$ are secret-shared with respect to
  $\SharingSpec$ and where $c = a \cdot b$. 
   Then protocol $\BatchBeaver$
    achieves the following.
    \begin{myenumerate}
        \item {\bf $\Z_s$-correctness}: In a synchronous network, all honest parties output $[w]$ within time $\Delta$, where
        $w = u \cdot v$.
        \item {\bf $\Z_a$-Correctness}:  In an asynchronous network, 
        all honest parties eventually output $[w]$, where
        $w = u \cdot v$.
        \item {\bf Privacy}: If $(a, b, c)$ is random from the point of view of the adversary, then 
        the view of $\Adv$ is distributed independent
            of $u$ and $v$, irrespective of the network type.
        \item {\bf Communication Complexity:} The protocol incurs a
            communication of $\Order(|\Z_s| \cdot n^2 \log{|\K|})$ bits.
                \end{myenumerate}
\end{lemma}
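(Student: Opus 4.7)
The plan is to verify each of the four bullets by leveraging the linearity of the secret-sharing and the guarantees of the reconstruction protocol $\Rec$ established in Lemma~\ref{lemma:Rec}, together with the randomness of the triple $(a,b,c)$.

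For \textbf{$\Z_s$-correctness} and \textbf{$\Z_a$-correctness}, I would first observe that since the secret-sharing is linear with respect to $\SharingSpec$, the shares $[d] = [u] - [a]$ and $[e] = [v] - [b]$ are computed non-interactively and correctly satisfy $d = u - a$ and $e = v - b$. By Lemma~\ref{lemma:Rec}, invoking $\Rec(d, \SharingSpec)$ and $\Rec(e, \SharingSpec)$ lets all honest parties reconstruct $d$ and $e$ within time $\Delta$ in a synchronous network, and eventually in an asynchronous network (the two instances can be run in parallel, so the time/eventuality bounds do not compound). Next, I would use Beaver's identity, namely $u \cdot v = (d+a)(e+b) = d\cdot e + d\cdot b + e\cdot a + a\cdot b = d\cdot e + d\cdot b + e\cdot a + c$, so that the public constants $d, e$ together with the shared values $[a], [b], [c]$ yield $[u\cdot v] = d\cdot e + d\cdot[b] + e\cdot[a] + [c]$ by linearity of $\SharingSpec$. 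Since all steps after the reconstructions are local, the timing guarantees of $\Rec$ carry over verbatim.

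For \textbf{privacy}, the key observation is that by hypothesis $(a,b,c)$ is distributed uniformly at random (subject to $c = a\cdot b$) from the adversary's view. Since $a$ and $b$ are individually uniform and independent of $u$ and $v$, the values $d = u - a$ and $e = v - b$ revealed by $\Rec$ are uniformly random in $\K$ and statistically independent of $u$ and $v$. Hence, conditioning on the adversary's view during the two $\Rec$ executions leaves the distribution of $(u,v)$ unchanged. This argument is independent of the network type, since $\Rec$ only reveals the reconstructed value to all parties and provides no additional information beyond $d$ and $e$ themselves. The one subtlety I would be careful about is to note that the triple $(a,b,c)$ is ``consumed'' in this step, so the randomness assumption is exactly what is needed; no fresh randomness is required.

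For the \textbf{communication complexity}, the only interactive steps are the two instances of $\Rec$, each costing $\Order(|\Z_s| \cdot n^2 \log |\K|)$ bits by Lemma~\ref{lemma:Rec}, giving the claimed total. The rest of the protocol is purely local. I do not anticipate a serious obstacle here: the entire proof is a direct composition of Lemma~\ref{lemma:Rec} with the algebraic identity underlying Beaver's method, and the only thing to double-check is that the linearity of $\SharingSpec$ is applied correctly for mixed operations involving publicly known scalars (like $d\cdot [b]$ and $e\cdot [a]$), which is immediate from the sharing definition in Section~\ref{sec:prelims}.
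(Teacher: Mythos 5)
Your proof is correct and follows essentially the same route as the paper's: it invokes linearity of the sharing scheme for the local steps, Lemma~\ref{lemma:Rec} for the timing and eventuality guarantees of the two reconstructions (which dominate the communication cost), the Beaver identity $u\cdot v = de + d b + e a + c$, and the standard masking argument that $d = u-a$ and $e = v-b$ reveal nothing about $u,v$ when $a,b$ are uniformly random and unknown to the adversary. No gaps.
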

\begin{proof}
Note that $u \cdot v = ((u - a) + a) \cdot ((v - b) + b) = de + db + ea + c$ holds, 
 where $d = u - a$ and $e = v - b$. From the {\it linearity} property of the secret-sharing,
  the parties will be able to locally compute $[d]$ and $[e]$. Moreover, from the properties of $\Rec$,
   all honest parties will be able to compute $d$ and $e$ in a {\it synchronous} network, within time
   $\Delta$. On the other hand, in an {\it asynchronous} network, the honest parties will eventually compute
   $d$ and $e$. The $\Z_s$-correctness and $\Z_a$-correctness now follow easily. 
   
   The privacy
is argued as follows: the only step where the parties communicate is during the
reconstruction of $d$ and $e$. As $d = u - a$, if $a$ is random for 
$\Adv$, then even after learning $d$, in the view of $\Adv$, the value of $u$ remains as secure as it was before. Similarly, if $b$ is random, then even after learning $e$, in the view of $\Adv$, the
value $v$ remains as secure as before. 

The communication complexity follows from the communication
complexity of $\Rec$. \end{proof}

\section{Synchronous BA with Asynchronous Guarantees}
\label{app:SBA}
Let $\AdvStructure$ be an adversary structure
 satisfying the $\Q^{(3)}(\PartySet, \AdvStructure)$
 condition. We present a protocol $\SBA$ (Fig \ref{fig:SBA}) which is $\AdvStructure$-perfectly-secure
  in a {\it synchronous} network, and which has $\AdvStructure$-guaranteed liveness in an {\it asynchronous} network. 
   The protocol is obtained by generalizing the SBA protocol of \cite{BGP89} which was designed against {\it threshold} adversaries
   to tolerate $t < n/3$ corruptions.
   
   Let $\Kings \subseteq \Partyset$ denote a predetermined set of publicly-known parties, such that
   $\Kings \not\in \AdvStructure$. Such a set $\Kings$ always exists, as the set $\PartySet$ trivially constitutes a
   candidate $\Kings$ set. Protocol $\SBA$ proceeds in phases,
   where in each phase, a {\it unique} party from $\Kings$ is publicly designated as the {\it king}.
   Hence, the number of phases is $|\Kings|$. The general idea behind the protocol is very simple.
   In each phase, 
   the parties attempt to check if {\it all honest} parties have the {\it same} bit.
   Then, depending upon the answer, the parties do one of the following.   
   \begin{myitemize}
    \item[{\bf (1)}:] If all honest parties are found to hold a common bit, then the parties ``stick" to that bit for all subsequent phases,
    {\it irrespective} of the king parties.
   \item[{\bf (2)}:] Else, the parties take the ``help" of the king-party of this phase
    so that if the king-party is {\it honest}, then, at the end of this phase, all honest parties have the same bit
   \end{myitemize}
   If all the honest parties start the protocol with a 
  {\it common} input bit, say $b$, then throughout the protocol, the parties retain the bit $b$, and finally output $b$, thus ensuring
   {\it validity}.
  On the other hand, {\it consistency} is guaranteed because there will be {\it at least} one phase where the corresponding 
  king-party is guaranteed to be {\it honest}, and hence, at the end of that phase, all honest parties will have a common bit, which
  they retain till the end of the protocol.    
   \begin{protocolsplitbox}{$\SBA$}{A perfectly-secure SBA protocol with asynchronous guaranteed liveness.
    The above code is executed by every $P_i \in \PartySet$}{fig:SBA}
  \justify
  Let $\Kings \subseteq \Partyset$ be a publicly-known predetermined set of king-parties, such that
 $\Kings \not\in \AdvStructure$. 
  For simplicity and without loss of generality, let $\Kings = \{P_1, \ldots, P_{|\Kings|} \}$.
  \begin{myitemize}
  \item On having the input $b_i$, initialize the {\it preference bit} $\pref_i = b_i$.
  \item For $k = 1, \ldots, |\Kings|$, do the following in phase $k$, where $P_{k} \in \Kings$ is the predetermined
   designated king-party for phase $k$:
    \begin{myenumerate}
    \item Send the current preference bit $\pref_i$ to every party in $\Partyset$.
    \item If some preference bit $b$ is received from a set of parties $\Partyset \setminus Z$ for some $Z \in \AdvStructure$, then send $(\mathsf{propose}, b)$ to every party in $\Partyset$.
      \begin{myitemize}
      \item[--] If $(\mathsf{propose}, b)$ is received from some set of parties $C \not\in \AdvStructure$, then update the preference bit $\pref_i = b$
      \end{myitemize}
    \item If $P_i = P_{k}$, then send $(\mathsf{king}, k, \pref_k)$ to every party in
      $\Partyset$.
    \item Setting the preference bit for the next phase:
      \begin{myitemize}
      \item[--] If, during step 2, 
       $(\mathsf{propose}, b)$ was received from a set of parties $\Partyset \setminus Z$ for some $Z \in \AdvStructure$, then set $\pref_i = b$.
      \item[--] Else, set $\pref_i = \pref_k$, where $(\mathsf{king}, k, \pref_k)$
       was received from the king-party during step 3.
      \end{myitemize}
    \end{myenumerate}
    \item Output $\pref_i$.
  \end{myitemize}
\end{protocolsplitbox}
\justify
Before proving the properties of $\SBA$, we prove some helping lemmas which will be useful later.
\begin{lemma}
  \label{lem:sba-same-bit}
  In a synchronous network,
  if all honest parties hold the same preference bit $b$ at the beginning of a phase $k$ in $\SBA$, then
  they retain the same bit $b$ at the end of phase $k$.
\end{lemma}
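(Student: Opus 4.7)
The plan is to trace the protocol step-by-step through phase $k$, using only the $\Q^{(2)}(\PartySet, \AdvStructure)$ condition (which is implied by $\Q^{(3)}(\PartySet, \AdvStructure)$). Let $Z_c \in \AdvStructure$ denote the set of corrupt parties and $\Honest \defined \PartySet \setminus Z_c$ denote the set of honest parties. The key observation driving every step is that $\Honest \not\in \AdvStructure$, since otherwise $Z_c \cup \Honest = \PartySet$ with $Z_c, \Honest \in \AdvStructure$ would violate $\Q^{(2)}(\PartySet, \AdvStructure)$.

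First, I would analyse step 1: by hypothesis all honest parties start phase $k$ with preference $b$, so every honest $P_i$ receives $b$ from at least the entire set $\Honest$ within time $\Delta$. Since $\PartySet \setminus \Honest = Z_c \in \AdvStructure$, the trigger in step 2 is satisfied with witness bit $b$, and every honest $P_i$ sends $(\mathsf{propose}, b)$. Next, I would rule out that any honest party broadcasts $(\mathsf{propose}, b')$ for $b' \neq b$: the only parties that could send $b'$ to $P_i$ are corrupt parties, so the set of senders of $b'$ to $P_i$ is a subset of $Z_c$, whose complement contains $\Honest$ and is therefore not in $\AdvStructure$; hence the step~2 trigger cannot fire for any bit other than $b$.

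Then I would analyse the sub-step of step 2 where preferences may be overwritten. Every honest $P_i$ receives $(\mathsf{propose}, b)$ from every party in $\Honest$, and $\Honest \not\in \AdvStructure$, so the condition ``$C \not\in \AdvStructure$'' is satisfied with $C = \Honest$ and the (possibly redundant) update $\pref_i = b$ is performed. Conversely, any set $C$ of parties sending $(\mathsf{propose}, b')$ with $b' \neq b$ consists only of corrupt parties, hence $C \subseteq Z_c \in \AdvStructure$, so no honest party can be tricked into switching to $b'$. Finally, for step 4 (setting the next-phase preference), the same $\Q^{(2)}$-argument applies: $(\mathsf{propose}, b)$ has been received by every honest $P_i$ from the set $\Honest = \PartySet \setminus Z_c$, so the first branch of step 4 fires with bit $b$, and this branch takes priority over the king's message. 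Thus $\pref_i = b$ at the end of phase $k$, irrespective of whether the king $P_k$ is corrupt and what it sent in step 3.

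The only real obstacle is bookkeeping: the protocol has two places where a preference can change (the sub-step of step 2 and step 4), and in each place both a ``received from $\PartySet \setminus Z$'' trigger and a ``received from $C \not\in \AdvStructure$'' trigger must be handled. The proof reduces uniformly to the single structural fact that the senders of any bit $b' \neq b$ among honest parties are empty, so the set of senders of $b'$ to any honest party lies inside $Z_c$, while the set of senders of $b$ contains $\Honest$; coupling this with $Z_c \in \AdvStructure$ and $\Honest \not\in \AdvStructure$ makes every case immediate. Synchrony is used only to guarantee that all honest messages of a given step are delivered before the next step is executed.
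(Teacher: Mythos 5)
Your proof is correct and follows essentially the same route as the paper's: introduce $Z_c$ and $\Honest$, use $\Honest \not\in \AdvStructure$ to conclude that every honest party proposes $b$ and that no step-2 or step-4 trigger can fire for any bit other than $b$. You trace steps 2(b) and 4 more explicitly than the paper (which compresses them into ``due to the same reasons as above'') and note that only the weaker $\Q^{(2)}$ condition is actually invoked, but the substance of the argument is identical.
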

\begin{proof}
  Let $Z_c \in \AdvStructure$ be the set of corrupt parties and $\Honest \defined \PartySet \setminus Z_c$ be
   the set of {\it honest} parties. Since every party in $\Honest$ holds
  the same preference bit $b$ at the beginning of phase $k$, they will all send $b$ to each
  party in step 1 of phase $k$.

  Thus, each party in $\Honest$ receives the bit $b$ from the set of parties $\Honest$, where
  $\Honest = \PartySet \setminus Z_c$.  
  The parties in $Z_c$ may send the bit $\bar{b}$ as their preference bit.
  However, since $\Z$ satisfies the $\Q^{(3)}(\PartySet, \AdvStructure)$ condition,
   $\Honest \not \in \AdvStructure$. Consequently, each party in $\Honest$ 
    sends $(\mathsf{propose}, b)$ to each party in
   step 2 of phase $k$.
   
  Thus, each party in $\Honest$ receives $(\mathsf{propose}, b)$ from the set of parties
  $\Honest$ during step 2 of phase $k$. Now due to the same reasons as above, 
  it follows that every party in $\Honest$ sets $b$ as their preference bit at the end of phase $k$.
\end{proof}
\begin{lemma}
  \label{lem:sba-honest-propose}
  In a synchronous network, if any honest party $P_i \in \Partyset$ proposes a bit $b$
   in step 2 of phase $k$ in $\SBA$, then no other honest party
  $P_j \in \Partyset$ proposes a bit $\bar{b}$.
\end{lemma}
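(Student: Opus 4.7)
The plan is to proceed by contradiction, exploiting the $\Q^{(3)}(\PartySet, \AdvStructure)$ condition together with the fact that in step 1 of phase $k$ each honest party sends a single preference bit to all parties. Suppose, for contradiction, that two honest parties $P_i$ and $P_j$ propose bits $b$ and $\bar{b}$ respectively in step 2 of phase $k$.

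By the rule for step 2, $P_i$ must have received the bit $b$ during step 1 from every party in some set $\PartySet \setminus Z_1$, with $Z_1 \in \AdvStructure$, and similarly $P_j$ must have received the bit $\bar{b}$ from every party in some $\PartySet \setminus Z_2$, with $Z_2 \in \AdvStructure$. Let $Z_c \in \AdvStructure$ denote the set of corrupt parties. Since $\AdvStructure$ satisfies the $\Q^{(3)}(\PartySet, \AdvStructure)$ condition, the three sets $Z_1, Z_2, Z_c$ cannot together cover $\PartySet$; hence there exists a party
\[
P_k \in \PartySet \setminus (Z_1 \cup Z_2 \cup Z_c).
\]

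Such a $P_k$ is honest (since $P_k \notin Z_c$) and belongs to both $\PartySet \setminus Z_1$ and $\PartySet \setminus Z_2$. Consequently, $P_k$ must have sent the bit $b$ to $P_i$ and the bit $\bar{b}$ to $P_j$ in step 1 of phase $k$. However, in a synchronous network, the step 1 action of an honest party is to send its \emph{current} preference bit $\pref_k$ to every party, so $P_k$ sends the same bit to both $P_i$ and $P_j$, contradicting $b \neq \bar{b}$. Therefore no two honest parties can propose opposite bits in phase $k$.

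I do not anticipate any serious obstacle: the key move is simply the intersection argument enabled by $\Q^{(3)}$, combined with the observation that honest parties send a single bit in step 1. The only subtlety worth flagging is that the lemma is stated for a \emph{synchronous} network, which is exactly what ensures that $P_k$'s step-1 message is delivered to both $P_i$ and $P_j$ within the phase (so the ``received from $\PartySet \setminus Z$'' condition can only be satisfied by honest parties whose actual sent bit is visible), preventing any asynchronous-delivery loophole that might otherwise complicate the argument.
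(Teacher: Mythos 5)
Your argument is correct and follows essentially the same route as the paper: a contradiction via the $\Q^{(3)}(\PartySet, \AdvStructure)$ condition, locating an honest party in the intersection of the two ``received-from'' sets and the set of honest parties, and then observing that this honest party would have had to send different bits to $P_i$ and $P_j$ in step 1, which is impossible. The paper phrases the intersection step as $\AdvStructure$ satisfying $\Q^{(1)}(\mathcal{T}, \AdvStructure)$ for $\mathcal{T} = (\PartySet \setminus Z_\alpha) \cap (\PartySet \setminus Z_\beta)$, but this is equivalent to your direct application of the $\Q^{(3)}$ condition to $Z_1, Z_2, Z_c$.
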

\begin{proof}
  We prove the lemma through a contradiction. So, let there exist two honest parties $P_i, P_j \in \Partyset$, such that 
  $P_i$ sends $(\mathsf{propose}, b)$, and $P_j$ sends $(\mathsf{propose}, \bar{b})$ during 
  step $2$ of phase $k$. This implies that
   $P_i$ must have received the preference bit $b$ during step $1$ of
  phase $k$ from a set of parties $\Partyset \setminus Z_{\alpha}$, for some $Z_{\alpha} \in \AdvStructure$.
  Similarly, $P_j$ must have received the preference bit $\bar{b}$ during step 1 of phase $k$
  from a set of parties $\Partyset \setminus Z_{\beta}$, for some $Z_{\beta} \in \AdvStructure$. Let
  $\mathcal{T} = (\Partyset \setminus Z_{\alpha}) \cap (\Partyset \setminus Z_{\beta})$. Since
  $\AdvStructure$ satisfies the $Q^{(3)}(\Partyset, \AdvStructure)$ condition, it follows that
  $\AdvStructure$ satisfies the $Q^{(1)}(\mathcal{T}, \AdvStructure)$ condition, and hence, there
  exists at least one {\it honest} party in $\mathcal{T}$, say $P_k$. 
  Hence, $P_k$ must
  have sent $b$ as its 
  preference bit to $P_i$ and $\bar{b}$ as its preference bit to $P_j$ during step $1$ of  
  phase $k$, which is impossible.
\end{proof}
\begin{lemma}
  \label{lem:sba-honest-king}
  In a synchronous network, for any phase $k$, if the designated king-party $P_{k}$ is honest, 
   then all honest parties have the
  same preference bit at the end of phase $k$.
\end{lemma}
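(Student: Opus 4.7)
The plan is to analyze step 5 of phase $k$ by splitting into two scenarios depending on whether any honest party updates its preference bit via the first sub-case of step 5, i.e., receives $(\mathsf{propose}, b)$ from a set $\PartySet \setminus Z$ for some $Z \in \AdvStructure$. In the trivial scenario, when no honest party uses the first sub-case, every honest party sets its preference bit to $\pref_k$, and since $P_k$ is honest and transmits identical messages to all parties within time $\Delta$, all honest parties end phase $k$ with the same preference bit.

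The substantive scenario is when some honest $P_i$ uses the first sub-case and sets $\pref_i = b$. I plan to establish two claims: (i) every other honest party that also uses the first sub-case sets its preference to the same $b$, and (ii) every honest party that falls through to the king's message also sets its preference to $b$. Claim (i) reduces to Lemma~\ref{lem:sba-honest-propose}: a consequence of $\Q^{(3)}$ (namely $\Q^{(2)}$) guarantees that the honest senders among $(\PartySet \setminus Z) \setminus Z_c$ are non-empty, and likewise for the set of senders seen by any other honest $P_{i'}$ using the first sub-case; so at least one honest party actually proposed $b$ and at least one proposed $b'$, forcing $b = b'$ by Lemma~\ref{lem:sba-honest-propose}.

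For claim (ii), the key is to show the honest king updates $\pref_k = b$ in step 2 of phase $k$. Because the honest parties in $(\PartySet \setminus Z) \setminus Z_c = \PartySet \setminus (Z \cup Z_c)$ sent $(\mathsf{propose}, b)$ to all parties, $P_k$ received $(\mathsf{propose}, b)$ from at least this set; if this set lay in $\AdvStructure$, then $\PartySet = Z \cup Z_c \cup (\PartySet \setminus (Z \cup Z_c))$ would be a union of three $\AdvStructure$-sets, violating the $\Q^{(3)}(\PartySet, \AdvStructure)$ condition. Hence the condition for the step~2 update is met for $P_k$. Moreover, no honest party proposed $\bar{b}$ (by Lemma~\ref{lem:sba-honest-propose}), so every party sending $(\mathsf{propose}, \bar{b})$ to $P_k$ is corrupt, and these form a subset of $Z_c \in \AdvStructure$; the king therefore cannot simultaneously update to $\bar{b}$. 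Consequently $P_k$ broadcasts $(\king, k, b)$ in step 3, and every honest party falling through to the king's message ends phase $k$ with $b$.

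The main technical point, and the only non-routine step, is the three-set covering argument just used: it is what forces the honest king to witness the same propose-$b$ evidence that $P_i$ witnessed, even though the witnesses $Z$ and $Z_c$ may differ, and it is where $\Q^{(3)}$ is genuinely needed rather than just $\Q^{(2)}$. With this in hand, scenarios A and B together show that all honest parties end phase $k$ with a common preference bit.
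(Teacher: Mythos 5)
Your proof is correct and follows essentially the same case split and argument as the paper: the trivial case where all honest parties fall through to the king, and the non-trivial case where some honest $P_i$ retains a proposed $b$, in which the decisive step is showing $\PartySet \setminus (Z \cup Z_c) \not\in \AdvStructure$ via $\Q^{(3)}$ so that the honest king's step-2 update is triggered for $b$ (and only $b$). The one slight refinement you add, making claim (i) explicit via Lemma~\ref{lem:sba-honest-propose} rather than leaving it implicit in the choice of an arbitrary $P_i$, is a clean bonus; the reference to ``step 5'' should read ``step 4'' per the protocol numbering.
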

\begin{proof}
  Since $P_{k}$ is {\it honest}, it sends the {\it same} $(\mathsf{king}, k, \pref_k)$ message to {\it all} the 
  parties during the step $3$ of phase $k$. 
    To prove the lemma, we consider two cases that are possible for the honest parties
   during step 4 of phase $k$:
  \begin{myitemize}
  \item {\bf Case I --- All honest parties set $\pref_k$ as their preference bit}:
   In this case, the lemma holds trivially.
  \item {\bf Case II --- Some honest $P_i$ does not set $\pref_k$ as its preference bit}:
    In this case, during step $4$, party $P_i$ must have set its preference bit to $b$, where
    the bit $b$ is proposed to $P_i$ during step $2$ by a set of
    parties $\Partyset \setminus Z$, for some $Z \in \AdvStructure$.
    Then, consider the set $(\Partyset \setminus Z) \setminus Z_c = \PartySet \setminus (Z \cup Z_c)$,
    where $Z_c \in \AdvStructure$ is the set of {\it corrupt}
    parties. The set $\Partyset \setminus (Z \cup Z_c)$ is {\it non-empty}, consisting of {\it only
    honest} parties and 
    $\Partyset \setminus (Z \cup Z_c) \not \in \AdvStructure$. This is because
    the set $\AdvStructure$ satisfies the $\Q^{(3)}(\PartySet, \AdvStructure)$ condition.
    
    Now, during step $2$ of phase $k$,
    party $P_{k}$ will receive the $(\mathsf{propose}, b)$ message from 
    all the parties in  $\Partyset \setminus (Z \cup Z_c)$. Moreover,  
    from Lemma \ref{lem:sba-honest-propose}, it follows that any honest party
    {\it outside} $\Partyset \setminus (Z \cup Z_c)$ will {\it never} send
    a $(\mathsf{propose}, \bar{b})$ message to $P_{k}$ during step $2$ of phase $k$.
    Furthermore, even though the parties in $Z_c$ may send a $(\mathsf{propose}, \bar{b})$ message to $P_{k}$ during 
    step $2$ of phase $k$, these messages {\it will not} be considered by $P_{k}$ to determine $\pref_k$, since
    $Z_c \in \AdvStructure$.
    Based on the above arguments, it follows that $P_{k}$ would set $\pref_k = b$ at the end of step $2$ of phase $k$.
    Since $P_{k}$ is {\it honest}, it sends $b$ as $\pref_k$ during the step $3$ of phase $k$. 
    Thus, even in this case, every honest
    party sets the same preference at the end of phase $k$.
  \end{myitemize}
\end{proof}
\justify
We now prove the properties of the protocol $\SBA$ in a {\it synchronous} network.
\begin{lemma}
  \label{lem:sba-proof}
  Let $\Adv$ be an adversary characterised by an adversary structure $\AdvStructure$,
  satisfying the $Q^{(3)}(\Partyset, \AdvStructure)$ condition, such that $\Adv$ can corrupt
  any subset of parties from $\AdvStructure$ during the execution of $\SBA$. Then, $\SBA$
  achieves the following in a synchronous network.
  \begin{myitemize}
  \item[--] {\bf $\AdvStructure$-Liveness}: All honest parties obtain an output within time $3n \cdot \Delta$. 
  \item[--] {\bf $\AdvStructure$-Validity}: If all honest parties have input $b$, then they output $b$.
  \item[--] {\bf $\AdvStructure$-Consistency}: All honest parties output the same value.
  \item[--] {\bf Communication Complexity}: The protocol incurs a communication of
  $\mathcal{O}(n^3)$ bits.\footnote{If the inputs of the parties are of size $\ell$ bits, then
   the parties can run $\ell$ independent instances of the protocol, one corresponding to each
   bit of their input. This will incur a communication complexity of $\Order(n^3 \ell)$ bits.}
  \end{myitemize}
\end{lemma}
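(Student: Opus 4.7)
The plan is to derive each of the four claims directly from the helping lemmas already established (Lemmas~\ref{lem:sba-same-bit}, \ref{lem:sba-honest-propose}, and \ref{lem:sba-honest-king}), together with the structural property that $\Kings \not\in \AdvStructure$.

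For \emph{liveness}, I would simply count rounds. Each phase consists of three communication rounds (the exchange of preference bits, the exchange of propose-messages, and the king's broadcast), and each round takes at most $\Delta$ time in a synchronous network. Since the number of phases is $|\Kings| \le n$, every honest party finishes all $|\Kings|$ phases and outputs its final preference within time $3 \cdot |\Kings| \cdot \Delta \le 3n \cdot \Delta$.

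For \emph{validity}, suppose every honest party starts with the same input $b$, so every honest party's preference bit equals $b$ at the beginning of phase~1. A direct induction on the phase index, using Lemma~\ref{lem:sba-same-bit} as the inductive step, shows that at the end of each phase all honest parties still hold the preference bit $b$. Hence all honest parties output $b$ at the end of phase $|\Kings|$.

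For \emph{consistency}, I would invoke the assumption $\Kings \not\in \AdvStructure$: since the corrupt set $Z_c$ lies in $\AdvStructure$, we cannot have $\Kings \subseteq Z_c$, so at least one designated king-party $P_{k^\star}$ is honest. By Lemma~\ref{lem:sba-honest-king}, at the end of phase $k^\star$ all honest parties hold a common preference bit. Applying Lemma~\ref{lem:sba-same-bit} inductively to each subsequent phase $k^\star+1, \ldots, |\Kings|$ then shows that all honest parties retain this common bit and output it at the end of the protocol. The main (minor) subtlety here is simply recognizing that Lemma~\ref{lem:sba-same-bit} can be chained across the remaining phases regardless of whether the later kings are honest.

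For \emph{communication complexity}, I would account for each phase separately. In step~1 every party sends one bit to every other party, contributing $\Order(n^2)$ bits; in step~2 every party sends a propose-message of constant size to every other party, again $\Order(n^2)$ bits; in step~3 the king sends one bit to $n$ parties, $\Order(n)$ bits; and the local update in step~4 needs no communication. Summing over all $|\Kings| \le n$ phases gives $\Order(n^3)$ bits. I expect no real obstacle here; the entire argument is a straightforward packaging of the three helping lemmas and a round count, with the only thing to be careful about being the use of $\Kings \not\in \AdvStructure$ to guarantee an honest king exists.
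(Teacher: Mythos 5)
Your proof is correct and follows essentially the same route as the paper: round-counting for liveness, repeated application of Lemma~\ref{lem:sba-same-bit} for validity, the existence of an honest king (from $\Kings \not\in \AdvStructure$) together with Lemmas~\ref{lem:sba-honest-king} and~\ref{lem:sba-same-bit} for consistency, and a per-phase tally for communication. There is no meaningful difference in decomposition or key lemmas.
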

\begin{proof}
  Each phase involves $3$ rounds of communication and consequently, each phase requires
   $3\Delta$ time. The 
   $\AdvStructure$-liveness then follows from the fact that all the parties
   will output something after $|\Kings|$ phases, where $|\Kings| \leq n$.

  If all honest parties hold the same input $b$, then at the start of phase $1$, they will
  all set their preference bit as $b$. Then from repeated application of 
   Lemma \ref{lem:sba-same-bit},
    it follows that the honest parties retain the bit $b$ as their
  preference bit throughout at the end of each phase.
  Thus, $\SBA$ achieves $\AdvStructure$-validity.

For $\AdvStructure$-consistency, let there exist a phase $k \in \{1, \ldots, |\Kings| \}$, such that the designated
 king-party $P_{k}$ of this phase is {\it honest}. From Lemma
  \ref{lem:sba-honest-king},  it follows that at the end of phase $k$, every honest party sets
  the same preference bit. Then, from repeated application of Lemma \ref{lem:sba-same-bit},
   it follows that the honest parties retain the same preference bit till the end of the protocol
   and output that bit. To complete the proof for consistency, we need to show
     that there exists at least one {\it honest} $P_{k} \in \Kings$. 
     However, this follows from the fact that  $\Kings \not\in \AdvStructure$. 
     
     In each phase, $\Order(n^2)$ bits are exchanged and so overall $\Order(|\Kings| \cdot n^2)$ bits are exchanged. 
     The communication complexity then follows, since $|\Kings| \leq n$.
    \end{proof}
\justify
\begin{remark}[{\bf $\AdvStructure$-guaranteed liveness for $\SBA$ in an asynchronous network}]
For any given adversary structure $\AdvStructure$, the largest possible set $\Kings$ is the
set $\Partyset$ of all parties, as $\Partyset \not\in \AdvStructure$. So the maximum time taken by
 $\SBA$ to generate an output (in a {\it synchronous} network) 
  is $3n \cdot \Delta$. Thus, to achieve  $\AdvStructure$-guaranteed liveness in an {\it asynchronous} network, 
   it suffices to execute $\SBA$
   till time $3n \times \Delta$ and 
  then output $\bot$, if no output can be deduced from $\SBA$.
    Therefore, $\SBA$ achieves $\AdvStructure$-guaranteed liveness even in an
  asynchronous network.
\end{remark}

\section{Properties of the Best-of-Both-Worlds BA Protocol}
\label{app:HBA}
We first start with the properties of the protocol $\BCAST$ (see Fig \ref{fig:BCAST} for the formal steps of the protocol).\\~\\
\noindent {\bf Theorem \ref{thm:BCAST}.}
{\it Let $\Adv$ be an adversary characterized by an adversary structure $\AdvStructure$ satisfying the $\Q^{(3)}(\PartySet, \AdvStructure)$
 condition. Moreover, let $\Sender$ have input
   $m \in \{0, 1 \}^{\ell}$ for $\BCAST$. Then,
  $\BCAST$ achieves the following
   with a communication complexity of $\Order(n^3 \ell)$ bits,
    where $\TimeBCAST =  3\Delta + \TimeSBA$ and $\TimeSBA \leq 3n \cdot \Delta$.
 \begin{myitemize}
   \item[--] {\it Synchronous} network: 
        \begin{myitemize}
           \item[--]{\bf (a) $\AdvStructure$-Liveness}: At time $\TimeBCAST$, each honest party has an output. 
	    \item[--] {\bf (b) $\AdvStructure$-Validity}: If $\Sender$ is {\it honest}, then at time $\TimeBCAST$, each honest party
    outputs $m$.
           \item[--] {\bf (c) $\AdvStructure$-Consistency}: If $\Sender$ is {\it corrupt}, then the output of every honest party is the same at 
      time $\TimeBCAST$.     
            \item[--]   {\bf (d) $\AdvStructure$-Fallback Consistency}: If $\Sender$ is {\it corrupt} and some honest 
     party outputs $m^{\star} \neq \bot$ at time
    $T$ through fallback-mode, then every honest party outputs $m^{\star}$ by 
    time $T + 2\Delta$.    
      \end{myitemize}
\item[--] {\it Asynchronous Network}:
   \begin{myitemize}
     \item[--] {\bf (a) $\AdvStructure$-Liveness}: At time $\TimeBCAST$, each honest party has an output.
    \item[--] {\bf (b) $\AdvStructure$-Weak Validity}: If $\Sender$ is {\it honest}, then at  time $\TimeBCAST$, 
    each honest party
    outputs $m$ or $\bot$.
    \item[--] {\bf (c) $\AdvStructure$-Fallback Validity}: If $\Sender$ is {\it honest}, then each honest party
     with output $\bot$ at time $\TimeBCAST$, eventually outputs 
    $m$ through fallback-mode.
    \item[--] {\bf (d) $\AdvStructure$-Weak Consistency}: If $\Sender$ is {\it corrupt}, then at time $\TimeBCAST$, 
    each honest party
    outputs a common
     $m^{\star} \neq \bot$ or $\bot$.
    \item[--] {\bf (e) $\AdvStructure$-Fallback Consistency}: If $\Sender$ is {\it corrupt} and some honest party 
    outputs $m^{\star} \neq \bot$ at  time
    $T$ where $T \geq \TimeBCAST$, then 
    each honest party eventually outputs $m^{\star}$.
   \end{myitemize}
   \end{myitemize}
}
\begin{proof}
The {\it $\AdvStructure$-liveness} property follows from the fact that 
 every honest party outputs something (including $\bot$) at (local)
  time $\TimeBCAST$, irrespective of the type of the network.
   We next prove the  rest of the properties of the protocol in the {\it synchronous} network.
 
 If $\Sender$ is {\it honest}, then due to the {\it $\AdvStructure$-liveness} and
  {\it $\AdvStructure$-validity} properties of $\PiACast$ in the {\it synchronous} network, all honest
  parties receive $m$ from the Acast of $\Sender$
   at time $3\Delta$. Consequently, all honest parties participate with input $m$ in the instance of $\SBA$.
  The {\it $\AdvStructure$-guaranteed liveness} and {\it $\AdvStructure$-validity} properties
   of $\SBA$ in the {\it synchronous} network guarantees that at time $3\Delta + \TimeSBA$, all
    honest parties will have
  $m$ as the output from the instance
  of $\SBA$. As a result, all honest parties output $m$ at time $\TimeBCAST$, thus proving the {\it $\AdvStructure$-validity} property.
  
  To prove the {\it $\AdvStructure$-consistency} property, we consider a {\it corrupt} $\Sender$.
  From the {\it $\AdvStructure$-consistency} property of $\SBA$ in the {\it synchronous} network, 
  all honest parties will have the {\it same} output from the instance of $\SBA$ at time $\TimeBCAST$.
    If all honest parties have the output $\bot$ for $\BCAST$ at time $\TimeBCAST$, then {\it $\AdvStructure$-consistency}
     holds trivially. So, consider the case when
    some {\it honest} party, say $P_i$, has the output $m^{\star} \neq \bot$ for $\BCAST$ at time $\TimeBCAST$. 
    This implies that all honest parties have the output $m^{\star}$ from the instance of $\SBA$. 
    Moreover, at time $3\Delta$, at least one {\it honest} party, say $P_h$, has received $m^{\star}$ from the Acast of $\Sender$. 
    If the latter does not hold, then all honest parties would have participated
     with input $\bot$ in the instance of $\SBA$, and from the {\it $\AdvStructure$-validity} of $\SBA$
      in the {\it synchronous}
    network, all honest  parties would compute $\bot$ as the output during the instance of $\SBA$, which is a contradiction.
    Since $P_h$ has received
     $m^{\star}$ from $\Sender$'s Acast at time $3\Delta$, it follows from the {\it $\AdvStructure$-consistency} property of
      $\PiACast$ in the {\it synchronous}
    network that {\it all} honest parties will receive
     $m^{\star}$ from $\Sender$'s Acast by time $5\Delta$. Moreover, $5\Delta < 3\Delta + \TimeSBA$ holds.
    Consequently, at time $3 \Delta + \TimeSBA$,
     {\it all} honest parties will have $m^{\star}$ from $\Sender$'s Acast {\it and} as the output of
    $\SBA$, implying that all honest parties output $m^{\star}$ for $\BCAST$.   
    
    We next prove the  {\it $\AdvStructure$-fallback consistency} property for which
    we again consider a {\it corrupt} $\Sender$. 
    Let $P_h$ be an {\it honest} party who outputs $m^{\star} \neq \bot$ at time $T$ through fallback-mode.
    Note that $T > \TimeBCAST$, as the output during the fallback-mode is computed only after time $\TimeBCAST$.  
     We also note that {\it each} honest party has output $\bot$ at 
    time $\TimeBCAST$. This is because, from the proof of the {\it $\AdvStructure$-consistency} property of 
    $\BCAST$ (see above), if any {\it honest} party has an output 
    $m' \neq \bot$ at time $\TimeBCAST$, then {\it all} honest parties (including $P_h$) must have computed the output 
    $m'$ at time $\TimeBCAST$. Hence, $P_h$ will never change its output to
    $m^{\star}$.\footnote{Recall that in the protocol $\BCAST$, the parties who obtain
    an output different from $\bot$ at time $\TimeBCAST$, never change their output.}  
    Now since $P_h$ has obtained the output $m^{\star}$, it implies that at time $T$, it has received $m^{\star}$ from the Acast of
    $\Sender$. It then follows from the {\it $\AdvStructure$-consistency} of 
    $\PiACast$ in the {\it synchronous} network that every honest party will also receive 
    $m^{\star}$ from the Acast of $\Sender$, latest by time $T + 2 \Delta$ and output $m^{\star}$.
    This completes the proof of all the properties in the {\it synchronous} network.
   
  We next prove the properties of the protocol $\BCAST$ in an {\it asynchronous} network.   
    The {\it $\AdvStructure$-weak validity} property follows from the {\it $\AdvStructure$-validity} property of 
    $\PiACast$ in the {\it asynchronous} network, which ensures that
    no honest party ever receives an $m'$ from the Acast of $\Sender$, where $m' \neq m$.
    So, if at all any honest party outputs a value different from $\bot$ at time $\TimeBCAST$, it has to be $m$.
       The {\it $\AdvStructure$-weak consistency} property follows using similar arguments
   as used to prove {\it $\AdvStructure$-consistency} in the {\it synchronous} network;
   however we now rely on the {\it $\AdvStructure$-validity} and 
     {\it $\AdvStructure$-consistency}
     properties of $\PiACast$ in the asynchronous network. 
     The latter property ensures  that
    for a {\it corrupt} $\Sender$, two different honest parties never end up receiving $m_1$ and $m_2$ from the Acast of $\Sender$, where $m_1\neq m_2$. 
    
    For the {\it $\AdvStructure$-fallback validity} property, consider an {\it honest} $\Sender$, and let 
     $P_i$ be an arbitrary {\it honest} party
      who outputs $\bot$ at (local) time $\TimeBCAST$. Since the parties keep on participating in the protocol beyond time $\TimeBCAST$,
       it follows from the {\it $\AdvStructure$-liveness} and
        {\it $\AdvStructure$-validity} properties of $\PiACast$ in the {\it asynchronous} network that party $P_i$ will {\it eventually} 
     receive $m$ from the Acast of $\Sender$, by executing the steps of the fallback-mode of $\BCAST$. 
     Consequently, party $P_i$ eventually changes its output from
     $\bot$ to $m$.
     
     For the {\it $\AdvStructure$-fallback consistency} property, we consider a {\it corrupt} $\Sender$. Let 
     $P_j$ be an {\it honest} party who outputs some $m^{\star}$ different from $\bot$ at time $T$, where $T \geq \TimeBCAST$.
      This implies that $P_j$ has
      obtained $m^{\star}$ from the Acast of $\Sender$.
           Now, consider an arbitrary {\it honest} $P_i$. From the
            {\it $\AdvStructure$-liveness} and {\it $\AdvStructure$-weak consistency} properties of 
            $\BCAST$ in {\it asynchronous} network proved above,
     it follows that $P_i$ outputs either $m^{\star}$ or $\bot$ at local time $\TimeBCAST$. 
     If $P_i$ has output $\bot$, then from the {\it $\AdvStructure$-consistency}
     property of $\PiACast$ in the {\it asynchronous} network, it follows that 
     $P_i$ will also eventually obtain $m^{\star}$ from the Acast of $\Sender$, by executing the steps of the fallback-mode of $\BCAST$. 
     Consequently, party $P_i$ eventually changes its output from
     $\bot$ to $m^{\star}$.
     
     The {\it communication complexity} (both in the synchronous as well as asynchronous network)
      follows from the communication complexity of $\SBA$ and $\PiACast$.
\end{proof}
\subsection{Properties of the Protocol $\BA$}
In this section, we prove the properties of our best-of-both-worlds
  BA protocol $\BA$ (see Fig \ref{fig:BA} for the formal description of the protocol). \\~\\
\noindent {\bf Theorem \ref{thm:BA}.}
{\it Let $\Adv$ be an adversary, characterized by an adversary structure $\AdvStructure$, satisfying the $\Q^{(3)}(\PartySet, \AdvStructure)$
 condition. Moreover, let $\ABA$ be an ABA protocol 
   satisfying the conditions as stated in Lemma \ref{lemma:ABAGuarantees}.
   Then $\BA$ achieves the following.
\begin{myitemize}
\item[--] {\bf Synchronous Network}: The protocol is a {\it $\AdvStructure$-perfectly-secure} SBA 
 protocol, where all honest parties obtain an output within time $\TimeBA = \TimeBCAST + \TimeABA$.
  The protocol incurs a communication of $\Order(|\AdvStructure| \cdot n^5 \log|\F| + n^6 \log n)$ bits.
\item[--] {\bf Asynchronous Network}: The protocol is a {\it $\AdvStructure$-perfectly-secure} ABA protocol 
with an expected communication of  $\Order(|\AdvStructure| \cdot n^7 \log|\F| + n^8 \log n)$ bits.
\end{myitemize}
}
\begin{proof}
We start with the properties in a {\it synchronous} network. The $\AdvStructure$-liveness property of $\BCAST$ in the {\it synchronous}
 network guarantees
 that all honest parties will have some output, from each instance of $\BCAST$, at time $\TimeBCAST$. Moreover, 
  the {\it $\AdvStructure$-validity} and {\it $\AdvStructure$-consistency} properties of $\BCAST$ in the {\it synchronous} network
   guarantee that irrespective of the sender parties,
  {\it all} honest parties will have a common output from each individual instance of $\BCAST$, at time $\TimeBCAST$.
  Now since the parties decide their respective inputs for the instance of $\ABA$ {\it deterministically} based on the individual outputs
  from the $n$ instances of $\BCAST$ at time $\TimeBCAST$, it follows that 
  all honest parties participate with a {\it common} input in the protocol $\ABA$. Hence, all honest parties obtain an output by the end of
  time $\TimeBCAST + \TimeABA$, thus ensuring {\it $\AdvStructure$-guaranteed liveness} of $\BA$. 
  Moreover, the {\it $\AdvStructure$-consistency} property of $\ABA$ in the {\it synchronous} network
   guarantees that all honest parties have a {\it common} output from the instance of $\ABA$, which is taken as the output of
   $\BA$,  thus
  proving the {\it $\AdvStructure$-consistency} of $\BA$.

    For proving the validity property in a synchronous network, let all {\it honest} parties have the same input bit $b$. 
      Let $Z_c \in \AdvStructure$ be the set of {\it corrupt} parties and $\Honest \defined \PartySet \setminus Z_c$
    be the set of {\it honest} parties. 
    From the 
    {\it $\AdvStructure$-validity} of $\BCAST$ in the {\it synchronous} network,
    all honest parties will receive $b$ as the output at time $\TimeBCAST$ in all the 
    $\BCAST$ instances corresponding to the sender parties in $\Honest$. 
  Since $\AdvStructure$ satisfies the $\Q^{(3)}(\PartySet, \AdvStructure)$
    condition, it follows that $\Honest \not \in \AdvStructure$.
    Consequently, all honest parties will find a {\it common} subset $\R$ in the protocol,
    as the set $\Honest$ constitutes a candidate $\R$. Moreover, $\Honest \subseteq \R$.
    Furthermore, $\R \setminus \Honest \subseteq Z_c \in \AdvStructure$ and
    $\R \setminus Z_c \subseteq \Honest \not \in \AdvStructure$.
    Hence, all honest parties $P_i$ will find a {\it common} subset $R_i \subseteq \R$, as per the protocol, such that
    $b$ is computed as the output during the $\BCAST$ instances of all the parties in $\R_i$.
    As a result, all honest parties will participate with input $b$ in the
    instance of $\ABA$ and hence, output $b$ at the end of $\ABA$, which follows from the
   {\it $\AdvStructure$-validity} of $\ABA$ in the {\it synchronous} network. This proves the {\it $\AdvStructure$-validity} of $\BA$.
   
   We next prove the properties of $\BA$ in an {\it asynchronous} network. 
    The {\it $\AdvStructure$-consistency} of the protocol $\BA$ follows from the {\it $\AdvStructure$-consistency} of the protocol
    $\ABA$ in the {\it asynchronous} network, since the overall output of the protocol $\BA$ is same as the output
    of the protocol $\ABA$. The {\it $\AdvStructure$-liveness} of the protocol $\BCAST$ in the {\it asynchronous}
    network guarantees that all honest parties will have some output from all the $n$ instances of $\BCAST$
    at local time $\TimeBCAST$. Consequently, all honest parties will participate with some input 
    in the instance of $\ABA$. The {\it $\AdvStructure$-almost-surely liveness} of
    $\ABA$ in the {\it asynchronous} network then implies the
    {\it $\AdvStructure$-almost-surely liveness} of
    $\BA$.
    
    For proving the validity in an {\it asynchronous} network, let all {\it honest} parties have the same input bit $b$. 
      Let $Z_c \in \AdvStructure$ be the set of {\it corrupt} parties and $\Honest \defined \PartySet \setminus Z_c$
    be the set of {\it honest} parties. We claim that all {\it honest} parties participate with the input
    $b$ during the instance of $\ABA$. The {\it $\AdvStructure$-validity} of $\ABA$ in the {\it asynchronous} network then
    automatically implies the  {\it $\AdvStructure$-validity} of $\BA$. 
    
    To prove the above claim, consider an arbitrary
    {\it honest} party $P_h$. There are two possible cases. If $P_h$ fails to find a subset $\R$ satisfying the
    protocol conditions, then the claim holds trivially, as $P_h$ participates in the instance of $\ABA$ with its input for
    $\BA$, which is the bit $b$. So, consider the case when $P_h$ finds a subset $\R$ such that
    $\PartySet \setminus \R \in \AdvStructure$, and where, corresponding to each $P_j \in \R$, 
    party $P_h$ has computed an output $b_h^{(j)} \in \{0, 1 \}$ at local time $\TimeBCAST$ during the instance $\BCAST^{(j)}$.
    Now, consider the subset of {\it honest} parties $\Honest \cap \R$ in the set $\R$.
     It follows that $\R \setminus (\Honest \cap \R) \subseteq Z_c \in \AdvStructure$.
     Also, since $\AdvStructure$ satisfies the $\Q^{(3)}(\PartySet, \AdvStructure)$ condition, it follows that
    $(\Honest \cap \R) \not \in \AdvStructure$.
    Moreover, $P_h$ will compute the output $b$ at local time $\TimeBCAST$ in the instance of $\BCAST$ corresponding to
    {\it every} $P_j \in (\Honest \cap \R)$, which follows from the {\it $\AdvStructure$-weak validity} of $\BCAST$ in the {\it asynchronous} network.
    From these arguments, it follow that $P_h$ will find a 
    candidate subset $\R_h$, where $\R \setminus \R_h \in \AdvStructure$ and where
    $b$ is computed as the output  at local time $\TimeBCAST$ in the instance of $\BCAST$, corresponding to
    {\it every} $P_j \in \R_h$. This is because the subset of parties $\Honest \cap \R$ constitutes a candidate
    $\R_h$. Consequently, $P_h$ will set $b$ as its input for the instance of $\ABA$, thus proving the claim.
    
   The communication complexity, both in a synchronous as well as in an asynchronous network, follows easily from the 
   protocol steps and from the communication complexity of $\SBA$ and $\ABA$.
\end{proof}

\section{Properties of the Best-of-Both-Worlds VSS Protocol}
\label{app:VSS}
In this section we prove the properties of the protocol $\VSS$ (see Fig \ref{fig:VSS} for the formal description of the protocol).
 Recall that we want to prove the properties of $\VSS$ assuming an adversary $\Adv$ characterized by an adversary structure $\Z_s$ in a   
  {\it synchronous} network, and an adversary structure $\Z_a$ in an {\it asynchronous} network, satisfying the following conditions.
  \begin{myitemize}
  \item[--] $\Z_s \neq \Z_a$;
  \item[--] For every subset $Z \in \Z_a$, there exists a subset $Z' \in \Z_s$, such that $Z \subseteq Z'$;
  \item[--] $\Z_s$ and $\Z_a$ satisfy the $\Q^{(3, 1)}(\PartySet, \Z_s, \Z_a)$ condition.
  \end{myitemize}
 Moreover, we are considering the sharing specification
  $\SharingSpec = \{S_m : S_m = \PartySet \setminus Z_m \; \mbox{ and } \; Z_m \in \Z_s \}$.
  Note that the above conditions
   automatically imply that $\Z_s$ satisfies the $\Q^{(3)}(\PartySet, \Z_s)$ condition
  and $\Z_a$ satisfies the $\Q^{(4)}(\PartySet, \Z_a)$ condition.
  Before proving the properties, we prove a related property, which will later be useful while proving the properties of $\VSS$.
\begin{lemma}
\label{lemma:GenVSSSync}
In protocol $\VSS$ if the network is synchronous and if the output of $\BA$ is $1$, then the following hold.
\begin{myitemize}
    \item[--] All honest parties participated in the instance of $\BA$ with input $1$.
    \item[--] Corresponding to every $S_m \in \SharingSpec$, all honest parties in $\C_m$ received a common share
     from $\D$ by time $\Delta$.
\end{myitemize}
\end{lemma}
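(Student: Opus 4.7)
The plan is to use the $\Z_s$-consistency of $\BCAST$ in the synchronous network to argue that every honest party computes an identical view at the inspection times $2\Delta + \TimeBCAST$ and $2\Delta + 2\TimeBCAST$, so that every honest party makes the same accept/reject decision and therefore submits the same bit $b \in \{0, 1\}$ to $\BA$. The $\Z_s$-validity of $\BA$ then rules out $b = 0$ (otherwise $\BA$ would output $0$, contradicting the hypothesis), giving the first bullet. For the second bullet, I extract the timing from the clique condition on $\C_m$ by walking backwards through the ``wait until the next multiple of $\Delta$'' gates of Phases II and III.

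For the first bullet, I would first observe that every $\OK$ and $\NOK$ broadcast in Phase III is fired at a local time that is a multiple of $\Delta$ of value at least $2\Delta$: Phase II sends take place at a multiple of $\Delta$ of value at least $\Delta$, the pairwise message is then in flight for at most $\Delta$, and only afterwards does the Phase III code fire its broadcast at the next multiple of $\Delta$. Consequently, the earliest instant at which any $\BCAST$ instance from Phase III can produce a regular-mode output is $2\Delta + \TimeBCAST$, and since fallback mode can only fire at or after a party's regular-mode output, no edge of $G_i^{(m)}$ and no $\NOK(m,j)$ registered at time $2\Delta + \TimeBCAST$ comes from fallback. By $\Z_s$-consistency of $\BCAST$, every honest party sees exactly the same regular-mode outputs at time $2\Delta + \TimeBCAST$, so the graphs $G_i^{(m)}$ and the sets of $\NOK$-tagged indices agree across honest parties. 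The same reasoning applied to the $\resolve$ broadcasts and to $\D$'s broadcast of $\C_1, \ldots, \C_q$, all of which are fired at time $2\Delta + \TimeBCAST$ and inspected at time $2\Delta + 2\TimeBCAST$, makes the remaining ingredients of the acceptance predicate identical across honest parties. Thus every honest party supplies the same bit $b$ to $\BA$, and the hypothesis that $\BA$ outputs $1$ forces $b = 1$ by $\Z_s$-validity of $\BA$.

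For the second bullet, fix any $S_m \in \SharingSpec$ and any two honest parties $P_j, P_k \in \C_m$. Since every honest party accepted, $\C_m$ is a clique in the consistency graph of any honest $P_i$ at time $2\Delta + \TimeBCAST$; in particular the edge $(P_j, P_k)$ is present, which means the regular-mode outputs of $P_j$'s and $P_k$'s $\BCAST$ instances are $\OK(m, j, k)$ and $\OK(m, k, j)$ respectively. The $\Z_s$-validity of $\BCAST$ for the honest senders $P_j$ and $P_k$ says that these outputs are the actual broadcast values, and the Phase III code only allows an honest party to broadcast an $\OK$ label when its share from $\D$ agrees with the pairwise value it received from its partner; since both partners are honest, the pairwise value is precisely the share the partner received from $\D$, so $P_j$ and $P_k$ hold the same share from $\D$. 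For the timing, the $\OK$ broadcast must be initiated at time exactly $2\Delta$ in order for its regular-mode output to land at $2\Delta + \TimeBCAST$; walking backwards through the two ``wait until the next multiple of $\Delta$'' gates and the two at-most-$\Delta$ transmission delays forces both $P_j$ and $P_k$ to have received their shares from $\D$ by time $\Delta$. The delicate point, which I expect to be the main obstacle, is carrying out this walk-back cleanly: each $\Delta$ of slack in the arrival of $\D$'s share bumps the next multiple of $\Delta$ in Phase II forward, which in turn bumps the Phase III broadcast into the window that produces its regular-mode output strictly after $2\Delta + \TimeBCAST$ and therefore eliminates the edge.
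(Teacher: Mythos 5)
Your proposal is correct and takes essentially the same approach as the paper. For the first bullet you argue ``all honest parties have identical views at the inspection times, hence submit a common bit $b$, and $\Z_s$-validity of $\BA$ forces $b=1$,'' while the paper argues ``output $1$ forces some honest $P_h$ to have input $1$, and $\Z_s$-validity/$\Z_s$-consistency of $\BCAST$ then propagate $P_h$'s acceptance to every honest party'' --- these are the same ingredients reorganised; and your backward walk through the multiple-of-$\Delta$ gates for the second bullet is precisely the paper's contrapositive argument (a share arriving at $\Delta+\delta$ with $\delta>0$ pushes the partner's $\OK$ broadcast to a multiple of $\Delta$ at least $3\Delta$, hence its regular-mode output lands strictly after $2\Delta + \TimeBCAST$ and the edge is absent from $G^{(m)}$ at the inspection time).
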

\begin{proof}
Let the network be {\it synchronous} and let all the honest parties output $1$ during the instance of $\BA$. 
 This implies at least one {\it honest} party, say $P_h$, has participated in $\BA$ with input $1$. If this is not the case and if {\it all}
 honest parties participated with input $0$ in $\BA$, then from the $\Z_s$-validity of $\BA$ in the {\it synchronous} network,
 all honest parties would compute the output $0$ during the instance of $\BA$, which is a contradiction.
 
 Since $P_h$ has participated with input $1$ in the instance of $\BA$, it implies that $P_h$ has received the sets
  $\C_1, \ldots, \C_q$ from the broadcast of $\D$ at time $2\Delta + 2\TimeBCAST$ through regular-mode and accepted these sets.
  From the $\Z_s$-validity and $\Z_s$-consistency of $\BCAST$ in the {\it synchronous} network, every other honest party would
  also receive these sets from the broadcast of $\D$ at time $2\Delta + 2\TimeBCAST$. 
  Since $P_h$ has accepted $\C_1, \ldots, \C_q$, it 
   implies that $P_h$ has checked that all the following hold.
       \begin{myitemize}
                \item[--] For $m = 1, \ldots, q$, the set $\C_m$ constitutes a clique in the consistency graph $G_h^{(m)}$ of $P_h$ at time
                $2\Delta + \TimeBCAST$. That is, for every $P_i, P_j \in \C_m$, the messages $\OK(m, i, j)$ and
                $\OK(m, j, i)$ have been received from the broadcast of $P_i$ and $P_j$ respectively by time 
                $2\Delta + \TimeBCAST$. By the $\Z_s$-validity and $\Z_s$-consistency of $\BCAST$ in the {\it synchronous} network,
                these messages are also received by every other honest party by time $2\Delta + \TimeBCAST$. 
                Consequently, $\C_1, \ldots, \C_q$ will constitute a clique 
                in the consistency graphs $G_k^{(1)}, \ldots, G_k^{(m)}$ respectively of {\it every} honest
                party at time $2\Delta + \TimeBCAST$.                 
                \item[--] For $m = 1, \ldots, q$, the condition
                $S_m \setminus \C_m \in \Z_s$ holds. It is easy to see that {\it every} honest party will find that this condition
                holds.
                \item[--] For $m = 1, \ldots, q$, if $\NOK(m, j)$ was received from the broadcast of any
                 $P_j \in S_m$ through regular mode at time $2\Delta + \TimeBCAST$, then the following 
                 hold at time $2\Delta + 2\TimeBCAST$.
                \begin{myitemize}
                    \item[--] $\resolve(m, s^{(m)})$ is received from the broadcast of $\D$ through regular-mode.
                    \item[--] $\resolve(m, s^{(m)})$ is received from the broadcast of a subset of parties $\C'_m$ through regular-mode,
                    where $\C'_m \subseteq \C_m$ and $\C_m \setminus \C'_m \in \Z_s$.
                \end{myitemize}
                From the $\Z_s$-validity and $\Z_s$-consistency of $\BCAST$ in the {\it synchronous} network, the above conditions
                will be also satisfied for {\it every} honest party.
                In more detail, if any $\NOK(m, j)$ was received by $P_h$ from the broadcast of any
                 $P_j \in S_m$ through regular-mode at time $2\Delta + \TimeBCAST$, then the same 
                 $\NOK(m, j)$ message would be received through regular-mode at time 
                $2\Delta + \TimeBCAST$ by {\it every} honest party. 
                Due to a similar reason, any corresponding $\resolve(m, s^{(m)})$ message which is received by 
                $P_h$ through regular-mode at time $2\Delta + 2\TimeBCAST$, either from the broadcast of $\D$ or from the broadcast of
                any party in $\C_m$, will also be received by {\it every} honest party.
            \end{myitemize}
It thus follows that the conditions
 for accepting the $\C_1, \ldots, \C_q$ will hold for {\it every} honest party at time $2\Delta + 2\TimeBCAST$
  and so, {\it all} honest parties participate with input $1$ during the instance of $\BA$. This proves the first part of the lemma.
  
  We now prove the second part of the lemma. The statement is obviously true, if $\D$ is {\it honest}. 
  So, we consider a {\it corrupt} $\D$. Let $S_m$ be an arbitrary set in $\SharingSpec$.
    We first note that all the {\it honest} parties in $\C_m$ received a common share, say $s^{(m)}$, from $\D$.
    This is because, from the proof of the first part of the lemma, there exists some {\it honest} party $P_h$, who
    has received $\C_m$ from the broadcast of $\D$ through regular-mode at time $2\Delta + 2\TimeBCAST$
    and has accepted $\C_m$. And while accepting $\C_m$, party $P_h$ has verified that
    $\C_m$ constitutes a clique in its consistency graph $G_h^{(m)}$ at time $2\Delta + \TimeBCAST$.
    Hence, the messages $\OK(m, i, j)$ and $\OK(m, j, i)$ have been received by $P_h$ from the broadcast of {\it every honest}
    $P_i, P_j \in \C_m$ by time $2\Delta + \TimeBCAST$. This automatically implies that $s_i^{(m)} = s_j^{(m)}$ holds, where
    $s_i^{(m)}$ and $s_j^{(m)}$ denotes the shares received from $\D$ by $P_i$ and $P_j$ respectively, corresponding to
    $S_m$. We wish to show that both $P_i$ and $P_j$ would have received their respective shares within time $\Delta$.
    
    On the contrary, let $P_i$ receive $s^{(m)}_i$ from $\D$ at time $\Delta + \delta$, where $\delta > 0$. From the protocol steps, $P_i$ 
    starts performing pairwise consistency checks only when its local time becomes a multiple of $\Delta$. Hence, 
    $P_i$ must have started sending its share $s^{(m)}_i$ to the other parties at time $c \cdot \Delta$, where $c > 1$. 
    Similarly, from the protocol steps, $P_j$ will start broadcasting
    the  $\OK(m, j, i)$ message, {\it only} at time $(c+1) \cdot \Delta$, since it waits till its local time becomes a multiple of $\Delta$,
    before broadcasting any $\OK$ or $\NOK$ messages.
    Now, by the $\Z_s$-validity of $\BCAST$ in the {\it synchronous} network, 
    it takes $\TimeBCAST$ time for the $\OK(m, j, i)$ messages
     to be received by {\it any} honest party. 
     Hence, the edge $(P_i, P_j)$ gets added in the consistency graph
     $G_k^{(m)}$ of {\it every} honest party 
      only at time $(c + 1) \cdot \Delta + \TimeBCAST$. However, this is a contradiction, since 
      $(c + 1) \cdot \Delta + \TimeBCAST > 2\Delta + \TimeBCAST$.
\end{proof}

We now proceed to prove the properties of the protocol $\VSS$. We start with the {\it correctness} property in the
 {\it synchronous} network.
\begin{lemma}
\label{lemma:VSSSynchronousCorrectness}
In protocol $\VSS$, if $\D$ is honest and participates with input $s$, then in a synchronous network,
   $s$ is secret-shared, with respect to the sharing specification $\SharingSpec$,
    at 
   time $\TimeVSS = 2\Delta + 2\TimeBCAST + \TimeBA$. 
\end{lemma}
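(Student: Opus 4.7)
\medskip\noindent\textbf{Proof Plan.} Let $Z_c \in \Z_s$ denote the set of corrupt parties, let $\Honest \defined \PartySet \setminus Z_c$, and for each $S_m \in \SharingSpec$ let $\Honest_m \defined S_m \cap \Honest$. The plan is to trace the protocol phase by phase under the assumption that $\D$ is honest in a synchronous network, showing first that $\D$ can construct valid core sets and broadcast them in time, next that every honest party accepts those sets and enters $\BA$ with input $1$, and finally that each honest $P_i \in S_m$ recovers the common share $s^{(m)}$ by time $\TimeVSS$, so that the sum $s^{(1)} + \cdots + s^{(q)}$ equals $s$ and is secret-shared according to $\SharingSpec$.

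For the first part, I would argue that because $\D$ is honest, each honest party in $S_m$ receives $s^{(m)}$ within time $\Delta$, so the pairwise consistency exchanges at time $\Delta$ deliver consistent shares to all of $\Honest_m$ by time $2\Delta$, causing each pair in $\Honest_m$ to broadcast $\OK$ messages that, by the $\Z_s$-validity and $\Z_s$-consistency of $\BCAST$ in the synchronous network, are received by every honest party through regular-mode by time $2\Delta + \TimeBCAST$. Hence $\Honest_m$ forms a clique in the consistency graph of every honest party at that time, and since $S_m \setminus \Honest_m \subseteq Z_c \in \Z_s$, the dealer can set $\C_m \supseteq \Honest_m$ with $S_m \setminus \C_m \in \Z_s$, and broadcast $\C_1,\ldots,\C_q$ so that they are received via regular-mode by time $2\Delta + 2\TimeBCAST$. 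For any $\NOK(m,j)$ broadcast by a corrupt $P_j$, the honest $\D$ broadcasts $\resolve(m, s^{(m)})$, and every $P_i \in \Honest_m$ also broadcasts $\resolve(m, s^{(m)})$, all of which are received through regular-mode by time $2\Delta + 2\TimeBCAST$, so the ``resolve support'' condition $\C_m \setminus \C'_m \in \Z_s$ is witnessed by $\C'_m = \C_m \cap \Honest_m$.

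Combining these observations, every honest party accepts $\C_1, \ldots, \C_q$ at time $2\Delta + 2\TimeBCAST$ and enters $\BA$ with input $1$; by the $\Z_s$-validity of $\BA$ in the synchronous network (Theorem \ref{thm:BA}), all honest parties output $1$ at time $2\Delta + 2\TimeBCAST + \TimeBA = \TimeVSS$. I would then do a short case analysis for an honest $P_i \in S_m$: if any $\NOK(m,\cdot)$ was raised, then $P_i$ has $\resolve(m, s^{(m)})$ from $\D$ and from $\C'_m = \C_m \cap \Honest_m$ via regular-mode with $\C_m \setminus \C'_m \in \Z_s$, so the first share-recovery clause fires and yields $s_i^{(m)} = s^{(m)}$; otherwise the pairwise consistency messages from all of $\Honest_m$ reached $P_i$ by time $2\Delta$, so taking $\C''_m = \C_m \cap \Honest_m$ the second clause fires and again yields $s_i^{(m)} = s^{(m)}$.

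The main obstacle, and the part I would write most carefully, is verifying that the timing and subset-membership conditions appearing in the ``Computing Shares Through Core Sets Based on $\Z_s$'' block (namely $\C_m \setminus \C'_m \in \Z_s$ and $\C_m \setminus \C''_m \in \Z_s$, together with the insistence that messages be received through regular-mode and by time $2\Delta$) are actually satisfied by honest parties in the synchronous execution; this requires invoking Lemma \ref{lemma:GenVSSSync} (for the bound on when shares reach $\C_m$) and the synchronous guarantees of $\BCAST$ uniformly across all $q$ sets. Once this bookkeeping is complete, summing $s_i^{(m)} = s^{(m)}$ over $m$ gives $s^{(1)} + \cdots + s^{(q)} = s$ at every honest party by time $\TimeVSS$, which is exactly the statement of the lemma.
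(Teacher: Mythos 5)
Your proof follows the same route as the paper's: phase-by-phase tracing, showing all honest parties accept $\C_1, \ldots, \C_q$, enter $\BA$ with input $1$, output $1$ at time $\TimeVSS$, and then a two-case analysis on whether some $\NOK(m, \cdot)$ was received through regular-mode at time $2\Delta + \TimeBCAST$.

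There is one genuine omission. In the no-$\NOK$ case you take $\C''_m = \C_m \cap \Honest_m$ as a witness that the second recovery clause fires, and conclude $s_i^{(m)} = s^{(m)}$. But the clause is purely existential: the party outputs \emph{whatever} common value was received from \emph{some} $\C''_m \subseteq \C_m$ with $\C_m \setminus \C''_m \in \Z_s$. Since $\Z_s$ is only $\Q^{(3)}$, nothing a priori rules out a second, disjoint witness $\C'''_m \subseteq \C_m \cap Z_c$ whose members all sent $P_i$ a common value $v \neq s^{(m)}$ by time $2\Delta$, in which case the clause could fire with the wrong output. The paper closes this by observing that if any honest $P_i$ had received conflicting values from parties in $S_m$ by time $2\Delta$, then $P_i$ would have broadcast $\NOK(m, i)$ at time $2\Delta$, which by $\Z_s$-validity of $\BCAST$ would arrive through regular-mode at time $2\Delta + \TimeBCAST$ --- contradicting the case hypothesis. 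Hence \emph{all} values received by time $2\Delta$ are equal to $s^{(m)}$, and the clause's output is unambiguously $s^{(m)}$. You should make this uniqueness argument explicit, since without it the case analysis does not pin down the output value.

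Two smaller remarks. First, the parenthetical "$\D$ can set $\C_m \supseteq \Honest_m$" is unneeded and slightly misleading: the protocol lets $\D$ choose \emph{any} clique $\C_m$ with $S_m \setminus \C_m \in \Z_s$, and it need not contain all of $\Honest_m$; your subsequent use of $\C'_m = \C_m \cap \Honest_m$ (correct, since $\C_m \setminus \C'_m \subseteq Z_c$) does not depend on that inclusion, so the claim is best dropped. Second, invoking Lemma~\ref{lemma:GenVSSSync} for the $\Delta$-delivery of shares is harmless but not strictly necessary here --- with an honest $\D$ in a synchronous network this is immediate; the paper's reliance on that lemma in this spot is for consistency with the corrupt-dealer case.
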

\begin{proof}
Let $Z_s^{\star} \in \Z_s$ be the set of {\it corrupt} parties,
 and let $\Honest_s \defined \PartySet \setminus Z_s^{\star}$ be the set of {\it honest} parties.
  We show that corresponding to {\it every} $S_m \in \SharingSpec$, all {\it honest} parties in $S_m$
  output the share $s^{(m)}$ at time $\TimeVSS$, where $s^{(m)}$ is the share picked by $\D$, corresponding to 
  $S_m$. The lemma then follows from the fact that since $\D$ is {\it honest}, it selects the shares
  $s^{(1)}, \ldots, s^{(q)}$, satisfying the condition $s^{(1)} + \ldots + s^{(q)} = s$. So, consider an {\it arbitrary}
  $S_m \in \SharingSpec$, and let $\Honest_m = S_m \setminus Z_s^{\star}$ be the set of {\it honest}
  parties in $S_m$.
  
During Phase I, {\it every} $P_i \in \Honest_m$ receives the share $s^{(m)}_i$ from $\D$ within time $\Delta$, where
 $s^{(m)}_i = s^{(m)}$. During Phase II, {\it every} $P_i \in \Honest_m$
 sends $s_i^{(m)}$ to {\it every} $P_j \in S_m$,
  which takes at most $\Delta$ time to be delivered.  Hence, by time $2\Delta$, {\it every} $P_i \in \Honest_m$
  receives $s^{(m)}_j$ from {\it every} $P_j \in \Honest_m$, such that $s^{(m)}_i = s^{(m)}_j$ holds. 
 Consequently, during Phase III, {\it every} $P_i \in \Honest_m$ 
 broadcasts $\OK(m, i, j)$ corresponding to {\it every} $P_j \in \Honest_m$, and vice versa. From the $\Z_s$-validity
 of $\BCAST$ in the {\it synchronous} network,
  it follows that {\it all} the parties in $\Honest_m$ receive the 
   $\OK(m, i, j)$ and $\OK(m, j, i)$ messages through regular-mode at time $2\Delta + \TimeBCAST$, 
   from the broadcast of {\it every}  $P_i \in \Honest_m$ and {\it every} 
   $P_j \in \Honest_m$.
    Hence, corresponding to {\it every} $P_i, P_j \in \Honest_m$, 
    the edge $(P_i, P_j)$ will be added to the graph $G_k^{(m)}$ of {\it every}
    $P_k \in \Honest_m$. Furthermore, from the $\Z_s$-consistency
     property of $\BCAST$ in the {\it synchronous} network, 
     the graph $G_k^{(m)}$ will be the {\it same} for {\it every} $P_k \in \Honest_m$ (including $\D$) at time $2\Delta + \TimeBCAST$. 
     
     From the above arguments, the set of parties in $\Honest_m$ will constitute a clique in the consistency graph of {\it all} the parties
     in $\Honest_m$. Moreover, $S_m \setminus \Honest_m \subseteq Z_s^{\star} \in \Z_s$.
     Hence, during 
     Phase IV, $\D$ will be able to find a candidate $\C_m$ set and broadcast it, which will be received by {\it all} the
      parties in $\Honest_s$ through 
     regular-mode within time $2\Delta + 2\TimeBCAST$ (follows from the $\Z_s$-validity of $\BCAST$ in the {\it synchronous}
     network).
     Now, consider an {\it arbitrary honest} party $P_k$, who receives the message 
     $\NOK(m, j)$ from a party $P_j \in S_m$ through regular-mode at time $2\Delta + \TimeBCAST$.
    From the $\Z_s$-validity and $\Z_s$-consistency properties of $\BCAST$ in the {\it synchronous}
     network, this  $\NOK(m, j)$ message will be received by {\it all} the parties in $\Honest_m$ through regular-mode
     at time $2\Delta + \TimeBCAST$. Consequently, {\it all} the parties in $\Honest_m$ (including $\D$) will respond
     by broadcasting the $\resolve(m, s^{(m)})$ message, which will be received by {\it all} the {\it honest} parties 
     through regular mode at time $2\Delta + 2\TimeBCAST$. 
     Since $\C_m \setminus \Honest_m \subseteq Z_s^{\star} \in \Z_s$, it follows that the conditions for accepting 
     $\C_m$ will hold for {\it all} the honest parties at time $2\Delta + 2\TimeBCAST$. Consequently, {\it all honest}
     parties will participate with input $1$ in the instance of $\BA$ and from the
     $\Z_s$-validity and $\Z_s$-guaranteed liveness properties 
      of $\BA$ in the {\it synchronous} network, all honest parties will compute the output
     $1$ in the instance of $\BA$ at time $2\Delta + 2\TimeBCAST + \TimeBA$. 
     
     Finally, we show that {\it all} the parties in $\Honest_m$ output $s^{(m)}$
     at time $2\Delta + 2\TimeBCAST + \TimeBA$. For this, we 
      consider the following possible cases with respect to $\C_m$.       
    \begin{enumerate}
        \item {\bf An $\NOK(m, \star)$ message was received at time $2\Delta + \TimeBCAST$ through regular-mode from
        the broadcast of some party in $S_m$}: Since {\it all honest} parties accept $\C_m$, it implies that 
        corresponding to this $\NOK$ message, all honest parties have received a  $(\resolve, s^{(m)})$ message
        from the broadcast of $\D$ through regular-mode, as well as from the broadcast of a subset of parties $\C'_m \subseteq \C_m$
        through regular-mode, where
        $\C_m \setminus \C'_m \in \Z_s$, at time $2\Delta + 2\TimeBCAST$. Hence, according to the protocol steps, 
        {\it every honest} party in $S_m$ outputs $s^{(m)}$ as the share, corresponding to
        $S_m$. Also, it is easy to see that the honest parties output $s^{(m)}$ as the share at time 
        $2\Delta + 2\TimeBCAST + \TimeBA$. 
        \item {\bf No $\NOK(m, \star)$ message was received through regular-mode
        from the broadcast of any party in $S_m$ within time $2\Delta + \TimeBCAST$}:
        From 
         Lemma \ref{lemma:GenVSSSync}, {\it all} parties in $\Honest_m \cap \C_m$ would have received the common share
         $s^{(m)}$ from 
         $\D$, corresponding to the set $S_m$, within time $\Delta$. 
         As part of the pairwise consistency test, the share $s^{(m)}$ from all the parties in $\Honest_m \cap \C_m$
          would have been delivered 
         to all the {\it honest} parties in $S_m$ within time $2\Delta$. 
         This implies that within time $2\Delta$, {\it all} honest parties would have received $s^{(m)}$ 
         from a subset of parties $\C''_m \subseteq \C_m$, where $\C_m \setminus \C''_m \in \Z_s$.
         This is because the set $(\C_m \cap \Honest_m)$ definitely constitutes a candidate $\C''_m$.
         Moreover, {\it no} honest party would have ever received a value {\it different} from 
         $s^{(m)}$ within time $2\Delta$, from any party in $S_m$. 
         On the contrary,  if any {\it honest} $P_i$ receives  $s_j^{(m)}$ from $P_j$ and 
          $s_k^{(m)}$ from $P_k$ within time $2\Delta$, where $P_j, P_k \in S_m$ and where
         $s_j^{(m)} \neq s_k^{(m)}$, then $P_i$ would have broadcasted an $\NOK(m, i)$ message at time
         $2\Delta$, which would have been received by all honest parties through regular-mode at time $2\Delta + \TimeBCAST$,
         which is a contradiction.
         
         Hence, in this case also, 
        {\it every honest} party in $S_m$ outputs $s^{(m)}$ as the share, corresponding to
        $S_m$. Also, 
         it is easy to see that the honest parties output $s^{(m)}$ as the share at time 
        $2\Delta + 2\TimeBCAST + \TimeBA$.        
    \end{enumerate}
\end{proof}   
We next prove the correctness property in an {\it asynchronous} network.
\begin{lemma}
\label{lemma:VSSAsynchronousCorrectness}
In protocol $\VSS$, if $\D$ is honest and participates with input $s$, then in an asynchronous network, almost-surely 
 $s$ is eventually secret-shared with respect to the sharing specification $\SharingSpec$.
\end{lemma}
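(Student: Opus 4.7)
\medskip
\noindent\textbf{Proof plan.} My plan is to work case-by-case on the output of the invoked $\BA$ instance, leveraging the fact that an honest $\D$ only ever distributes and re-broadcasts the single value $s^{(m)}$ to (honest) parties of $S_m$. I write $Z_a^{\star} \in \Z_a$ for the corrupt set and $\Honest_m \defined S_m \setminus Z_a^{\star}$. First I would observe that since $\D$ is honest, every honest $P_i \in S_m$ eventually receives $s^{(m)}$ from $\D$ and, during Phase II, sends it to every party in $S_m$, so every pair in $\Honest_m$ eventually broadcasts matching $\OK$ messages and no honest $P_i \in \Honest_m$ ever broadcasts a $(\resolve, \cdot)$ or $\NOK$ message with a value other than $s^{(m)}$. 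By the $\Z_s$-fallback validity of $\BCAST$ (applied to honest broadcasters of $\OK$'s), $\Honest_m$ is eventually a clique in $G_i^{(m)}$ of every honest $P_i$, including $\D$. Using $\Con$, $S_m \setminus \Honest_m \subseteq Z_a^{\star}$ lies in some set of $\Z_s$ (hence in $\Z_s$ by downward closure) and also in $\Z_a$, so $\Honest_m$ is a valid candidate for both $\C_m$ and $\E_m$; consequently $\D$ eventually finds and broadcasts core sets of both types, and by fallback-validity/consistency of $\BCAST$ every honest party eventually receives and accepts them (any reported $\NOK$ is matched by $(\resolve, s^{(m)})$ from $\D$ and from all of $\C_m \cap \Honest_m$).

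Next I would invoke the almost-sure termination of $\BA$ to conclude that almost-surely the parties agree on some output $b \in \{0,1\}$, and I would split on $b$. If $b = 1$, each honest $P_i \in S_m$ computes its share via one of the three sub-cases. The first two (regular-mode resolve plus $\C'_m$, or a common value from $\C''_m$ at time $2\Delta$) may or may not trigger asynchronously, but whenever they do, the only value that $\D$ or any honest party in $\C_m$ ever delivers is $s^{(m)}$, so the output is $s^{(m)}$. Otherwise the asynchronous fallback waits for $\C'''_m \subseteq \C_m$ with $\C_m \setminus \C'''_m \in \Z_a$ that delivers a common value, and $\C_m \cap \Honest_m$ satisfies this condition and delivers $s^{(m)}$. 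The $b = 0$ case is symmetric: $\D$ eventually broadcasts accepted sets $\E_1, \ldots, \E_q$ with $S_m \setminus \E_m \in \Z_a$ (again $\Honest_m$ is a candidate), honest $P_i \in \E_m$ outputs the $s^{(m)}$ it already holds, and honest $P_i \notin \E_m$ waits for $\E'_m \subseteq \E_m$ with $\E_m \setminus \E'_m \in \Z_s$ delivering a common value, for which $\E_m \cap \Honest_m$ qualifies and delivers $s^{(m)}$.

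The hardest step, I expect, is ruling out the possibility that a subset consisting only of corrupt parties happens to deliver a common but bogus value and misleads some honest party. My plan here is a counting argument using $\Con$. Suppose for contradiction that some $\C'''_m$ as above delivers $v \neq s^{(m)}$; then $\C'''_m \cap \Honest_m = \emptyset$, so $\C'''_m \subseteq Z_a^{\star}$. Combining $\PartySet \setminus S_m = Z_m \in \Z_s$, $S_m \setminus \C_m \in \Z_s$, $\C_m \setminus \C'''_m \in \Z_a$, and $\C'''_m \subseteq Z_a^{\star} \in \Z_a$, I obtain
\[
\PartySet \subseteq Z_m \cup (S_m \setminus \C_m) \cup (\C_m \setminus \C'''_m) \cup Z_a^{\star},
\]
a cover by two sets from $\Z_s$ and two from $\Z_a$. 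By $\Con$ each $\Z_a$-set is contained in some $\Z_s$-set, so replacing one of the $\Z_a$-sets by its $\Z_s$-superset yields a cover of $\PartySet$ by three $\Z_s$-sets and one $\Z_a$-set, contradicting $\Q^{(3,1)}(\PartySet, \Z_s, \Z_a)$. The analogous argument with $\E'_m \subseteq Z_a^{\star}$, $\E_m \setminus \E'_m \in \Z_s$, and $S_m \setminus \E_m \in \Z_a$ rules out the bad case on the $b = 0$ branch. Consequently every honest $P_i$ eventually outputs $s^{(m)}$ for every $S_m \ni P_i$, the shares sum to $s$, and the almost-sure qualifier is inherited from the almost-sure termination of $\BA$.
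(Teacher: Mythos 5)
Your plan matches the paper's proof route closely: case-split on the almost-surely produced output of $\BA$; in each case show that every honest $P_i \in S_m$ eventually obtains $s^{(m)}$; and rule out ``bogus common value'' scenarios with a $\Q^{(3,1)}(\PartySet,\Z_s,\Z_a)$ cover argument, while $\C_m \cap \Honest_m$ (resp.\ $\E_m \cap \Honest_m$) serves as the eventual witness that some qualifying subset is always found. Your counting arguments for Condition C and for the $b=0$ branch are exactly the paper's (the Condition B cover is the even simpler direct $\PartySet \subseteq Z_m \cup (S_m\setminus\C_m) \cup (\C_m\setminus\C''_m) \cup Z_a^{\star}$, which you should state explicitly).

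However, one step in your preamble is not right as written and needs patching. You infer from ``$\Honest_m$ is a valid candidate for both $\C_m$ and $\E_m$'' that ``$\D$ eventually finds and broadcasts core sets of both types.'' That inference fails for the $\C_m$'s: in Phase IV, $\D$ checks its consistency graph for the cliques $\C_1,\ldots,\C_q$ only at the fixed time $2\Delta + \TimeBCAST$, not eventually, so in an asynchronous schedule $\D$'s graph may not yet contain $\Honest_m$ as a clique and $\D$ then never broadcasts any $\C_m$ (and the $\E_m$'s are broadcast only when $\BA$ outputs $0$, so in no execution does $\D$ broadcast both types). The correct way to obtain the $\C_m$'s in the $b=1$ branch --- and the argument the paper uses --- is to apply $\Z_a$-validity of $\BA$: since the output is $1$, some honest party must have entered $\BA$ with input $1$, hence accepted $\C_1,\ldots,\C_q$ from $\D$'s broadcast through regular mode at time $2\Delta + 2\TimeBCAST$; weak-validity and fallback-validity of $\BCAST$ (with honest $\D$) then propagate the sets to every honest party. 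With this repair, the remainder of your case analysis coincides with the paper's proof. (A minor wording nit: an honest $P_i \in S_m$ \emph{can} broadcast $\NOK(m,i)$ if two corrupt parties send it inconsistent shares; the invariant you actually need is that no honest party ever broadcasts a $\resolve(m, \cdot)$ carrying a value other than $s^{(m)}$.)
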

\begin{proof}
Let $Z_a^{\star} \in \Z_s$ be the set of {\it corrupt} parties,
 and let $\Honest_a \defined \PartySet \setminus Z_a^{\star}$ be the set of {\it honest} parties.
  We show that corresponding to {\it every} $S_m \in \SharingSpec$, almost-surely,
   all {\it honest} parties in $S_m$
  eventually output the share $s^{(m)}$, where $s^{(m)}$ is the share picked by $\D$ corresponding to 
  $S_m$. The $\Z_a$-correctness then follows from the fact that since $\D$ is {\it honest}, it selects the shares
  $s^{(1)}, \ldots, s^{(q)}$, satisfying the condition $s^{(1)} + \ldots + s^{(q)} = s$. So consider an {\it arbitrary}
  $S_m \in \SharingSpec$ and let $\Honest_m = S_m \setminus Z_a^{\star}$ be the set of {\it honest}
  parties in $S_m$. 
  
  We first note that {\it every} honest party participates with {\it some} input in the instance of $\BA$ at local time $2\Delta + 2 \TimeBCAST$. 
  Hence from the $\Z_a$-almost-surely liveness and $\Z_a$-consistency properties of $\BA$ in the {\it asynchronous} network, 
  all honest parties eventually compute a common output, during the instance of $\BA$. Now there are two possible cases
   with respect to the output of $\BA$, according to which the parties proceed to compute their shares.
\begin{myenumerate}
\item {\bf The output of $\BA$ is 1}: 
  Since the output of $\BA$ is $1$, due to the $\Z_a$-validity of $\BA$ in the {\it asynchronous} network, at least one honest
    party, say $P_h$, has received the sets $\C_1, \ldots, \C_q$ from the broadcast of $\D$ 
    through regular-mode, within time $2\Delta + \TimeBCAST$ and accepted these sets.
    It then follows that all honest parties also receive the sets $\C_1, \ldots, \C_q$ from the broadcast of $\D$, either through
    regular-mode or through fallback-mode. 
    This follows from the $\Z_a$-weak validity and $\Z_a$-fallback validity properties of $\BCAST$
    in the {\it asynchronous} network.
    Since $P_h$ has accepted $\C_m$, it has verified that $\C_m$ constitutes a clique
    in the consistency graph $G_h^{(m)}$. This implies that corresponding to
    $S_m$, {\it all honest} parties in $\C_m$ have received the {\it same} share from $\D$, which is $s^{(m)}$, since we are
    considering an {\it honest} $\D$.
    We will show that all {\it honest} parties in $S_m$ 
    eventually output $s^{(m)}$ as the share, corresponding to $S_m$. 
    
    So consider an {\it arbitrary honest} $P_i \in S_m$.
        From the protocol steps, if the output of $\BA$ is $1$,
        then $P_i$ computes its share corresponding to $S_m$, based on one of the following three conditions.
        Assuming that at least one of these conditions eventually hold for $P_i$, we
        first show that the share computed by $P_i$ corresponding to $S_m$, is bound to be
        $s^{(m)}$. This is followed by showing that indeed at least one of these conditions eventually hold for $P_i$.        
      \begin{myitemize}
            \item[--] {\bf Condition A}: At time $2\Delta + 2\TimeBCAST$,
            party $P_i$ received the $\resolve(m, s^{(m)})$ message
              from the broadcast of $\D$, as well as from the broadcast of a subset of parties $\C'_m \subseteq \C_m$ through
             regular-mode,
            where $\C_m \setminus \C'_m \in \Z_s$.
            Clearly, in this case, $P_i$ outputs $s^{(m)}$ as the share corresponding to $S_m$.          
            \item[--] {\bf Condition B}: At time $2\Delta$, there exists a subset of parties $\C''_m \subseteq \C_m$ 
            where $\C_m \setminus \C''_m \in \Z_s$, such that $P_i$ received a common value from {\it all}
            the parties in $\C''_m$. We claim that the subset $\C''_m$ is bound to contain at least
            one {\it honest} party from $\C_m$, which would have sent 
            $s^{(m)}$ to $P_i$, due to which $P_i$ will output $s^{(m)}$ as the share corresponding to $S_m$.
              In more detail, let $S_m \setminus \C_m = Z_{\alpha} \in \Z_s$, and
            $\C_m \setminus \C''_m = Z_{\beta} \in \Z_s$.
            Also, note that $\PartySet \setminus S_m = Z_m \in \Z_s$.
            Now, if $\C''_m$ {\it does not} contain any honest party from $\C_m$, it implies that
            $\C''_m \subseteq Z_a^{\star} \in \Z_a$.
            This further implies that $\PartySet \subseteq Z_m \cup Z_{\alpha} \cup Z_{\beta} \cup Z_a^{\star}$, which is a contradiction
            to the $\Q^{(3, 1)}(\PartySet, \Z_s, \Z_a)$ condition.    
            \item[--] {\bf Condition C}: There exists a subset of parties $\C'''_m \subseteq \C_m$, where
            $\C_m \setminus \C'''_m \in \Z_a$, such that $P_i$ received a common value from {\it all}
            the parties in $\C'''_m$. 
            In this case also, one can show that the subset $\C'''_m$ is bound to contain at least
            one {\it honest} party from $\C_m$, who would have sent 
            $s^{(m)}$ to $P_i$. This is because $\Z_s$ and $\Z_a$
            satisfy the $\Q^{(3, 1)}(\PartySet, \Z_s, \Z_a)$ condition and every subset in 
            $\Z_a$ is a subset of some subset in $\Z_s$.     
            Clearly, $P_i$ outputs $s^{(m)}$ as the share corresponding to $S_m$.                              
       \end{myitemize}
    Thus, we have shown that {\it irrespective} of the way $P_i$ would have computed its output share corresponding to
    $S_m$, it is bound to
    be the {\it same} as $s^{(m)}$.
    To complete the proof, we just need to show that at least one of the conditions from A, B and C above eventually holds for 
    $P_i$. For this, we note that in the {\it worst} case, 
    the condition C is bound to {\it eventually} hold, irrespective of conditions A and B.
    This is because the set of {\it honest} parties in $\C_m$, namely the parties in
    $\C_m \setminus Z_a^{\star}$, {\it always} constitute a candidate $\C'''_m$ set for $P_i$. 
    This follows from the fact that the share $s^{(m)}$ from {\it all} the parties in $\C_m \setminus Z_a^{\star}$ will be eventually delivered
    to $P_i$. 
\item {\bf The output of $\BA$ is 0}: 
Since $\D$ is {\it honest}, every pair of parties $P_j, P_k \in \Honest_m$ eventually broadcast
  $\OK(m, j, k)$ and $\OK(m, k, j)$ messages, as they eventually receive the same share $s^{(m)}$ from $\D$
   and exchange among themselves. From the $\Z_a$-validity of $\BCAST$ in the {\it asynchronous} network,
    these messages are eventually delivered to every honest party. 
    Also from the $\Z_a$-consistency of $\BCAST$ in the {\it asynchronous} network, 
    any $\OK$ message which is received by $\D$ from the broadcast of any {\it corrupt} party,
   will eventually be
   received by every other honest party as well. 
   Since $S_m \setminus \Honest_m \in \Z_a$, it follows that 
   {\it all honest} parties will eventually find a subset of parties $\E_m \subseteq S_m$, where
   $S_m \setminus \E_m \in \Z_a$, which constitutes a clique in the consistency graph $G^{(m)}$ 
   of {\it all} honest parties. This is because the set $\Honest_m$ constitutes such a candidate 
   $\E_m$. Consequently, $\D$ eventually finds and broadcasts $\E_m$.
   From the $\Z_a$-weal validity and $\Z_a$-fallback validity properties of $\BCAST$
   in the {\it asynchronous} network, $\E_m$ will be eventually received and accepted by {\it all}
   honest parties. 
   
   Next, consider an arbitrary $P_i \in \Honest_m$. We wish to show that $P_i$ eventually outputs
   $s^{(m)}$ as the share, corresponding to $S_m$. Now, there are two possible cases. If $P_i \in \E_m$,
   then from the protocol steps, $P_i$ indeed outputs $s^{(m)}$ as its share, corresponding to $S_m$. So, consider the other case
   when $P_i \not \in \E_m$. Note that all the parties in $\Honest_m$ eventually receive the common share 
   $s^{(m)}$ from $\D$, since $\D$ is {\it honest}. 
   Also note that $\E_m \setminus \Honest_m \in \Z_s$; this
    is because every subset in 
            $\Z_a$ is a subset of some subset in $\Z_s$. 
     Hence, it follows that party $P_i$ will eventually find a candidate subset $\E'_m \subseteq \E_m$, where
     $\E_m \setminus \E'_m \in \Z_s$,  such that $P_i$ receives a {\it common} value from {\it all}
     the parties in $\E'_m$ and set that value as its share, corresponding to $S_m$.
      This is because the subset $(\Honest_m \cap \E_m)$ always constitute such a candidate $\E'_m$ set.
     Hence, it is confirmed that $P_i$ is guaranteed to output some share corresponding to $S_m$.
     To complete the proof, we need to show that this share is the {\it same} as $s^{(m)}$.
     
     So, let $P_i$ find a candidate $\E'_m$ set, satisfying the above conditions,
     based on which it computes its output share corresponding to $S_m$.
     We claim that this set $\E'_m$ contains at least one {\it honest} party from 
     $\Honest_m$; i.e.~$\Honest_m \cap \E'_m \neq \emptyset$. 
     On the contrary, let the candidate $\E'_m$ for $P_i$ consists of only {\it corrupt}
     parties. That is, $\E'_m \subseteq Z_a^{\star}$.
     We consider the worst case scenario where 
     $Z_a^{\star} \in \Z_s$ as well, since every subset in $\Z_a$
     is assumed to be a subset of some subset in $\Z_s$. 
     Also note that $S_m = \PartySet \setminus Z_m$, where $Z_m \in \Z_s$.
     Let $S_m \setminus \E_m \subseteq Z_{\beta} \in \Z_a$.
     And let $\E_m \setminus \E'_m \subseteq Z_{\alpha} \in \Z_s$.
     Hence, we get that $\PartySet \subseteq Z_m \cup Z_a^{\star} \cup Z_{\alpha} \cup Z_{\beta}$, which
     is a contradiction, since $\Z_s$ and $\Z_a$
            satisfy the $\Q^{(3, 1)}(\PartySet, \Z_s, \Z_a)$ condition. 
\end{myenumerate}
\end{proof}

We next prove the {\it privacy} property.
\begin{lemma}
\label{lemma:VSSPrivacy}
In protocol $\VSS$, if  $\D$ is {\it honest} and participates with input $s$, then irrespective of the type of network, 
 the view of the adversary remains independent of $s$.
\end{lemma}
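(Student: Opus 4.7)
The plan is to exploit the fact that, under Conditions $\Con$, whenever the adversary corrupts a set $Z_c$ (drawn from $\Z_s$ in a synchronous network and from $\Z_a$ in an asynchronous network), there exists some $Z^{\star} \in \Z_s$ with $Z_c \subseteq Z^{\star}$. Consequently, the sharing specification $\SharingSpec$ contains at least one set $S_{m^{\star}} = \PartySet \setminus Z^{\star}$ that is \emph{entirely composed of honest parties}. I will call the corresponding share $s^{(m^{\star})}$ the ``hidden'' share, and the goal is to argue that the adversary's entire view during $\VSS$ is distributed independently of $s^{(m^{\star})}$; since $\D$ picks the remaining shares uniformly at random subject to $\sum_m s^{(m)} = s$, independence from $s^{(m^{\star})}$ will immediately yield independence from $s$ itself, through a standard re-randomisation argument.

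The technical core will be a phase-by-phase case analysis of every message that can reach a corrupt party, showing that none of them leaks any information about $s^{(m^{\star})}$. In \textbf{Phase I}, $\D$ sends $s^{(m^{\star})}$ only to parties in $S_{m^{\star}}$, all of whom are honest by construction, so no corrupt party receives the share. In \textbf{Phase II}, only parties in $S_{m^{\star}}$ exchange $s^{(m^{\star})}$ among themselves, and again all of these are honest. In \textbf{Phase III} and in the construction of the consistency graphs, only indices $(m, i, j)$ are broadcast, not the share values themselves. The two genuinely delicate places are (i) the $\resolve(m^{\star}, s^{(m^{\star})})$ message of Phase IV and (ii) the $\C_m$ and $\E_m$ sets that $\D$ broadcasts.

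For (i), I will argue that $\D$ broadcasts $\resolve(m^{\star}, s^{(m^{\star})})$ only in response to some $\NOK(m^{\star}, j)$ from a party $P_j \in S_{m^{\star}}$. But since all parties in $S_{m^{\star}}$ are honest and every honest party in $S_{m^{\star}}$ receives precisely $s^{(m^{\star})}$ from $\D$, their pairwise consistency tests on $s^{(m^{\star})}$ succeed, and none of them ever broadcasts $\NOK(m^{\star}, \cdot)$; NOK messages broadcast by corrupt parties outside $S_{m^{\star}}$ are, by the protocol spec, ignored by $\D$. Hence the $\resolve$ channel never reveals $s^{(m^{\star})}$. For (ii), the sets $\C_m$ and $\E_m$ are determined solely from the identities of parties/edges in the consistency graphs, not from any value-level information about $s^{(m^{\star})}$, so they are trivially independent of it. The same is true for the auxiliary messages of $\BA$, whose input bits for each honest party depend only on publicly broadcast information.

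The hardest part of the write-up will be to rigorously argue the re-randomisation step: fixing the adversary's view $\mathsf{View}_{\Adv}$, I want to show that for any two candidate inputs $s, s' \in \K$, the distributions of $\mathsf{View}_{\Adv}$ when $\D$ inputs $s$ versus $s'$ are identical. The argument is that a bijection on $\D$'s coin tosses, obtained by shifting $s^{(m^{\star})}$ by $s' - s$ and leaving all other $s^{(m)}$ unchanged, maps the $s$-execution to the $s'$-execution without altering any message seen by the adversary (by the leak-analysis above). Everything else, namely the communication complexity accounting and the purely syntactic invariance of the $\BA$ instance, is routine. Note that this argument makes no use of synchrony assumptions, so the conclusion holds irrespective of the network type, as required.
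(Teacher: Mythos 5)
Your proposal is correct and follows essentially the same route as the paper: both identify a fully-honest set $S_{m^{\star}}$ in $\SharingSpec$ (guaranteed in the synchronous case because the corrupt set $Z_c$ itself lies in $\Z_s$, and in the asynchronous case via the containment hypothesis in $\Con$), and both argue that the corresponding share $s^{(m^{\star})}$ never reaches a corrupt party because it is only distributed within $S_{m^{\star}}$, no $\NOK(m^{\star},\cdot)$ is ever triggered, and hence no $\resolve(m^{\star},\cdot)$ is ever broadcast. Your write-up is slightly more streamlined in treating the two network types uniformly (rather than proving synchronous first and saying asynchronous "automatically follows") and in making the re-randomization bijection on $\D$'s coins explicit, but these are presentation refinements rather than a different argument.
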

\begin{proof}
We prove privacy in a {\it synchronous} network. 
 The privacy in an {\it asynchronous} network automatically follows, since every subset in $\Z_a$ is a subset of some
  subset in $\Z_s$. So, consider a {\it synchronous} network, and let $\D$ be {\it honest}.
  Let $Z_c \in \Z_s$ be the set of {\it corrupt} parties. Then consider the set $S_c \in \SharingSpec$, where
  $S_c \defined \PartySet \setminus Z_c$.
  We claim that, throughout the protocol, the adversary does not learn anything about the share $s^{(c)}$, and its view remains
  independent of $s^{(c)}$. The $\Z_s$-privacy then follows from the fact that
  $\D$ selects the share $s^{(c)}$ randomly, and 
   the probability distribution of $s^{(c)}$ is {\it independent} of $s$.
   
   Since $\D$ is {\it honest}, it sends the share $s^{(c)}$, {\it only} to the parties in $S_c$, which consists of {\it only}
   honest parties. Similarly, as part of the pairwise consistency tests, the share $s^{(c)}$ is exchanged {\it only}
   among the parties in $S_c$. Moreover, since $S_c$ consists of {\it no} corrupt parties, it follows that {\it no}
   party from $S_c$ ever broadcasts an $\NOK(c, \star)$ message, corresponding to $S_c$.
   Consequently, {\it no} party from $S_c$, as well as $\D$, ever broadcasts a $\resolve(c, s^{(c)})$ message.
   Thus, throughout the protocol, the view of the adversary remains independent of the share $s^{(c)}$.
\end{proof}

We next proceed to prove the commitment properties. We start with the {\it synchronous} network.
\begin{lemma}
\label{lemma:VSSSynchronousCommitment}
In protocol $\VSS$, if $\D$ is corrupt, then either no honest party obtains any output, or there exists a value $s^{\star} \in \K$
 held by $\D$, which is secret-shared with respect to the sharing specification $\SharingSpec$, such that 
  the following hold.
       \begin{myitemize}
       \item[--] If any honest party outputs its shares at time $\TimeVSS = 2\Delta + 2\TimeBCAST + \TimeBA$, then all honest parties
       output their shares at time $\TimeVSS$.
       \item[--] If any honest party outputs its shares at time $T > \TimeVSS$,
    then every honest party outputs its shares by time $T + 2\Delta$.  
       \end{myitemize}   
\end{lemma}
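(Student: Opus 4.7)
The plan is to leverage the $\Z_s$-consistency of $\BA$ in a synchronous network so that all honest parties agree on the output of $\BA$, and then case-split on this common output. If $\D$ never invokes the protocol (or fails to trigger any of the sub-protocols it needs to initiate), then trivially no honest party produces an output, placing us in the first disjunct. Otherwise, every honest party participates with some input in $\BA$ at local time $2\Delta + 2\TimeBCAST$, so $\BA$ produces a common value $b \in \{0,1\}$ at time $\TimeVSS$.

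\textbf{Case $b=1$.} I would first invoke Lemma~\ref{lemma:GenVSSSync} to conclude that every honest party accepted some common broadcast $\C_1, \ldots, \C_q$ at time $2\Delta + 2\TimeBCAST$, and that for every $S_m$ the honest parties in $\C_m$ received a common share $s^{\star(m)}$ from $\D$ by time $\Delta$. Fix an arbitrary $S_m$ and an honest $P_i \in S_m$; I need to show $P_i$ outputs $s^{\star(m)}$ at time $\TimeVSS$. If some $\NOK(m, \cdot)$ was received through regular-mode by time $2\Delta + \TimeBCAST$, then acceptance of $\C_m$ guarantees that $\D$ broadcast $\resolve(m, s^{(m)})$ and a subset $\C'_m \subseteq \C_m$ with $\C_m \setminus \C'_m \in \Z_s$ also broadcast the same $\resolve(m, s^{(m)})$, all received by time $2\Delta + 2\TimeBCAST$ via regular mode; by $\Z_s$-consistency of $\BCAST$ these messages are seen identically by all honest parties. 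Moreover, the $\Q^{(3)}(\PartySet, \Z_s)$ condition forces $\C'_m$ to contain an honest member of $\C_m$, which pins down $s^{(m)} = s^{\star(m)}$. Otherwise, no $\NOK(m, \cdot)$ was received, and I would argue (as in the proof of Lemma~\ref{lemma:VSSSynchronousCorrectness}) that the honest parties in $\C_m$ deliver $s^{\star(m)}$ to every honest party in $S_m$ by time $2\Delta$, so condition B of the protocol triggers with the common value $s^{\star(m)}$. Any other candidate set would, by the same $\Q^{(3,1)}(\PartySet, \Z_s, \Z_a)$ pigeonhole used in the proof of Lemma~\ref{lemma:VSSAsynchronousCorrectness}, contain an honest member of $\C_m$ and thus force the same value $s^{\star(m)}$. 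Setting $s^{\star} = s^{\star(1)} + \cdots + s^{\star(q)}$ yields the first bullet at time exactly $\TimeVSS$.

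\textbf{Case $b=0$.} Here the protocol postpones $\BCAST$ of $\E_1, \ldots, \E_q$ until after $\TimeVSS$, so no honest party outputs at time $\TimeVSS$; both timing clauses are trivially satisfied at $\TimeVSS$ itself. If $\D$ never broadcasts acceptable $\E_m$ sets, no honest party outputs, placing us in the first disjunct. Otherwise, suppose some honest party outputs at time $T > \TimeVSS$. Since that party accepted $\E_m$'s, it received them through regular-mode at some time $T' \le T$; by $\Z_s$-fallback-consistency of $\BCAST$ in a synchronous network, every honest party receives and accepts the same $\E_1, \ldots, \E_q$ by time $T' + 2\Delta$. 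The same $\Q^{(3,1)}(\PartySet, \Z_s, \Z_a)$ pigeonhole argument (using that every $Z \in \Z_a$ sits inside some $Z' \in \Z_s$) shows that honest parties in any candidate $\E'_m \subseteq \E_m$ with $\E_m \setminus \E'_m \in \Z_s$ must agree on a common share $s^{\star(m)}$, and that this share coincides with the one held by the honest parties in $\E_m$; I define $s^{\star}$ as the sum of these $s^{\star(m)}$. The $2\Delta$ slack between first and last output then follows from the $\BCAST$ guarantees, combined with the fact that pairwise-consistency values from the (honest) parties in $\E_m$ needed by parties outside $\E_m$ are delivered within $\Delta$ in a synchronous network.

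The main obstacle I anticipate is the argument in Case $b=0$ that the share recovered by a party outside $\E_m$ via the $\Z_s$-majority rule is necessarily the same as the share held by the honest members of $\E_m$; this is where I will have to carefully combine the fact that $\E_m \setminus (\E_m \cap \Honest) \in \Z_a$ with the embedding of $\Z_a$ into $\Z_s$ and the $\Q^{(3,1)}$ condition to rule out an alternative candidate made entirely of corrupt parties. A secondary subtlety is propagating the $2\Delta$ bound simultaneously through the $\BCAST$ fallback consistency for the $\E_m$ sets and through the delay of the pairwise-consistency messages; these two sources of delay compose cleanly because both are bounded by $\Delta$ in a synchronous network and the pairwise-consistency messages have already been sent long before time $T$.
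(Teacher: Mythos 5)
Your overall structure matches the paper's: a case-split on the output of $\BA$, invocation of Lemma~\ref{lemma:GenVSSSync} when the output is $1$, and reliance on the $\Z_s$-fallback-consistency of $\BCAST$ when the output is $0$. The $b=0$ case you sketch is essentially correct: there $S_m \setminus \E_m \in \Z_a$ supplies the single $\Z_a$ block, and combining it with $\PartySet \setminus S_m$, $\E_m \setminus \E'_m$ and the synchronous corrupt set $Z_s^{\star}$ (all in $\Z_s$) does violate $\Q^{(3,1)}(\PartySet,\Z_s,\Z_a)$, so a candidate $\E'_m$ consisting entirely of corrupt parties is impossible.

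There is, however, a genuine gap in the $b=1$, no-$\NOK$ sub-case. You transfer the $\Q^{(3,1)}$ pigeonhole from Lemma~\ref{lemma:VSSAsynchronousCorrectness} to argue that any candidate $\C''_m$ with $\C_m \setminus \C''_m \in \Z_s$ must contain an honest member, but that transfer fails here: in a \emph{synchronous} network the corrupt set $Z_s^{\star}$ lies in $\Z_s$, not $\Z_a$, and the four sets $\PartySet \setminus S_m$, $S_m \setminus \C_m$, $\C_m \setminus \C''_m$ and $Z_s^{\star}$ are all drawn from $\Z_s$ with no $\Z_a$ block anywhere to invoke. Covering $\PartySet$ by four $\Z_s$-sets is perfectly consistent with $\Q^{(3,1)}$ (and a fortiori with the $\Q^{(3)}(\PartySet,\Z_s)$ you cite for sub-case~(a) — that one would need $\Q^{(4)}(\PartySet,\Z_s)$, which is not assumed). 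The paper circumvents this via a timing/$\NOK$ contradiction rather than a pigeonhole: by Lemma~\ref{lemma:GenVSSSync} every honest member of $\C_m$ received the common share by time $\Delta$ and forwarded it by time $2\Delta$; hence if some honest $P_i$ had also received a different value from some $P_j \in S_m$ by time $2\Delta$, it would have broadcast $\NOK(m,i)$ at time $2\Delta$, which by $\Z_s$-validity of $\BCAST$ would reach every honest party through regular-mode by time $2\Delta+\TimeBCAST$, contradicting the no-$\NOK$ assumption. This rules out any disagreeing $\C''_m$ irrespective of whether it contains an honest party, and is the argument your proof needs here. (For sub-case~(a), in contrast, no such pigeonhole is needed at all: all honest parties see the same $\resolve$ from $\D$ through regular-mode by $\Z_s$-consistency of $\BCAST$ and simply output that value.)
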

\begin{proof}
If {\it no} honest party obtains any output, then the lemma holds trivially. So, consider the case when some
  {\it honest} party, say $P_h$,
 obtains an output. We note that {\it every} honest party participates with {\it some} input in the instance of $\BA$ at
 time $2\Delta + 2\TimeBCAST$. 
  Hence, by the $\Z_s$-consistency and $\Z_s$-guaranteed liveness properties of $\BA$ in the {\it synchronous} network, 
  the instance of $\BA$ generates an output at time $2\Delta + 2\TimeBCAST + \TimeBA$ for
  every honest party. Now there are two possible cases.
   \begin{enumerate}
    \item {\bf The output of $\BA$ is $1$}:
     In this case, from Lemma \ref{lemma:GenVSSSync}, it follows that {\it all honest} parties participated with input
     $1$ at time $2\Delta + 2\TimeBCAST$ during the instance of $\BA$. This implies that {\it all honest}
     parties received $\C_1, \ldots, \C_q$ from the broadcast of $\D$ through regular-mode at time 
     $2\Delta + 2\TimeBCAST$ and accepted these sets.
     We next claim that corresponding to {\it every}
    $S_m \in \SharingSpec$, {\it all honest} parties in $S_m$ output some {\it common} share say $s^{(m)}$, corresponding to $S_m$,
    at time  $2\Delta + 2\TimeBCAST + \TimeBA$. 
    Let $s^{\star} \defined s^{(1)} + \ldots + s^{(m)}$. It will then follow  
     that the value $s^{\star}$ is
     secret-shared at time
    $2\Delta + 2\TimeBCAST + \TimeBA$.  
    
    The proof of the above claim closely follows the $\Z_s$-correctness proof in the {\it synchronous} network
     (see the proof of Lemma \ref{lemma:VSSSynchronousCorrectness}). 
     Consider an {\it arbitrary} $S_m \in \SharingSpec$. Then there are two possible cases.
       \begin{myenumerate}
        \item[--] {\bf An $\NOK(m, \star)$ message was received at time $2\Delta + \TimeBCAST$ through regular-mode from
        the broadcast of some party in $S_m$}: In this case, 
        {\it all honest} parties in $S_m$ will output some {\it common} share, say $s^{(m)}$, corresponding to $S_m$
        at time $2\Delta + 2\TimeBCAST + \TimeBA$.  The proof for this is {\it exactly} the same as
        the proof of Lemma \ref{lemma:VSSSynchronousCorrectness} for the same case. 
        \item[--] {\bf No $\NOK(m, \star)$ message was received within time $2\Delta + \TimeBCAST$ through
        regular-mode from the broadcast of any party in $S_m$}:
        Let $\Honest_m$ be the set of {\it honest} parties in $S_m$. Since 
        $\C_m$ is accepted by all the honest parties, it follows that the parties in
        $\Honest_m \cap \C_m$ constitute a clique in the consistency graph $G^{(m)}$
        of {\it all} honest parties. This further implies that all the parties in  
        $\Honest_m \cap \C_m$ received a common share from $\D$ corresponding to $S_m$, say
        $s^{(m)}$. Moreover, 
         from 
         Lemma \ref{lemma:GenVSSSync}, {\it all} the 
          parties in $\Honest_m \cap \C_m$ would have received
         $s^{(m)}$ from 
         $\D$, within time $\Delta$. 
         Now similar to the proof for the same case in Lemma \ref{lemma:VSSSynchronousCorrectness}, 
         it can be concluded that  all
         honest parties output $s^{(m)}$ as the share corresponding to $S_m$,
          at time 
        $2\Delta + 2\TimeBCAST + \TimeBA$.    
    \end{myenumerate}
    \item {\bf The output of $\BA$ is $0$}: Since $P_h$ has obtained an output, it implies that
    it has received sets $\E_1, \ldots, \E_q$ from the broadcast of $\D$ and accepted them. Let
    $T$ be the time at which $P_h$ accepted $\E_1, \ldots, \E_q$. Note that
    $T > 2\Delta + 2\TimeBCAST + \TimeBA$. This is because from the protocol steps,
    the {\it honest} parties start participating in the instance of $\BCAST$ of $\D$ for broadcasting $\E_1, \ldots, \E_q$, 
    {\it only after} time
    $2\Delta + 2\TimeBCAST + \TimeBA$. By the $\Z_s$-consistency and 
    $\Z_s$-fallback consistency of $\BCAST$ in the {\it synchronous} network, 
    {\it all honest} parties will receive and accept the sets $\E_1, \ldots, \E_q$,
    latest by time $T + 2\Delta$. Since the parries have accepted 
    $\E_1, \ldots, \E_q$, it implies that 
    corresponding to every $S_m \in \SharingSpec$, 
     {\it all} the honest parties in $\E_m$ have received a common value from $\D$, say $s^{(m)}$. 
     We claim that all the honest parties in $S_m$ output $s^{(m)}$ as the share, corresponding to $S_m$, latest by time 
     $T + 2\Delta$. This will automatically imply that the value
     $s^{\star} \defined s^{(1)} + \ldots + s^{(m)}$ is secret-shared, latest by time $T + 2\Delta$.
     
     The proof for the above claim is exactly the {\it same} as the proof of Lemma \ref{lemma:VSSAsynchronousCorrectness},
     for the case when the output of $\BA$ is $0$ and is omitted.          
   \end{enumerate}
\end{proof}

We finally prove the commitment property in an {\it asynchronous} network.
\begin{lemma}
\label{lemma:VSSAsynchronousCommitment}
In protocol $\VSS$, if $\D$ is corrupt, then either no honest party obtains any output or there exists some value $s^{\star} \in \K$ held
 by $\D$, such that almost-surely $s^{\star}$ is secret-shared, with respect to the sharing specification $\SharingSpec$.
\end{lemma}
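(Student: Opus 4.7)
The plan is to adapt the synchronous commitment argument (Lemma~\ref{lemma:VSSSynchronousCommitment}) to the asynchronous setting, now relying on the $\Z_a$-almost-surely liveness and $\Z_a$-consistency of $\BA$ together with the fallback-mode guarantees of $\BCAST$. Assume that some honest party obtains an output (otherwise the lemma is vacuous). Since every honest party joins the instance of $\BA$ with some input at local time $2\Delta + 2\TimeBCAST$ (by $\AdvStructure$-liveness of $\BCAST$ in the asynchronous network), all honest parties almost-surely agree on a common $b^\star \in \{0,1\}$. The proof then splits on $b^\star$.

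For $b^\star = 1$, by $\Z_a$-validity of $\BA$ at least one honest party $P_h$ participated with input $1$, hence $P_h$ received and accepted $\C_1,\ldots,\C_q$ from $\D$'s broadcast through regular-mode by time $2\Delta + 2\TimeBCAST$. Applying $\Z_a$-fallback validity and $\Z_a$-fallback consistency of $\BCAST$ to $\D$'s broadcast of these sets, and also to the $\OK$ and $\resolve$ messages that $P_h$ used to validate its acceptance, every honest party eventually sees the same witnesses and accepts $\C_1,\ldots,\C_q$. For each $m$, the clique structure of $\C_m$ and the pairwise broadcast protocol force all honest parties in $\C_m$ to have received a common share $s^{(m)}$ from $\D$. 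I then claim that each honest $P_i \in S_m$ eventually outputs this $s^{(m)}$: in the worst case, condition~C eventually fires, because the set of honest members of $\C_m$ (whose complement in $\C_m$ lies in $\Z_a$) constitutes an acceptable $\C'''_m$, and its members eventually deliver $s^{(m)}$ to $P_i$. Whichever of the conditions A, B, or C triggers first, the output is unambiguously $s^{(m)}$: each witness subset of $\C_m$ must intersect the honest parties of $\C_m$, for otherwise $\PartySet$ would be covered by three sets from $\Z_s$ together with one from $\Z_a$, contradicting $\Q^{(3,1)}(\PartySet, \Z_s, \Z_a)$.

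For $b^\star = 0$, if no honest party ever accepts $\E_1,\ldots,\E_q$ then no honest party outputs and we are in the trivial case. Otherwise, some honest $P_h$ receives and accepts $\E_1,\ldots,\E_q$ at some time $T$, and by $\Z_a$-fallback consistency of $\BCAST$ every other honest party eventually does so too, since the $\OK$-based clique condition and the $S_m \setminus \E_m \in \Z_a$ check are monotone in the received $\OK$ messages. The clique condition forces all honest parties in $\E_m$ to hold a common $s^{(m)}$ from $\D$. For any honest $P_i \in S_m$, if $P_i \in \E_m$ it outputs $s^{(m)}$ directly; otherwise $P_i$ eventually finds a witness $\E'_m \subseteq \E_m$ with $\E_m \setminus \E'_m \in \Z_s$ on which a common value is reported, and by the same $\Q^{(3,1)}$ covering argument (using $S_m \setminus \E_m \in \Z_a$ together with the fact that every subset of $\Z_a$ is contained in some subset of $\Z_s$), $\E'_m$ must contain an honest member of $\E_m$, so the common value is $s^{(m)}$. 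In either case, setting $s^\star \defined s^{(1)}+\cdots+s^{(q)}$ yields the desired secret-sharing with respect to $\SharingSpec$.

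The main obstacle, as in several of the earlier lemmas, is the combinatorial step that every eligible witness subset inside a core ($\C_m$ or $\E_m$) necessarily contains at least one honest member of that core; this is what pins down a single well-defined $s^{(m)}$ and forces all honest parties to converge to it. It is driven entirely by $\Q^{(3,1)}(\PartySet,\Z_s,\Z_a)$ together with the inclusion condition on $\Z_a$ inside $\Z_s$. Once that step has been verified for each of the four witness types ($\C'_m$, $\C''_m$, $\C'''_m$, and $\E'_m$), the remainder of the proof is a mechanical re-run of the $\Z_a$-correctness analysis from the honest-$\D$ case.
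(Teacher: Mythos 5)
Your proof is correct and follows essentially the same line of argument as the paper's: it conditions on the common output of $\BA$, shows that the honest members of $\C_m$ (resp.\ $\E_m$) hold a common share $s^{(m)}$, and then uses the $\Q^{(3,1)}$ covering argument (with the $\Z_a\hookrightarrow\Z_s$ inclusion to promote $\Z_a$ sets into $\Z_s$ where needed) to show every eligible witness subset intersects those honest members, forcing every honest party to converge on $s^{(m)}$. The only small imprecision is invoking $\Z_a$-fallback \emph{validity} for a corrupt $\D$'s broadcasts --- only fallback consistency applies there --- but this does not affect the argument since consistency already suffices.
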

\begin{proof}
If no honest party obtains an output, then the lemma holds trivially. So, consider the case when some honest party, say
 $P_h$, has obtained an output in $\VSS$. Note that {\it every honest} party participates with some input in the instance of 
  $\BA$ at local time $2\Delta + 2\TimeBCAST$. 
  Hence, from the $\Z_a$-almost-surely liveness and $\Z_a$-consistency properties of $\BA$ in the {\it asynchronous} network,
  it follows that the instance of $\BA$  eventually generates a {\it common} output for every honest party.
  Now, there are two possible cases.
  \begin{myenumerate}
  \item {\bf The output of $\BA$ instance is $1$}:
   This implies that all honest parties eventually receive the sets
   $\C_1, \ldots, \C_q$ from the broadcast of $\D$ (either through regular-mode or fallback-mode),
   and accept these sets. The proof for this is identical to the case ``when the output of $\BA$ instance is $1$", in
   the proof of Lemma \ref{lemma:VSSAsynchronousCorrectness}. This further implies that corresponding to each
   $S_m \in \SharingSpec$, all the {\it honest} parties in $\C_m$ have received a common value
    from $\D$, say $s^{(m)}$, since
   the {\it honest} parties in $\C_m$ constitute a clique. We claim that all the {\it honest} parties in $S_m$ eventually output
   $s^{(m)}$ as the share, corresponding to the set $S_m$. This will automatically imply that the value
   $s^{\star} \defined s^{(1)} + \ldots + s^{(q)}$ is eventually secret-shared among the parties. 
   The proof for the claim again closely follows the case ``when the output of $\BA$ instance is $1$", in
   the proof of Lemma \ref{lemma:VSSAsynchronousCorrectness}. 
   
     So, consider an {\it arbitrary honest} $P_i \in S_m$, and let
     $Z_a^{\star}$ be the set of {\it corrupt} parties. 
    From the protocol steps, $P_i$ computes its share corresponding to $S_m$ based on one of the following three conditions.
        Assuming that at least one of these conditions eventually hold for $P_i$, we
        first show that the share computed by $P_i$ corresponding to $S_m$ is bound to be
        $s^{(m)}$. This is followed by showing that at least one of these conditions eventually hold for $P_i$.        
      \begin{myitemize}
            \item[--] {\bf Condition A}: At time $2\Delta + 2\TimeBCAST$, there exists some value $s'^{(m)}$ such that
            party $P_i$ received the $\resolve(m, s'^{(m)})$ message
              from the broadcast of $\D$, as well as from the broadcast of a subset of parties $\C'_m \subseteq \C_m$ through
             regular-mode,
            where $\C_m \setminus \C'_m \in \Z_s$. And consequently $P_i$ sets $s'^{(m)}$ as its share corresponding to $S_m$.
            
             We argue that $\C'_m$ is bound to contain {\it at least} one honest party from $\C_m$, which broadcasts  
             $\resolve(m, s'^{(m)})$ message, where $s'^{(m)} = s^{(m)}$. 
                 In more detail, let $S_m \setminus \C_m = Z_{\alpha} \in \Z_s$ and
            $\C_m \setminus \C'_m = Z_{\beta} \in \Z_s$.
            Also, note that $\PartySet \setminus S_m = Z_m \in \Z_s$.
            Now if $\C'_m$ {\it does not} contain any honest party from $\C_m$, it implies that
            $\C'_m \subseteq Z_a^{\star} \in \Z_a$.
            This further implies that $\PartySet \subseteq Z_m \cup Z_{\alpha} \cup Z_{\beta} \cup Z_a^{\star}$, which is a contradiction
            to the $\Q^{(3, 1)}(\PartySet, \Z_s, \Z_a)$ condition.          
            \item[--] {\bf Condition B}: At time $2\Delta$, there exists a subset of parties $\C''_m \subseteq \C_m$, where $\C_m \setminus \C''_m \in \Z_s$, such that $P_i$ received a common value from {\it all}
            the parties in $\C''_m$, say $s'^{(m)}$. And consequently, $P_i$ sets $s'^{(m)}$ as its share corresponding to $S_m$.

             We claim that the subset $\C''_m$ is bound to contain at least
            one {\it honest} party from $\C_m$, who would have sent 
            $s'^{(m)} = s^{(m)}$ to $P_i$ within time $2\Delta$.
            The proof for the claim is similar to the case above where we have shown that
            $\C'_m$ is bound to contain at least
            one {\it honest} party from $\C_m$.     
            \item[--] {\bf Condition C}: There exists a subset of parties $\C'''_m \subseteq \C_m$ where
            $\C_m \setminus \C'''_m \in \Z_a$, such that $P_i$ received a common value from {\it all}
            the parties in $\C'''_m$, say $s'^{(m)}$. And consequently, $P_i$ sets $s'^{(m)}$ as its share, corresponding to
            $S_m$. In this case also, one can show that the subset $\C'''_m$ is bound to contain at least
            one {\it honest} party from $\C_m$, who would have sent 
            $s'^{(m)} = s^{(m)}$ to $P_i$. This is because $\Z_s$ and $\Z_a$
            satisfy the $\Q^{(3, 1)}(\PartySet, \Z_s, \Z_a)$ condition and every subset in 
            $\Z_a$ is a subset of some subset in $\Z_s$.                                   
       \end{myitemize}
    Thus, we have shown that {\it irrespective} of the way $P_i$ would have computed its output share corresponding to
    $S_m$, it is bound to
    be the {\it same} as $s^{(m)}$. 
    Now it is easy to see that at least one of the conditions A, B and C above, eventually holds for 
    $P_i$. Specially, the condition C is bound to {\it eventually} hold, irrespective of conditions A and B.
    This is because the set of {\it honest} parties in $\C_m$, namely the parties in
    $\C_m \setminus Z_a^{\star}$, {\it always} constitute a candidate $\C'''_m$ set for $P_i$. 
    This follows from the fact that the share $s^{(m)}$ from {\it all} the parties in $\C_m \setminus Z_a^{\star}$ will be eventually delivered
    to $P_i$. 
  \item {\bf The output of $\BA$ instance is $0$}:
 Since $P_h$ has computed its output, it follows that it has received the sets $\E_1, \ldots, \E_q$ from the broadcast of
  $\D$, and accepted these sets. From the $\Z_a$-weak consistency and $\Z_a$-fallback consistency properties of $\BCAST$
  in the {\it asynchronous} network, it follows that {\it all honest} parties eventually receive these sets and accept them.
  We also note that corresponding to every $S_m \in \SharingSpec$, {\it all honest} parties
  in $\E_m$ received the same share from $\D$, say $s^{(m)}$. Now similar to the proof for the case ``the output of $\BA$ instance is $0$"
  in the proof of Lemma \ref{lemma:VSSAsynchronousCommitment}, it can be shown that {\it all honest} parties in $S_m$
  eventually set $s^{(m)}$ as the share, corresponding to $S_m$.
  This automatically implies that the value $s^{\star} \defined s^{(1)} + \ldots + s^{(q)}$
  is eventually secret-shared.
  \end{myenumerate}
\end{proof}
\begin{lemma}
\label{lemma:VSSCC}
Protocol $\VSS$ incurs a communication of $\Order(|\Z_s| \cdot (n^4 \log{|\K|} + n^5  (\log n + \log{|\Z_s|})))$ bits and invokes one instance of 
 $\BA$.
\end{lemma}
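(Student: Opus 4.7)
The plan is to account for the communication incurred in each phase of $\VSS$, observe that the \emph{broadcast-heavy} phases dominate, and then sum.

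First I would handle the point-to-point communication in Phases~I and II. For each $S_m \in \SharingSpec$, dealer $\D$ sends one element of $\K$ to each party of $S_m$ (Phase~I), and then each party of $S_m$ forwards its share to every other party in $S_m$ for pairwise consistency (Phase~II). Over all $|\Z_s|$ sets this is at most $\Order(|\Z_s|\cdot n^{2}\log|\K|)$ bits, which will be absorbed by the broadcast cost.

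Next I would tally the broadcast cost, which is where the main obstacle lies: each use of $\BCAST$ on an $\ell$-bit message itself costs $\Order(n^{3}\ell)$ bits (Theorem~\ref{thm:BCAST}), so a naive count could easily give a bound that is too loose. I will count broadcasts category by category. For Phase~III, each party may broadcast an $\OK$ or $\NOK$ label for each of the $\Order(n)$ potential pairs in each of the $|\Z_s|$ sets to which it belongs; each such label is $\Order(\log n + \log |\Z_s|)$ bits, giving $\Order(n\cdot|\Z_s|\cdot n)=\Order(|\Z_s|\cdot n^{2})$ broadcast instances across all parties, and hence total cost $\Order(|\Z_s|\cdot n^{5}(\log n + \log|\Z_s|))$ bits. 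For Phase~IV, each party may broadcast at most one $\resolve(m,\cdot)$ message (of size $\Order(\log|\Z_s|+\log|\K|)$) per set, yielding $\Order(|\Z_s|\cdot n)$ broadcasts of that size, and hence $\Order(|\Z_s|\cdot n^{4}(\log|\K|+\log|\Z_s|))$ bits. Finally, $\D$ broadcasts the tuples $(\C_{1},\ldots,\C_{q})$ in Phase~IV and, if needed, $(\E_{1},\ldots,\E_{q})$ in Phase~VI; each tuple has total size $\Order(|\Z_s|\cdot n)$, giving $\Order(|\Z_s|\cdot n^{4})$ bits per tuple.

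Summing the contributions and noting that $\Order(|\Z_s|\cdot n^{4}(\log|\K|+\log|\Z_s|))$ is absorbed into $\Order(|\Z_s|\cdot (n^{4}\log|\K| + n^{5}(\log n + \log|\Z_s|)))$, I obtain the claimed communication complexity. The single instance of $\BA$ is invoked in Phase~V to decide whether $\D$'s $(\C_1,\ldots,\C_q)$ was accepted, and no other $\BA$ call occurs, which establishes the ``invokes one instance of $\BA$'' part of the statement.
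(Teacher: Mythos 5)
Your proof is correct and follows essentially the same phase-by-phase accounting as the paper: tally the total $\BCAST$ input bits in each phase, multiply by the $\Order(n^3)$ per-bit broadcast cost, and note the single $\BA$ call in Phase~V. (A minor slip in wording: in Phase~III each $P_i \in S_m$ broadcasts an $\OK(m,i,j)$ for each of $\Order(n)$ \emph{peers} $P_j$, not pairs, but your count of $\Order(|\Z_s|\cdot n^2)$ broadcast instances is right.)
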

\begin{proof}
During phase I, corresponding to every $S_m \in \SharingSpec$, dealer
  $\D$ needs to send a share, consisting of one element of $\K$, to {\it all} the parties
 in $S_m$. This incurs a total communication of $\Order(|\Z_s| \cdot n)$ elements from $\K$.
  During phase II, every party in $S_m$ sends an element from $\K$ to every other party in $S_m$. 
  This incurs a total communication of $\Order(|\Z_s| \cdot n^2)$ elements from $\K$.
  During phase III, every party in $S_m$ may broadcast an $\OK$ message for every other party in $S_m$.
  This requires broadcasting $\Order(|\Z_s| \cdot n^2  \cdot (\log n + \log{|\Z_s|}))$ bits,
   as each $\OK$ message encodes the identity of
  two parties and the identity of a set from $\Z_s$, requiring $2\log n + \log{|\Z_s|}$ bits. 
  Additionally, during phase III, corresponding to any $S_m \in \SharingSpec$, each party in $S_m$
  may broadcast an $\NOK$ message. This requires broadcasting $\Order(|\Z_s| \cdot n \cdot (\log{|\Z_s|} + \log{n}))$ bits, 
  as each $\NOK$ message encodes the identity of a party and the identity of a set from $\Z_s$.
    During phase IV, corresponding to each $S_m \in \SharingSpec$, up to $\Order(n)$ elements from $\K$ may be broadcasted
  to resolve the conflicts. This incurs a total broadcast of  $\Order(|\Z_s| \cdot n)$ elements from $\K$. Additionally, $\D$
  may broadcast sets $\C_1, \ldots, \C_q$, where each set can be encoded by an $n$-bit vector. 

  The communication complexity now follows by summing up all the above costs and from the communication complexity of
  the protocol $\BCAST$, along with the fact that each element from $\K$
   can be represented by $\log{|\K|}$ bits, and the fact that $q = |\Z_s|$.
\end{proof}
\begin{remark}[{\bf Further improvement in the communication complexity of $\VSS$}]
\label{remark:improvement}
The complexity of the phase III can be significantly reduced by making the following modification: 
 party $P_i$ now broadcasts a {\it single} $\OK(i, j)$ message corresponding to $P_j$, if
  the pairwise consistency test is positive between $P_i$'s and $P_j$'s share across {\it all} the sets in $\SharingSpec$, to which
  both $P_i$ and $P_j$ belongs. That is, if corresponding to {\it every} $S_m \in \SharingSpec$ such that $P_i, P_j \in S_m$,
  the condition $s_i^{(m)} = s_j^{(m)}$ holds. Consequently, during phase III, only $\Order(n^2)$ $\OK$ messages need to be broadcasted,
  where the size of each message will now be {\it only} $\Order(\log n)$ bits. This will reduce the communication complexity
  of $\VSS$ to $\Order(|\Z_s| \cdot n^4 \cdot ( \log{|\K|} + \log{|\Z_s|} + \log n) + n^5 \log n)$ bits, along with one instance of $\BA$.
\end{remark}

The proof of the following theorem now follows from Lemma \ref{lemma:VSSSynchronousCorrectness}-\ref{lemma:VSSCC}
 and Remark \ref{remark:improvement}. \\~\\
 \noindent {\bf Theorem \ref{thm:VSS}.}
{\it Let $\Adv$ be an adversary characterized by an adversary structure $\Z_s$ in a synchronous network and adversary structure $\Z_a$ in an asynchronous network satisfying the following conditions.
  \begin{myitemize}
  \item[--] $\Z_s \neq \Z_a$;
  \item[--] For every subset $Z \in \Z_a$, there exists a subset $Z' \in \Z_s$, such that $Z \subseteq Z'$;
  \item[--] $\Z_s$ and $\Z_a$ satisfy the $\Q^{(3, 1)}(\PartySet, \Z_s, \Z_a)$ condition.
  \end{myitemize}
  Moreover, let $\SharingSpec = \{S_m : S_m = \PartySet \setminus Z_m \; \mbox{ and } \; Z_m \in \Z_s \}$.
    Then protocol $\VSS$ achieves the following properties, where $\D$ has a private input $s \in \K$ for $\VSS$.
   \begin{myitemize}
   \item[--] If $\D$ is honest, then the following hold.
         \begin{myitemize}
         \item[--] {\bf $\Z_s$-correctness}: In a synchronous network, $s$ is secret-shared with respect to $\SharingSpec$, at 
            time $\TimeVSS = 2\Delta + 2\TimeBCAST + \TimeBA$. 
         \item[--] {\bf $\Z_a$-correctness}: In an asynchronous network, almost-surely, $s$ is eventually secret-shared, 
          with respect to $\SharingSpec$.
         \item[--] {\bf Privacy}: The view of $\Adv$ remains independent of $s$, irrespective of the network type.  
         \end{myitemize}   
   \item[--] If $\D$ is corrupt, then either no honest party obtains any output or there exists some
   $s^{\star} \in \K$, such that the following hold. 
          \begin{myitemize}
          \item[--]  {\bf $\Z_a$-commitment}: In an asynchronous network, almost-surely, $s^{\star}$ is eventually secret-shared, 
          with respect to $\SharingSpec$.          
           \item[--]  {\bf $\Z_s$-commitment}: In a synchronous network, $s^{\star}$ is secret-shared, with respect to $\SharingSpec$,
            such that the following hold.
              \begin{myitemize}
		       \item[--] If any honest party outputs its shares at time $\TimeVSS$, then all honest parties
		       output their shares at time $\TimeVSS$.
		       \item[--] If any honest party outputs its shares at time $T > \TimeVSS$,
		    then every honest party outputs its shares by time $T + 2\Delta$.  
	       \end{myitemize}   
          \end{myitemize}
   \item[--] {\bf Communication Complexity}: The protocol incurs a communication of
    $\Order(|\Z_s| \cdot n^4 \cdot ( \log{|\K|} + \log{|\Z_s|} + \log n) + n^5 \log n)$ bits and invokes one instance of $\BA$.
   \end{myitemize}
 }


\section{Properties of the Preprocessing Phase Protocol}
\label{app:Preprocessing}
In this section, we formally present our preprocessing phase protocol
 and prove its properties. We first start with the description of our ACS
 protocol and proof of its properties.
 \subsection{Protocol $\ACS$ and Its Properties}
    For simplicity, we present $\ACS$ when $L = 1$;
   the modifications for a general  $L$ are straightforward. 
\begin{protocolsplitbox}{$\ACS(\Q)$}{Agreement on a common subset of parties.
 The above code is for each $P_i \in \PartySet$.}{fig:ACS}
  \begin{myitemize}
  \item[--] {\bf Phase I --- Secret Sharing the Input}: If $P_i \in \Q$, the do the following.
       \begin{myitemize}
        \item[--] On having the input $x_i$, act as a dealer $\D$ 
           and invoke an instance $\VSS^{(i)}$ of $\VSS$ with input
            $x_i$.
          \item[--] Participate in the instance $\VSS^{(j)}$ invoked by every $P_j \in \Q$,
         and {\color{red} wait for time $\TimeVSS$}.               
        Initialize a set $\CSet_i = \emptyset$ {\color{red} after time $\TimeVSS$} 
        and include $P_j \in \Q$ in 
            $\CSet_i$ if an output is computed in the instance $\VSS^{(j)}$.  
         \end{myitemize}   

     \item[--] {\bf Phase II --- Identifying the Common Set of Selected Parties}: 
           \begin{myitemize}
            \item[--] Corresponding to every $P_j \in \Q$,
             participate in an instance of $\BA^{(j)}$ of $\BA$ with input $1$, if $P_j \in \CSet_i$.
            \item[--] Once $1$ has been obtained as the output from instances of $\BA$ corresponding to a set of parties
            in $\Q \setminus Z$ for some $Z \in \Z_s$, participate 
            with input $0$ in all the $\BA$ instances $\BA^{(j)}$, such that 
            $P_j \in \Q$ and 
            $P_j \not \in \CSet_i$.
            \item[--] Once all the instances of $\BA$ corresponding to the parties in $\Q$ have produced a binary output,
             then output $\CoreSet$, which is the set of parties $P_j \in \Q$ such that
                           $1$ is obtained as the output in the instance $\BA^{(j)}$. 
             \end{myitemize} 
       \end{myitemize}
\end{protocolsplitbox}
 
 We next prove the properties of the protocol $\ACS$, assuming $L = 1$. 
 \begin{lemma}
  \label{lemma:ACS}.
 Let $\Adv$ be an adversary characterized by an adversary structure $\Z_s$ in a synchronous network and adversary structure $\Z_a$ in an asynchronous network satisfying the following conditions.
  \begin{myitemize}
  \item[--] $\Z_s \neq \Z_a$;
  \item[--] For every subset $Z \in \Z_a$, there exists a subset $Z' \in \Z_s$, such that $Z \subseteq Z'$;
  \item[--] $\Z_s$ and $\Z_a$ satisfy the $\Q^{(3, 1)}(\PartySet, \Z_s, \Z_a)$ condition.
  \end{myitemize}
  Moreover, let $\SharingSpec = \{S_m : S_m = \PartySet \setminus Z_m \; \mbox{ and } \; Z_m \in \Z_s \}$.
    Furthermore, let $\Q \subseteq \PartySet$, such that
   $\Z_s$ and $\Z_a$
   {\it either} satisfy the $\Q^{(1, 1)}(\Q, \Z_s, \Z_a)$ condition 
   {\it or} the $\Q^{(3, 1)}(\Q, \Z_s, \Z_a)$ condition.  
   Then, $\ACS$ achieves the following, where every (honest) $P_i \in \Q$ has input $x_i \in \K$ for $\ACS$.
   \begin{myitemize}
   \item[--] {\bf $\Z_s$-correctness}: If the network is synchronous,
    then at time $\TimeACS \defined \TimeVSS + 2\TimeBA$, all honest parties output a common subset of parties
     $\CoreSet \subseteq \Q$,
    where $\Q \setminus \CoreSet \in \Z_s$, such that the following hold.
      \begin{myitemize}
      \item[--] All honest parties from $\Q$ will be present in $\CoreSet$. 
      \item[--] Corresponding to every $P_j \in \CoreSet$, there exists some $x^{\star}_j \in \K$, where
      $x^{\star}_j = x_j$ for an honest $P_j$, such that $x^{\star}_j$ is secret-shared with respect to $\SharingSpec$.
      \end{myitemize}
   \item[--] {\bf $\Z_a$-correctness}: If the network is asynchronous,
   then almost-surely, the honest parties eventually output a subset $\CoreSet \subseteq \Q$
    where $\Q \setminus \CoreSet \in \Z_s$. Moreover, 
    corresponding to every $P_j \in \CoreSet$, there exists an $x^{\star}_j \in \K$, where
      $x^{\star}_j = x_j$ for an honest $P_j$, such that $x^{\star}_j$ is eventually secret-shared with respect to $\SharingSpec$.
    \item[--] {\bf Privacy}: Irrespective of the network type, the view of the adversary remains independent of the inputs $x_i$ corresponding to
    the honest parties $P_i \in \Q$. 
    \item[--] {\bf Communication Complexity}: The protocol incurs a communication of
    $\Order(|\Z_s| \cdot n^5 \cdot ( \log{|\K|} + \log{|\Z_s|} + \log{n} ) + n^6 \log n)$  bits and invokes $\Order(n)$ instances of $\BA$.
   \end{myitemize}     
   \end{lemma}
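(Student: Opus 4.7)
The plan is to prove each of the four stated properties in turn, with correctness and liveness argued together. The central observation driving both correctness proofs is that, in either network type, the set of honest parties inside $\Q$, call it $\Honest_{\Q}$, has enough structure to trigger the zero-input condition in Phase~II. The corrupt parties inside $\Q$ form a subset of some $Z_c \in \Z_s$ in a synchronous network (resp.~some $Z_c \in \Z_a$ in an asynchronous network), and under conditions $\Con$ every $Z \in \Z_a$ is contained in some $Z' \in \Z_s$, so $\Honest_{\Q}$ always contains a set of the form $\Q \setminus Z$ with $Z \in \Z_s$. Combined with the completion guarantees for $\VSS^{(j)}$ when $P_j$ is honest, this ensures the trigger condition is eventually reached.

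For $\Z_s$-correctness, I would first invoke $\Z_s$-correctness of $\VSS$ to conclude that each instance $\VSS^{(j)}$ with honest $P_j \in \Q$ secret-shares $x_j$ by time $\TimeVSS$, so $P_j \in \CSet_i$ for every honest $P_i$ at time $\TimeVSS$. The $\Z_s$-commitment of $\VSS$ (``if any honest party outputs its shares at time $\TimeVSS$, all honest parties do'') additionally synchronises the inclusion of corrupt dealers into every $\CSet_i$. Hence all honest parties enter $\BA^{(j)}$ with input $1$ for every $P_j \in \Honest_{\Q}$; $\Z_s$-validity and $\Z_s$-guaranteed liveness of $\BA$ give output $1$ within $\TimeBA$ additional time, and the set $\Honest_{\Q}$ supplies the required $\Q \setminus Z$ to fire the trigger. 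Honest parties then feed input $0$ into the remaining $\BA^{(j)}$, which terminate consistently by time $\TimeVSS + 2\TimeBA = \TimeACS$ by $\Z_s$-consistency, yielding the common $\CoreSet$. For every $P_j \in \CoreSet$, $\Z_s$-validity of $\BA$ forces at least one honest input of $1$, meaning $\VSS^{(j)}$ had produced an output for some honest party, so $\Z_s$-commitment delivers an $x_j^\star \in \K$ that is secret-shared with respect to $\SharingSpec$ at time $\TimeACS$.

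The $\Z_a$-correctness proof proceeds along the same lines, replacing $\Z_s$-correctness of $\VSS$ with $\Z_a$-correctness, and $\Z_s$-validity/liveness of $\BA$ with $\Z_a$-almost-surely liveness and $\Z_a$-consistency. The one delicate point is that, in the asynchronous setting, for a corrupt $P_j$ different honest parties may enter $\BA^{(j)}$ with different inputs: some may have $P_j \in \CSet_i$ and vote $1$, while others trigger first and vote $0$. Almost-sure liveness nonetheless guarantees termination and consistency produces a common output, with $\Z_a$-commitment of $\VSS$ supplying the required $x_j^\star$ for each $P_j \in \CoreSet$. Privacy is immediate: the only input-dependent communication occurs inside the $\VSS^{(j)}$ instances, whose privacy is given by Theorem~\ref{thm:VSS}, and each $\BA^{(j)}$ input depends only on whether an output was computed, which is independent of the shared value. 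The complexity bound is obtained by aggregating $|\Q| \leq n$ invocations of $\VSS$, each costing $\Order(|\Z_s| \cdot n^4(\log|\K| + \log|\Z_s| + \log n) + n^5 \log n)$ bits and one $\BA$ instance, together with the $|\Q|$ outer $\BA^{(j)}$ instances, giving $\Order(n)$ $\BA$ invocations in total. The main obstacle I anticipate is carefully bookkeeping the timing of the synchronous case so that both the trigger firing and the subsequent zero-input $\BA$ instances really complete by $\TimeACS$, and checking that the ``every $Z \in \Z_a$ lies in some $Z' \in \Z_s$'' assumption is what makes the trigger reachable in the asynchronous case when $Z_c \in \Z_a$ is strictly larger than any single member of $\Z_s$.
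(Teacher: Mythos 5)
Your proposal is correct and follows essentially the same route as the paper's proof: fire the Phase II trigger via the $\VSS$ instances of the honest dealers in $\Q$ (whose completion is guaranteed by $\Z_s$- resp.~$\Z_a$-correctness of $\VSS$), appeal to validity/consistency/liveness of $\BA$ for the common $\CoreSet$, then use $\Z_s$- resp.~$\Z_a$-commitment of $\VSS$ together with the observation that a $\BA^{(j)}$ output of $1$ forces at least one honest input of $1$ to extract $x_j^\star$ for corrupt $P_j \in \CoreSet$. Your extra observation that $\Z_s$-commitment synchronises $\CSet_i$ across honest parties in the synchronous case is true but not used by the paper; the paper only needs that the honest-dealer instances put $\Honest_\Q$ in every $\CSet_i$ by $\TimeVSS$. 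The one piece you flag as left to do — the synchronous timing — is indeed the residual work: the paper closes it by noting that $\Z_s$-commitment bounds the extra delay for a corrupt dealer's shares by $2\Delta$ and that $2\Delta < \TimeBA$, so the shares are in place by $\TimeVSS + 2\TimeBA = \TimeACS$.
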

 \begin{proof}
 The {\it privacy} property simply follows from the {\it privacy} property of $\VSS$, while {\it communication complexity} follows 
 from the {\it communication complexity} of $\VSS$ and 
  the fact that up to $\Order(n)$ instances of
  $\VSS$ may be involved, since $|\Q| = \Order(n)$. We next prove the {\it correctness} property.

We first consider a {\it synchronous} network. Let $Z_s^{\star} \in \Z_s$ be the set of {\it corrupt} parties, and
 let $\Honest \defined \Q \setminus \Z_s^{\star}$
  be the set of {\it honest} parties.
   We note that $\Honest \neq \emptyset$, since it is given that
    $\Z_s$ and $\Z_a$
   {\it either} satisfy the $\Q^{(1, 1)}(\Q, \Z_s, \Z_a)$ condition 
   {\it or} $\Q^{(3, 1)}(\Q, \Z_s, \Z_a)$ condition.  
     Corresponding to each $P_j \in \Honest$, every {\it honest} $P_i$ obtains the output
   $\{[x_j]_m \}_{P_i \in S_m}$ at time $\TimeVSS$ during $\VSS^{(j)}$,
    which follows from the $\Z_s$-correctness of $\VSS$ in
    the {\it synchronous} network.
  Consequently, at time $\TimeVSS$, the set $\CSet_i$ of {\it every honest}
  $P_i$ will satisfy the condition $\Q \setminus \CSet_i \in \Z_s$.
  This is because $\Honest \subseteq \CSet_i$ will hold at time $\TimeVSS$. 
   Now, corresponding to each $P_j \in \Honest$, each $P_i \in \Honest$
    starts participating with input $1$ in the instance $\BA^{(j)}$ at time
  $\TimeVSS$. Hence, from the $\Z_s$-validity and $\Z_s$-guaranteed liveness properties of
   $\BA$ in the {\it synchronous} network, it follows that at time $\TimeVSS + \TimeBA$,
    every $P_i \in \Honest$ obtains the output $1$ during the instance $\BA^{(j)}$
  corresponding to every $P_j \in \Honest$. Consequently, at time $\TimeVSS + \TimeBA$,
   every {\it honest} party will start participating in the remaining $\BA$ instances for which no input has been provided yet (if there are any),
  and from the $\Z_s$-guaranteed liveness and $\Z_s$-consistency properties
   of $\BA$ in the {\it synchronous} network,
   these $\BA$ instances will produce common outputs for every honest party at time $\TimeACS = \TimeVSS + 2 \TimeBA$.
  Hence, at time $\TimeACS$, every honest party outputs a common $\CoreSet$, where $\Q \setminus \CoreSet \in \Z_s$,
   and where each $P_j \in \Honest$ will be present in $\CoreSet$.
  We next wish to show that corresponding to {\it every} $P_j \in \CoreSet$, 
  there exists some value which is secret-shared among the parties with respect to $\SharingSpec$.
    
 Consider an {\it arbitrary} party $P_j \in \CoreSet$. If $P_j \in \Honest$,
  then as argued above, every $P_i \in \Honest$ computes the shares $\{[x_j]_m \}_{P_i \in S_m}$ at time $\TimeVSS$ itself. 
  Next, consider a {\it corrupt} $P_j \in \CoreSet$. Since $P_j \in \CoreSet$, it follows that the instance $\BA^{(j)}$ produces the output $1$
  for all honest parties.
   This further implies that at least one 
  {\it honest} $P_i$ must have computed some output during the instance $\VSS^{(j)}$ by time
   $\TimeVSS + \TimeBA$ (implying that $P_j \in \CSet_i$) and participated
  with input $1$ in the instance $\BA^{(j)}$. This is because if, at time $\TimeVSS + \TimeBA$, 
  party  $P_j$ {\it does not} belong to the $\CSet_i$ set of {\it any} honest $P_i$,
   then it implies that {\it all honest} parties start participating with input
   $0$ in the instance $\BA^{(j)}$ at time $\TimeVSS + \TimeBA$. 
   Then, from the $\Z_s$-validity of $\BA$ in the {\it synchronous} network, every honest party would have
   obtained the output $0$ in the instance
  $\BA^{(j)}$ and hence, $P_j$ will not be present in $\CoreSet$, which is a contradiction. 
  Now, if $P_i$ has computed
   some output during $\VSS^{(j)}$ at time $\TimeVSS + \TimeBA$, then from the $\Z_s$-commitment of $\VSS$ in the
   {\it synchronous} network, it follows that
   there exists some value $x_j^{\star}$, such that $x_j^{\star}$ will be secret-shared with respect to $\SharingSpec$ 
   by time $\TimeVSS + \TimeBA + 2 \Delta$.
  Since $2 \Delta < \TimeBA$, it follows that at time $\TimeACS$, every $P_i \in \Honest$ has $\{[x^{\star}_j]_m \}_{P_i \in S_m}$,
   thus proving the 
  {\it correctness} property in a {\it synchronous} network.
  
  We next consider an {\it asynchronous} network. 
   Let $Z_a^{\star} \in \Z_a$ be the set of {\it corrupt} parties, and
 let $\Honest \defined \Q \setminus \Z_a^{\star}$
  be the set of {\it honest} parties. Notice that $\Q \setminus \Honest \in \Z_s$, since every subset in $\Z_a$ is a subset of some
  subset in $\Z_s$. 
    Now, irrespective of the way messages are scheduled, there
    will eventually be subset of parties $\Q \setminus Z$ for some $Z \in \Z_s$, such that
    all the parties in $\Honest$ participate with input $1$ in the instances of $\BA$ corresponding to the parties in
    $\Q \setminus Z$. This is because
  corresponding to every $P_j \in \Honest$, every $P_i \in \Honest$ {\it eventually} computes
  an output during the instance $\VSS^{(j)}$,
   which follows from the 
  $\Z_a$-correctness of $\VSS$ in the {\it asynchronous} network. 
   So, even if the {\it corrupt} parties $P_j$ do not invoke their respective $\VSS^{(j)}$ instances, 
   there will be a set of $\BA$ instances corresponding to the parties in $\Q \setminus Z$ for some $Z \in \Z_s$ 
    in which all
   the parties in $\Honest$ eventually participate with input $1$.
   Consequently, from the $\Z_a$-almost-surely liveness and $\Z_a$-consistency properties of 
   $\BA$ in the {\it asynchronous} network, these $\BA$ instances
  eventually produce the output $1$ for all the parties in $\Honest$.
   Hence, all the parties in $\Honest$ eventually participate with some input in the remaining $\BA$ instances,
  which almost-surely produce some output for every honest party eventually. 
   From the properties of $\BA$ in the {\it asynchronous} network,
   it then follows that all the honest
  parties output the same $\CoreSet$.
  
  Now, consider an {\it arbitrary} 
  $P_j \in \CoreSet$. It implies that the honest parties computed the output $1$ during
   the instance $\BA^{(j)}$, which further implies that at least one
   {\it honest} $P_i$ participated with input $1$ in $\BA^{(j)}$ after computing
    its output in the instance $\VSS^{(j)}$. If $P_j$ is {\it honest}, then the
   $\Z_a$-correctness of $\VSS$ in the {\it asynchronous} network 
    guarantees that $x_j$ will be {\it eventually} secret-shared with respect to $\SharingSpec$, during $\VSS^{(j)}$.
   On the other hand, if $P_j$ is {\it corrupt}, then the $\Z_a$-commitment 
   of $\VSS$ in the {\it asynchronous} network 
    guarantees that there exists some $x^{\star}_j \in \K$, such that  
    $x^{\star}_j$ will {\it eventually} be secret-shared with respect to $\SharingSpec$, during $\VSS^{(j)}$.
 \end{proof}
 We next discuss the modifications needed in the protocol $\ACS$ when each party in $\Q$ has $L$ inputs.
  \paragraph{\bf $\ACS$ for $L$ Inputs:}
  Protocol $\ACS$ can be easily extended if each party has $L$
   inputs. Now, each $P_j$ shares $L$ values through instances of $\VSS$.
   Moreover, the parties participate with input $1$ in the instance $\BA^{(j)}$ if they have computed some output
   in {\it all} the $L$ instances of $\VSS$ invoked by $P_j$ as a dealer. 
   The rest of the protocol steps remain the same. The protocol will now incur a communication of
    $\Order(L \cdot |\Z_s| \cdot n^5 (\log{|\K|} + \log{|\Z_s|}) + n^6 \log n)$ bits and invokes $\Order(n)$ instances of $\BA$.
 \subsection{Multiplication Protocol and Its Properties}
 In this section, we formally present our multiplication protocol $\Mult$ and
  prove its properties. For simplicity, we first consider the case when 
  $L = 1$, and present the formal details of the protocol in Fig \ref{fig:Mult}.
  \begin{protocolsplitbox}{$\Mult([a], [b])$}{The perfectly-secure multiplication protocol}{fig:Mult}
	\justify
Let $[a]_1, \ldots, [a]_q$ and $[b]_1, \ldots, [b]_q$ be the shares, corresponding to $[a]$ and $[b]$ respectively, where
 every $P_i \in \PartySet$ holds the shares $\{ [a]_m, [b_m] \}_{P_i \in S_m}$. 
\begin{myitemize}
\item[--] For every ordered pair $(l, m) \in \{1, \ldots, q \} \times \{1, \ldots, q \}$, the parties do the following to compute
 $[a_l \cdot b_m]$, where
 $a_l \defined [a]_l$ and $b_m \defined [b]_m$.
  \begin{myitemize}
   \item[--] Let $\Q_{l, m} \defined S_l \cap S_m$. The parties execute an instance $\ACS(\Q_{l, m})$ of $\ACS$, where 
   every $P_i \in \Q_{l, m}$ participates with the input $a_l \cdot b_m$.
   \item[--] Let $\R_{l, m} \subseteq \Q_{l, m}$ be the common subset of parties obtained as the output
   from the instance $\ACS(\Q_{l, m})$, where $\Q_{l, m} \setminus \R_{l, m} \in \Z_s$.
    Let $r \defined |\R_{l, m}|$ and let $\R_{l, m} = \{P_{\alpha_1}, \ldots, P_{\alpha_r}\}$.
    Moreover, corresponding to $P_{\alpha_i} \in \R_{l ,m}$, let 
     $v_i$ be the value which is secret-shared on the behalf of 
     $P_{\alpha_i}$ during $\ACS(\Q_{l, m})$.\footnote{If $P_{\alpha_i}$ is {\it honest}, then $v_i = a_l \cdot b_m$ holds.}.
    \item[--] The parties publicly check whether $v_1, \ldots, v_r$ are all equal. For  this, the parties locally compute
    $r - 1$ differences $[d_1] \defined [v_1] - [v_2], \ldots, [d_{r - 1}] \defined [v_1] - [v_r]$. This is followed by publicly reconstructing the differences
    $d_1, \ldots, d_{r - 1}$ by invoking instances of $\Rec$, and checking if all of them are $0$.
      \begin{myitemize}
      \item[--] If $d_1 = \ldots = d_{r - 1} = 0$, then the parties set $[a_l \cdot b_m] = [v_1]$.
      \item[--] Else, the parties publicly reconstruct $[a]_l$ and $[b]_m$ by invoking instances
      $\Rec([a]_l, \SharingSpec)$ and  $\Rec([b]_m, \SharingSpec)$
       of $\Rec$.
      The parties then set $[a_l \cdot b_m]$ to the default sharing of $a_l \cdot b_m$, where 
       $[a_l \cdot b_m]_1 = a_l \cdot b_m$ and 
        $[a_l \cdot b_m]_2 = \ldots =  [a_l \cdot b_m]_q = 0$.\footnote{The vector of shares 
        $(s, 0, \ldots, 0)$ can be considered as a default sharing of any given $s \in \K$.}
      \end{myitemize}
  \end{myitemize}
\item[--] The parties output $[a \cdot b] = \displaystyle \sum_{(l, m) \in \{1, \ldots, q \} \times \{1, \ldots, q \}} [a_l \cdot b_m]$.
\end{myitemize}
\end{protocolsplitbox}

 We next prove the properties of $\Mult$. We first start with a helping lemma.
 \begin{lemma}
\label{lemma:MultHelpingLemma}
In protocol $\Mult$, the following hold for every ordered pair $(l, m) \in \{1,
\ldots, q \} \times \{1, \ldots, q\}$.
\begin{myitemize}
\item[--] $\Z_s$ and $\Z_a$ satisfy the $\Q^{(1, 1)}(\Q_{l, m}, \Z_s, \Z_a)$ condition, where $\Q_{l, m} = S_l \cap S_m$.
\item[--] In a synchronous network, at time $\TimeACS$, all honest parties will 
 compute a set $\R_{l, m} \subseteq \Q_{l, m}$, 
 where $\Q_{l, m} \setminus \R_{l, m} \in \Z_s$, such 
 that $\R_{l, m}$ contains all honest parties from $\Q_{l, m}$. 
\item[--] In an asynchronous network, almost-surely, all honest parties will eventually 
 compute a set $\R_{l, m} \subseteq \Q_{l, m}$, 
 where $\Q_{l, m} \setminus \R_{l, m} \in \Z_s$, such 
 that $\R_{l, m}$ contains at least one honest party from $\Q_{l, m}$. 
\end{myitemize}
\end{lemma}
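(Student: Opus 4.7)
\medskip
\noindent\textbf{Proof proposal for Lemma \ref{lemma:MultHelpingLemma}.}
The plan is to first establish the $\Q^{(1,1)}(\Q_{l,m}, \Z_s, \Z_a)$ condition as a direct algebraic consequence of the global $\Q^{(3,1)}(\PartySet, \Z_s, \Z_a)$ condition, and then invoke Lemma \ref{lemma:ACS} on the subset $\Q_{l,m}$ to obtain the synchronous and asynchronous guarantees on $\R_{l,m}$. The only non-routine piece is extracting the ``at least one honest party'' guarantee for the asynchronous case, which will follow by a small counting-of-adversary-sets argument using $\Q^{(1,1)}(\Q_{l,m}, \Z_s, \Z_a)$.

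\medskip
\noindent\emph{Part 1 (the $\Q^{(1,1)}$ condition).} Recall that $S_l = \PartySet \setminus Z_l$ and $S_m = \PartySet \setminus Z_m$ for some $Z_l, Z_m \in \Z_s$, hence $\Q_{l,m} = \PartySet \setminus (Z_l \cup Z_m)$. Suppose, for contradiction, that there exist $Z \in \Z_s$ and $Z' \in \Z_a$ with $\Q_{l,m} \subseteq Z \cup Z'$. Then $\PartySet \subseteq Z_l \cup Z_m \cup Z \cup Z'$, which is a union of three subsets from $\Z_s$ and one subset from $\Z_a$. This contradicts $\Q^{(3,1)}(\PartySet, \Z_s, \Z_a)$. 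Hence $\Q^{(1,1)}(\Q_{l,m}, \Z_s, \Z_a)$ holds.

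\medskip
\noindent\emph{Parts 2 and 3 (applying $\ACS$).} Having established $\Q^{(1,1)}(\Q_{l,m}, \Z_s, \Z_a)$, the instance $\ACS(\Q_{l,m})$ satisfies the precondition of Lemma \ref{lemma:ACS}. Let $\R_{l,m}$ be the common output set. In a \emph{synchronous} network, Lemma \ref{lemma:ACS} guarantees that at time $\TimeACS$ all honest parties output the \emph{same} set $\R_{l,m} \subseteq \Q_{l,m}$ with $\Q_{l,m} \setminus \R_{l,m} \in \Z_s$, and moreover \emph{every} honest party in $\Q_{l,m}$ lies in $\R_{l,m}$. This is exactly the second bullet. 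In an \emph{asynchronous} network, Lemma \ref{lemma:ACS} guarantees almost-sure eventual termination with a common $\R_{l,m} \subseteq \Q_{l,m}$ satisfying $\Q_{l,m} \setminus \R_{l,m} \in \Z_s$.

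\medskip
\noindent\emph{Asynchronous honest-party argument (the one step that needs care).} For the asynchronous case, I still need to show $\R_{l,m}$ contains at least one honest party from $\Q_{l,m}$. Let $Z_a^{\star} \in \Z_a$ be the set of corrupt parties. Suppose for contradiction that every party in $\R_{l,m}$ is corrupt, i.e.\ $\R_{l,m} \subseteq Z_a^{\star}$. Let $Z \defined \Q_{l,m} \setminus \R_{l,m} \in \Z_s$. Then
\[
\Q_{l,m} = (\Q_{l,m} \setminus \R_{l,m}) \cup \R_{l,m} \subseteq Z \cup Z_a^{\star},
\]
with $Z \in \Z_s$ and $Z_a^{\star} \in \Z_a$. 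This violates $\Q^{(1,1)}(\Q_{l,m}, \Z_s, \Z_a)$ established in Part 1, which is the contradiction we need. Hence $\R_{l,m}$ contains at least one honest party from $\Q_{l,m}$, completing the third bullet.

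\medskip
\noindent The main (and really only) obstacle is making sure the right adversary-structure condition is applied on the right subset; everything else is a direct invocation of Lemma \ref{lemma:ACS}. No delicate scheduling or probability arguments are required beyond what $\ACS$ already provides.
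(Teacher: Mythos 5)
Your proof is correct and follows essentially the same route as the paper's: the $\Q^{(1,1)}(\Q_{l,m},\Z_s,\Z_a)$ condition is derived from $\Q^{(3,1)}(\PartySet,\Z_s,\Z_a)$ by writing $\Q_{l,m}=\PartySet\setminus(Z_l\cup Z_m)$, the two network cases then follow from Lemma~\ref{lemma:ACS}, and the ``at least one honest party in $\R_{l,m}$'' claim for the asynchronous case is obtained by the same contradiction $\Q_{l,m}\subseteq(\Q_{l,m}\setminus\R_{l,m})\cup Z_a^{\star}$ violating $\Q^{(1,1)}(\Q_{l,m},\Z_s,\Z_a)$. Your write-up is in fact a bit cleaner: the paper's statement of this step has the containments written backwards and cites $\Q^{(1,1)}(\PartySet,\cdots)$ where it means $\Q^{(1,1)}(\Q_{l,m},\cdots)$, whereas you get the set inclusions and the subset $\Q_{l,m}$ right.
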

\begin{proof}
The first property follows from the fact that $\Z_s$ and $\Z_a$ satisfy
  the $\Q^{(3, 1)}(\PartySet, \Z_s, \Z_a)$ condition, and $\Q_{l, m} \defined S_l \cap S_m = \Partyset \setminus (Z_l \cup Z_m)$.
  Hence, if $\Z_s$ and $\Z_a$ {\it do not} satisfy
  the $\Q^{(1, 1)}(\PartySet, \Z_s, \Z_a)$ condition, then it implies that there exist sets $Z_{\alpha} \in \Z_s$
   and  $Z_{\beta} \in \Z_a$ such that
   $\Q_{l, m} \subseteq Z_{\alpha} \cup Z_{\beta}$. This further implies that $Z_{\alpha} \cup Z_{\beta} \cup Z_l \cup Z_m \subseteq \Partyset$, 
   which contradicts the fact that
  $\Z_s$ and $\Z_a$ satisfy
  the $\Q^{(3, 1)}(\PartySet, \Z_s, \Z_a)$ condition.
  
  For proving the second property, we consider a {\it synchronous} network. Let $Z_s^{\star} \in \Z_s$ be the set of {\it corrupt} parties.
  Since $\Z_s$ and $\Z_a$ satisfy the $\Q^{(1, 1)}(\Q_{l, m}, \Z_s, \allowbreak \Z_a)$ condition,
  from the $\Z_s$-correctness of $\ACS$ in the {\it synchronous} network, it follows that all honest 
   parties will compute $\R_{l, m}$ as the output of the instance $\ACS(\Q_{l, m})$, such that
   $\Q_{l, m} \setminus \R_{l, m} \in \Z_s$. Moreover, 
    {\it all} the parties from $\Q_{l, m} \setminus Z_s^{\star}$
   will be present in $\R_{l, m}$. This proves the second property.
   
   We next consider an {\it asynchronous} network. Let $Z_a^{\star} \in \Z_a$ be the set of {\it corrupt} parties.
   From the $\Z_a$-correctness of $\ACS$ in the {\it asynchronous} network, it follows that, 
   almost-surely, 
   all honest 
   parties will eventually compute $\R_{l, m}$ as the output of the instance $\ACS(\Q_{l, m})$ such that
   $\Q_{l, m} \setminus \R_{l, m} \in \Z_s$. Moreover, $\R_{l, m} \not \subset Z_a^{\star}$, as otherwise,
   $\Z_s$ and $\Z_a$ {\it do not} satisfy the $\Q^{(1, 1)}(\Q_{l, m}, \Z_s, \Z_a)$ condition, which is a contradiction. 
   Consequently, $\R_{l, m}$ will consists of {\it at least} one honest party from $\Q_{l, m}$. 
\end{proof}
 We now proceed to prove the properties of $\Mult$. 
  \begin{lemma}
  \label{lemma:MultSynchronousProperties}
  In a synchronous network, all honest parties output $[c]$ within time
     $\TimeMult = \TimeACS + 2\Delta$, where $c = a \cdot b$.
    Moreover, the view of the adversary remains independent of $a$ and $b$.
  \end{lemma}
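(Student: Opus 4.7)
The plan is to handle correctness/liveness and privacy separately, each reduced to a per-summand analysis. Fixing an arbitrary ordered pair $(l,m)\in\{1,\ldots,q\}^2$, I will analyze the computation of $[a_l\cdot b_m]$; by linearity of the sharing, summing over pairs locally produces $[a\cdot b]=[c]$, and the overall timing is dominated by one invocation of $\ACS$ followed by at most two sequential rounds of $\Rec$.

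For correctness and timing, I would first invoke the second item of Lemma \ref{lemma:MultHelpingLemma} to conclude that at time $\TimeACS$ all honest parties agree on a common $\R_{l,m}\subseteq\Q_{l,m}=S_l\cap S_m$, with $\R_{l,m}$ containing every honest party of $\Q_{l,m}$. Since each honest $P_{\alpha_i}\in\R_{l,m}$ holds identical shares $[a]_l$ and $[b]_m$, it inputs $v_i=a_l\cdot b_m$ to $\ACS(\Q_{l,m})$. The analysis then splits on whether the corrupt parties in $\R_{l,m}$ also shared $a_l\cdot b_m$: in the ``clean'' case all $v_i$ coincide, so the $r-1$ reconstructed differences all equal $0$ within an additional $\Delta$ time by Lemma \ref{lemma:Rec}, and the parties take $[v_1]$ as $[a_l\cdot b_m]$. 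In the ``dirty'' case some $d_i\ne 0$ is detected; the parties then run $\Rec$ on $[a]_l$ and $[b]_m$ in a further $\Delta$ time and install the default sharing of the now-public product. Either way, a correct sharing of $a_l\cdot b_m$ w.r.t.\ $\SharingSpec$ is available at time $\TimeACS+2\Delta=\TimeMult$.

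For privacy, the only information leaving the shared representation consists of the differences $d_i$ and, in the fallback branch, the shares $[a]_l$ and $[b]_m$ themselves. Write $Z_c$ for the corrupt set and split on whether $\Q_{l,m}$ contains a corrupt party. If $Z_c\cap\Q_{l,m}=\emptyset$, then by the previous paragraph all $v_i$ coincide, only zeros are reconstructed, the fallback is never triggered, and nothing about $a_l$ or $b_m$ leaks. If instead $Z_c\cap\Q_{l,m}\ne\emptyset$, then some corrupt party lies in both $S_l$ and $S_m$ and thus already holds $[a]_l$, $[b]_m$, and the product $a_l\cdot b_m$; so revealing any $d_j=v_1-v_j$ or publicly reconstructing $[a]_l,[b]_m$ conveys no new information to the adversary. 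Privacy of the individual sharings produced inside $\ACS(\Q_{l,m})$ is inherited from the privacy clause of Lemma \ref{lemma:ACS}.

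The main technical point I expect to dwell on is this combinatorial alignment: the very condition that triggers public reconstruction at pair $(l,m)$---namely, a cheating corrupt party in $\R_{l,m}\subseteq\Q_{l,m}$---must be shown to already place $a_l\cdot b_m$ in the adversary's view. The correctness and timing portions, by contrast, plug routinely into the synchronous guarantees of $\ACS$ (Lemma \ref{lemma:ACS}) and $\Rec$ (Lemma \ref{lemma:Rec}) once the per-summand claim is in place.
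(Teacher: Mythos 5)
Your proposal is correct and follows essentially the same route as the paper: a per-summand analysis relying on the second item of Lemma~\ref{lemma:MultHelpingLemma} for agreement on $\R_{l,m}$ containing all honest parties of $\Q_{l,m}$, a two-case split on whether the reconstructed differences are all zero (with the timing $\TimeACS+2\Delta$ following from at most two sequential $\Rec$ invocations), and a privacy argument that reduces the fallback reconstruction to the observation that a cheating corrupt member of $\R_{l,m}\subseteq\Q_{l,m}=S_l\cap S_m$ already holds $[a]_l$ and $[b]_m$. The only stylistic deviation is that you phrase the privacy split on whether $Z_c\cap\Q_{l,m}$ is empty rather than on which branch of the protocol executed, but the two framings are logically equivalent and both land on the same combinatorial fact.
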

  \begin{proof}
  To prove the lemma, we claim that for each ordered pair $(l, m) \in \{1, \ldots, q \} \allowbreak \times \{1, \ldots, q\}$, all honest
   parties securely compute a secret-sharing of the summand $[a]_l \cdot [b]_m$, within time
   $\TimeACS + 2\Delta$, without revealing any additional information to the adversary.
    The proof then follows from the fact that 
    the following holds:
    \[ [c] = [a \cdot b] = \displaystyle \sum_{(l, m) \in \{1, \ldots, q \} \times \{1, \ldots, q \}} [[a]_l \cdot [b]_m].\]
   We now proceed to prove our claim, for which we consider
   an {\it arbitrary} ordered pair $(l, m) \in \{1, \ldots, q \} \times \{1, \ldots, q\}$.
   
   From Lemma \ref{lemma:MultHelpingLemma}, at time $\TimeACS$, all honest parties will compute the set $\R_{l, m}$
   as the output of the instance $\ACS(\Q_{l, m})$, where $\R_{l, m} \subseteq \Q_{l, m}$, and where
   {\it all honest} parties from $\Q_{l, m}$ will be present in $\R_{l, m}$. Let $|\R_{l, m}| = r$ and $\R_{l, m} = \{P_{\alpha_1}, \ldots, P_{\alpha_r} \}$.
    Moreover, without loss of generality, let $P_{\alpha_1}$ be {\it honest}. 
    From the $\Z_s$-correctness of $\ACS$ in the {\it synchronous} network, corresponding to {\it every} $P_{\alpha_j} \in \R_{l, m}$, there exists
    some value, $v_j$, which will be secret-shared among the parties on the behalf of $P_{\alpha_j}$ during the instance
    $\ACS(\Q_{l, m})$.
    Moreover, since $P_{\alpha_1}$ is assumed to be {\it honest}, from the protocol steps, $v_1 = [a]_l \cdot [b]_m$.
    
    Now there are two possible cases:
  \begin{myitemize}
  \item {\it Every party in $\R_{l, m}$ participates with input
   $[a]_l \cdot [b]_m$ during $\ACS(\Q_{l, m})$}: In this case, $v_1 = v_2 = \ldots = v_r = [a]_l \cdot [b]_m$ and hence
   $d_1 = \ldots = d_{r - 1} = 0$. From the properties of $\Rec$, within time $\TimeACS + \Delta$,
    the honest parties reconstruct the $r - 1$
   differences $d_1, \ldots, d_{r - 1}$
    and find all of them to be $0$. Hence, they set $[[a]_l \cdot [b]_m]$ to $[v_1]$, where
    $v_1$ is the same as $[a]_l \cdot [b]_m$. 
    The privacy in this case follows from privacy of $\ACS$ and the fact that the
   adversary only learns the $r - 1$ differences, which are all $0$.   
  \item {\it Some party in $\R_{l, m}$ participates with an input, which is different from $[a]_l \cdot [b]_m$, during
   $\ACS(\Q_{l, m})$}: Let $P_i \in \R_{l, m}$ be a corrupt party, corresponding to which $v_i$
   is shared during $\ACS(\Q_{l, m})$,
    where $v_i \neq [a]_l \cdot [b]_m$. Since $v_i \neq v_1$, it follows that at least one of the 
    $r- 1$ differences $d_1, \ldots, d_{r - 1}$ will be non-zero,
  and the parties detect the same when these differences are publicly reconstructed at time $\TimeACS + \Delta$.
   In this case, the parties reconstruct the shares
  $[a]_l$ and $[b]_m$ at time $\TimeACS + 2\Delta$,
   and take a default secret-sharing of $[a]_l \cdot [b]_m$.
    The privacy in this case follows from the fact that since there
  exists a {\it corrupt} party in $\R_{l, m}$, the adversary is already aware of the shares 
  $[a]_l$ and $[b]_m$ and hence, publicly reconstructing these values
  does not add any new information to the view of the adversary.
  \end{myitemize}
  \end{proof}
 
 We next consider prove the properties in an {\it asynchronous} network.
   \begin{lemma}
  \label{lemma:MultAsynchronousProperties}
  In an asynchronous network, almost-surely, all honest parties eventually output $[c]$, 
   where $c = a \cdot b$.
    Moreover, the view of the adversary remains independent of $a$ and $b$.
  \end{lemma}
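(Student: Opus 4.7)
The plan is to mirror the structure of the proof of Lemma \ref{lemma:MultSynchronousProperties}, but to appeal to the asynchronous guarantees of the sub-protocols $\ACS$, $\Rec$, and of the underlying secret-sharing. As in the synchronous case, by linearity of the secret-sharing it suffices to show that for each ordered pair $(l, m) \in \{1, \ldots, q\} \times \{1, \ldots, q\}$, the parties almost-surely compute a secret-sharing $[[a]_l \cdot [b]_m]$, without leaking any information about $a$ or $b$ beyond what the adversary already knows; the output $[c]$ is then obtained as $[c] = \sum_{l, m} [[a]_l \cdot [b]_m]$.

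Fix such a pair $(l, m)$. By Lemma \ref{lemma:MultHelpingLemma}, $\Z_s$ and $\Z_a$ satisfy the $\Q^{(1, 1)}(\Q_{l, m}, \Z_s, \Z_a)$ condition, and hence by the $\Z_a$-correctness of $\ACS$ in the asynchronous network, almost-surely the instance $\ACS(\Q_{l, m})$ will eventually produce, for every honest party, a common set $\R_{l, m} \subseteq \Q_{l, m}$ with $\Q_{l, m} \setminus \R_{l, m} \in \Z_s$, such that $\R_{l, m}$ contains at least one honest party $P_{\alpha_1}$ from $\Q_{l, m}$; moreover, corresponding to every $P_{\alpha_j} \in \R_{l, m}$ some value $v_j$ is eventually secret-shared with respect to $\SharingSpec$, and by the $\Z_a$-correctness of the $\VSS$ instances inside $\ACS$, the value $v_1$ contributed by the honest $P_{\alpha_1}$ equals $[a]_l \cdot [b]_m$.

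Once the $v_j$'s are secret-shared, the parties locally compute $[d_1], \ldots, [d_{r-1}]$ and invoke $\Rec$ on each, which by its asynchronous correctness eventually yields $d_1, \ldots, d_{r-1}$ to every honest party. We then split into the same two cases as in the synchronous proof. If all differences are $0$, then $v_1 = \ldots = v_r = [a]_l \cdot [b]_m$ (because the honest $P_{\alpha_1}$'s value is the correct one), and the parties set $[[a]_l \cdot [b]_m] = [v_1]$. Otherwise, some $d_j \neq 0$, which certifies that $\R_{l, m}$ contains a corrupt party; the parties then invoke $\Rec([a]_l, \SharingSpec)$ and $\Rec([b]_m, \SharingSpec)$, which eventually deliver $[a]_l$ and $[b]_m$ to all honest parties, who then adopt the default sharing of $[a]_l \cdot [b]_m$. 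In either case, the parties almost-surely eventually obtain the desired summand-sharing. Privacy follows as before from the privacy of $\ACS$ together with the observation that public reconstruction of $[a]_l$ and $[b]_m$ is triggered only when $\R_{l, m}$ contains a corrupt party, who by definition of $\SharingSpec$ already holds both $[a]_l$ and $[b]_m$.

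The main obstacle, relative to the synchronous case, is the weaker guarantee from $\ACS$: in an asynchronous network, $\R_{l, m}$ need only contain \emph{one} honest party rather than \emph{all} honest parties of $\Q_{l, m}$. The key observation that unblocks the proof is that a single honest contributor $P_{\alpha_1}$ suffices, because (i) $v_1 = [a]_l \cdot [b]_m$ so the ``all-zero differences'' branch automatically yields the correct sharing, and (ii) if any corrupt $P_{\alpha_j}$ has submitted $v_j \neq v_1$, then the corresponding $d_j = v_1 - v_j \neq 0$ is detected by $\Rec$, triggering the default-sharing fallback without any loss of privacy. Putting these pieces together for every $(l, m)$ and summing gives the required $[c]$ with $c = a \cdot b$.
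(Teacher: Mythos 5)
Your proof is correct and takes essentially the same approach as the paper's: it reduces to the per-summand claim, invokes Lemma~\ref{lemma:MultHelpingLemma} and the $\Z_a$-correctness of $\ACS$ to get a common $\R_{l,m}$ containing at least one honest contributor, case-splits on whether the reconstructed differences are all zero, and argues privacy by noting that the default-sharing fallback is only triggered when $\R_{l,m}$ contains a corrupt party who already knows $[a]_l$ and $[b]_m$. The closing paragraph, which makes explicit that a single honest contributor suffices, is a nice articulation of the key idea but does not change the argument.
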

   \begin{proof}
  The proof is very similar to the proof of 
  Lemma \ref{lemma:MultSynchronousProperties}. Namely, we show that 
  almost-surely, corresponding to
   each ordered pair $(l, m) \in \{1, \ldots, q \} \allowbreak \times \{1, \ldots, q\}$, all honest
   parties eventually and securely compute a secret-sharing of the summand $[a]_l \cdot [b]_m$.
   For this, we consider
   an {\it arbitrary} ordered pair $(l, m) \in \{1, \ldots, q \} \times \{1, \ldots, q\}$.
   From Lemma \ref{lemma:MultHelpingLemma}, almost-surely, all honest parties will eventually
   compute a set $\R_{l, m}$
   as the output of the instance $\ACS(\Q_{l, m})$, where $\R_{l, m} \subseteq \Q_{l, m}$, and where {\it at least} one
   {\it honest} party from $\Q_{l, m}$ will be present in $\R_{l, m}$. Let $|\R_{l, m}| = r$ and $\R_{l, m} = \{P_{\alpha_1}, \ldots, P_{\alpha_r} \}$.
    Moreover, without loss of generality, let $P_{\alpha_1}$ be {\it honest}. 
    From the $\Z_a$-correctness of $\ACS$ in the {\it asynchronous} network, 
    corresponding to {\it every} $P_{\alpha_j} \in \R_{l, m}$, there exists
    some value, $v_j$, which will be eventually secret-shared among the parties on the behalf of $P_{\alpha_j}$, during the instance
    $\ACS(\Q_{l, m})$.
    Moreover, since $P_{\alpha_1}$ is assumed to be {\it honest}, from the protocol steps, $v_1 = [a]_l \cdot [b]_m$.
    
     Now there are two possible cases:
  \begin{myitemize}
  \item {\it Every party in $\R_{l, m}$ participates with input
   $[a]_l \cdot [b]_m$ during $\ACS(\Q_{l, m})$}: In this case, $v_1 = v_2 = \ldots = v_r = [a]_l \cdot [b]_m$ and hence
   $d_1 = \ldots = d_{r - 1} = 0$. From the properties of $\Rec$, 
    the honest parties eventually reconstruct the $r - 1$
   differences $d_1, \ldots, d_{r - 1}$
    and find all of them to be $0$. Hence they set $[[a]_l \cdot [b]_m]$ to $[v_1]$, where
    $v_1$ is the same as $[a]_l \cdot [b]_m$. 
    The privacy in this case follows from privacy of $\ACS$, and the fact that the
   adversary only learns $r - 1$ differences which are all $0$.   
  \item {\it Some party in $\R_{l, m}$ participates with an input, which is different from $[a]_l \cdot [b]_m$, during
   $\ACS(\Q_{l, m})$}: Let $P_i \in \R_{l, m}$ be a corrupt party, corresponding to which $v_i$
   is shared during $\ACS(\Q_{l, m})$, 
   where $v_i \neq [a]_l \cdot [b]_m$. Since $v_i \neq v_1$, it follows that at least one of the 
    $r- 1$ differences $d_1, \ldots, d_{r - 1}$ will be non-zero,
  and the parties detect the same when these differences are eventually reconstructed.
   In this case, the parties eventually reconstruct the shares
  $[a]_l$ and $[b]_m$, 
   and take a default secret-sharing of $[a]_l \cdot [b]_m$.
    The privacy in this case follows from the fact that since there
  exists a {\it corrupt} party in $\R_{l, m}$, the adversary is already aware of the shares 
  $[a]_l$ and $[b]_m$ and hence, publicly reconstructing these values
  does not add any new information to the view of the adversary.
  \end{myitemize}
  \end{proof}

  The proof of Lemma \ref{lemma:Mult} now easily follows from Lemma \ref{lemma:MultSynchronousProperties}
   and Lemma \ref{lemma:MultAsynchronousProperties}. The communication complexity follows from the fact that
    $q^2 = |\Z_s|^2$ instances of $\ACS$ are executed.  
  \begin{lemma}
   \label{lemma:Mult}
  Let $\Adv$ be an adversary, characterized by an adversary structure $\Z_s$ in a synchronous network and adversary structure $\Z_a$ in an asynchronous network, satisfying the conditions $\Con$ (see Condition \ref{condition:Con} in Section \ref{sec:prelims}).
   Moreover, let $\SharingSpec = \{S_m : S_m = \PartySet \setminus Z_m \; \mbox{ and } \; Z_m \in \Z_s \}$.
    Then protocol $\Mult$ achieves the following properties, where the inputs of the parties are $[a]$ and $[b]$.
   \begin{myitemize}
   \item[--] {\bf $\Z_s$-correctness}: In a synchronous network, all honest parties output $[c]$ within time
     $\TimeMult = \TimeACS + 2\Delta$, where $c = a \cdot b$.
   \item[--] {\bf $\Z_a$-correctness}: In an asynchronous network, almost-surely, the honest parties eventually output
   $[c]$, where $c = a \cdot b$.
   \item[--] {\bf Privacy}: Irrespective of the network type, the view of the adversary remains independent of $a$ and $b$.
   \item[--] {\bf Communication Complexity}: $\Mult$ incurs a communication of 
   $\Order(|\Z_s|^3 \cdot n^5 (\log{|\K|} + \log{|\Z_s| + \log n})
    + |\Z_s|^2 \cdot n^6 \log n)$ bits and invokes $\Order(|\Z_s|^2 \cdot n)$ instances of $\BA$.
   \end{myitemize}
  \end{lemma}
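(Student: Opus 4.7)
The plan is to derive Lemma~\ref{lemma:Mult} by showing that each summand $[a]_l \cdot [b]_m$ in the expansion
\[
 [c] \;=\; [a \cdot b] \;=\; \sum_{(l,m) \in \{1,\ldots,q\}^2} [\,[a]_l \cdot [b]_m\,]
\]
is securely secret-shared, and then invoking the linearity of the sharing to sum the summand-sharings locally. Since the protocol repeats the same sub-routine for every ordered pair $(l,m)$, I would fix an arbitrary pair and argue correctness, privacy, and resource usage for that pair; the overall claim then follows by a union-style argument over the $q^2 = |\Z_s|^2$ pairs.

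The first step is a purely combinatorial check that $\Q_{l,m} = S_l \cap S_m = \PartySet \setminus (Z_l \cup Z_m)$ satisfies $\Q^{(1,1)}(\Q_{l,m}, \Z_s, \Z_a)$: if some $Z_\alpha \in \Z_s$ and $Z_\beta \in \Z_a$ covered $\Q_{l,m}$, then $Z_l \cup Z_m \cup Z_\alpha \cup Z_\beta = \PartySet$, contradicting $\Q^{(3,1)}(\PartySet, \Z_s, \Z_a)$. This is exactly what is needed to instantiate $\ACS(\Q_{l,m})$ and invoke Lemma~\ref{lemma:ACS}: in a synchronous network, at time $\TimeACS$ every honest party obtains a common $\R_{l,m}$ with $\Q_{l,m} \setminus \R_{l,m} \in \Z_s$ and with every honest party of $\Q_{l,m}$ inside $\R_{l,m}$; in an asynchronous network, almost-surely every honest party eventually obtains such an $\R_{l,m}$, and the $\Q^{(1,1)}$ condition forces $\R_{l,m}$ to contain at least one honest party. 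In both cases the $\ACS$ guarantees furnish, for each $P_{\alpha_j} \in \R_{l,m}$, some value $v_j$ that is secret-shared on the behalf of $P_{\alpha_j}$ with respect to $\SharingSpec$, and $v_j = [a]_l \cdot [b]_m$ whenever $P_{\alpha_j}$ is honest.

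Next, I would analyse the consistency test, splitting into two cases depending on whether all the $v_j$'s are equal. In the agreement case ($v_1 = \cdots = v_r = [a]_l \cdot [b]_m$), the $r-1$ differences $d_1, \ldots, d_{r-1}$ that the parties reconstruct via $\Rec$ are all zero, so within an additional $\Delta$ time (in a synchronous network) or eventually (in an asynchronous network), by Lemma~\ref{lemma:Rec}, every honest party sets $[[a]_l \cdot [b]_m] = [v_1]$, which is correct; privacy holds because only the all-zero differences are revealed, adding nothing to $\Adv$'s view beyond what the $\ACS$ privacy already permits. In the disagreement case, at least one $d_k \neq 0$ is revealed, triggering a public reconstruction of $[a]_l$ and $[b]_m$ and the installation of a default sharing of their product; here, privacy is preserved because a disagreement can only be caused by a corrupt party in $\R_{l,m}$, which by construction already holds both $[a]_l$ and $[b]_m$ (it lies in $S_l \cap S_m$), so making these two values public tells $\Adv$ nothing new. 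Summing the timings across the two reconstruction rounds gives $\TimeMult = \TimeACS + 2\Delta$ in the synchronous case, and the almost-sure eventual delivery of $\ACS$ and $\Rec$ gives eventual termination in the asynchronous case.

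The main obstacle I anticipate is getting the privacy argument in the disagreement case airtight, because it requires simultaneously using that $\R_{l,m} \subseteq S_l \cap S_m$ (so any party in $\R_{l,m}$ already knows $[a]_l$ and $[b]_m$) and that a disagreement necessarily implicates such a corrupt party; this must be checked carefully in both the synchronous and asynchronous regimes, since in the asynchronous case the honest-party guarantee for $\R_{l,m}$ is weaker. Everything else is bookkeeping: the communication complexity follows by multiplying the cost per $\ACS$ invocation (from Lemma~\ref{lemma:ACS} with $L=1$) by $q^2 = |\Z_s|^2$ and adding $\Order(|\Z_s|^2)$ $\Rec$ invocations, whose cost is absorbed into the $\ACS$ term; the $\BA$-instance count similarly scales as $\Order(|\Z_s|^2 \cdot n)$ since each $\ACS$ invokes $\Order(n)$ instances of $\BA$.
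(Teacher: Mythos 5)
Your proposal matches the paper's proof essentially step for step: the same decomposition of $[a\cdot b]$ into the $q^2$ summands $[a]_l\cdot[b]_m$, the same combinatorial contradiction showing $\Q^{(1,1)}(\Q_{l,m},\Z_s,\Z_a)$ from $\Q^{(3,1)}(\PartySet,\Z_s,\Z_a)$ (which the paper isolates as Lemma~\ref{lemma:MultHelpingLemma}), the same invocation of $\ACS$ to get $\R_{l,m}$ with at least one honest member, the same case split on whether the reconstructed differences are all zero, and the same privacy argument that a disagreement implicates a corrupt party in $\R_{l,m}\subseteq S_l\cap S_m$ who already holds both shares. The obstacle you flag in the asynchronous case dissolves exactly as you suspect (one honest party in $\R_{l,m}$ suffices to pin $v_1$), and the communication-complexity bookkeeping is correct.
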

  We next discuss the modifications needed in the protocol $\Mult$ to handle the case when $L > 1$.
  \paragraph{\bf Protocol $\Mult$ for $L$ Pairs of Inputs:} If the input for $\Mult$ is $\{([a^{(\ell)}], \allowbreak [b^{(\ell)}]) \}_{\ell = 1, \ldots L}$,
 then during the instance of $\ACS(\Q_{l, m})$, each party in $\Q_{l, m}$ will have to share $L$ summands. Similarly, corresponding to
  $\R_{l, m}$, the parties reconstruct $(|\R_{l, m}| - 1) \cdot L$ number of difference values. The rest of the protocol steps remain the same.
  With these modification, $\Mult$ will now incur a
   communication of  $\Order(L \cdot |\Z_s|^3 \cdot n^5 (\log{|\K|} + \log{|\Z_s|} + \log n) 
   + |\Z_s|^2 \cdot n^6 \log n)$ bits and invokes $\Order(|\Z_s|^2 \cdot n)$
    instances of $\BA$.
  \subsection{Protocol $\Offline$ and Its Properties}
  In this section, we formally present our preprocessing phase protocol $\Offline$
   and prove its properties.  For the sake of simplicity, we first explain the protocol
   to generate one random secret-shared multiplication-triple. The protocol is presented in Fig \ref{fig:preprocessing}.
   \begin{protocolsplitbox}{$\Offline$}{The preprocessing phase protocol for generating a secret-sharing of one
  random multiplication-triple}{fig:preprocessing}
	\justify
\begin{mydescription}
\item[--] {\bf Generating random pairs of values}: The parties do the
 the following.
  \begin{myitemize}
  \item[--] Participate in an instance $\ACS(\PartySet)$ of
 $\ACS$, where the input of each $P_i$ is a random pair of values $(a_i, b_i) \in \K$. 
  \item[--] Let $\R$ be the output of the instance $\ACS(\PartySet)$, where $\PartySet \setminus \R \in \Z_s$. Moreover, corresponding
   to $P_i \in \R$, let $(a_i^{\star}, b_i^{\star}) \in \K$ be the pair of values such that the parties hold
   $[a_i^{\star}]$ and $[b_i^{\star}]$ during the instance $\ACS(\PartySet)$.
  \item[--] The parties locally compute $[a] = \displaystyle \sum_{P_i \in \R} [a^{\star}_i]$
   and $[b] = \displaystyle \sum_{P_i \in \R} [b^{\star}_i]$.
  \end{myitemize}
 \item[--] {\bf Computing the product}:  The parties compute
  $[c]$ by executing $\Mult([a], [b])$.
   \begin{myitemize}
    \item[--] The parties output $([a], [b], [c])$.
   \end{myitemize} 
 \end{mydescription}
\end{protocolsplitbox}

We next prove the properties of the protocol $\Offline$. 
   \begin{lemma}
    \label{lemma:Offline}
 Let $\Adv$ be an adversary characterized by an adversary structure $\Z_s$ in a synchronous network and adversary structure $\Z_a$ in an asynchronous network, satisfying the conditions $\Con$ (see Condition \ref{condition:Con} in Section \ref{sec:prelims}).
   Moreover, let $\SharingSpec = \{S_m : S_m = \PartySet \setminus Z_m \; \mbox{ and } \; Z_m \in \Z_s \}$.
    Then $\Offline$ achieves the following.
      \begin{myitemize}
   \item[--] {\bf $\Z_s$-correctness}: In a synchronous network, all honest parties output $([a], [b], \allowbreak [c])$ within time
     $\TimeOffline = \TimeACS + \TimeMult$, where $c = a \cdot b$.
   \item[--] {\bf $\Z_a$-correctness}: In an asynchronous network, almost-surely, the honest parties eventually output
   $([a], [b], [c])$, where $c = a \cdot b$.
   \item[--] {\bf Privacy}: Irrespective of the network type, the view of the adversary remains independent of $a, b$ and $c$.
   \item[--] {\bf Communication Complexity}: 
   $\Offline$ incurs a communication of 
   $\Order(L \cdot |\Z_s|^3 \cdot n^5 (\log{|\K|} + \log{|\Z_s|} + \log n) + |\Z_s|^2 \cdot n^6 \log n)$ bits and invokes $\Order(|\Z_s|^2 \cdot n)$ instances of $\BA$.
   \end{myitemize}
\end{lemma}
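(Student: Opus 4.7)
The plan is to derive each clause of Lemma~\ref{lemma:Offline} by directly invoking the properties of $\ACS$ (Lemma~\ref{lemma:ACS}) and $\Mult$ (Lemma~\ref{lemma:Mult}), since $\Offline$ is essentially a sequential composition of a single $\ACS$ instance followed by a $\Mult$ instance (plus local linear combinations). First, I would verify the hypotheses: applying $\ACS$ with $\Q = \PartySet$ is admissible because the conditions $\Con$ give the $\Q^{(3,1)}(\PartySet,\Z_s,\Z_a)$ condition. Hence, by the $\Z_s$-correctness of $\ACS$ in the synchronous case, at time $\TimeACS$ every honest party holds a common set $\R\subseteq\PartySet$ with $\PartySet\setminus\R\in\Z_s$ and, for each $P_i\in\R$, a secret-sharing of some value $a_i^{\star}$ and $b_i^{\star}$ (with $a_i^{\star}=a_i, b_i^{\star}=b_i$ for honest $P_i$). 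By the $\Z_a$-correctness of $\ACS$, almost-surely the same holds eventually in the asynchronous case.

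Next, by the linearity of the secret-sharing scheme, the parties locally compute $[a]=\sum_{P_i\in\R}[a_i^{\star}]$ and $[b]=\sum_{P_i\in\R}[b_i^{\star}]$ at no additional cost. Feeding these into $\Mult$, Lemma~\ref{lemma:Mult} gives $[c]$ with $c=a\cdot b$ within time $\TimeMult$ in a synchronous network and eventually (almost-surely) in an asynchronous one. Summing the two stages yields the stated termination time $\TimeOffline=\TimeACS+\TimeMult$ for $\Z_s$-correctness, and eventual almost-sure termination for $\Z_a$-correctness.

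For privacy, the crucial observation is that $\R$ is guaranteed to contain at least one honest party $P_h$, in both network types: indeed $\PartySet\setminus\R\in\Z_s$ and the set of honest parties is not in $\Z_s$ (this follows from the $\Q^{(3)}(\PartySet,\Z_s)$ condition implied by $\Con$, and analogously for an asynchronous network using the $\Q^{(3,1)}$ condition together with every $Z\in\Z_a$ being a subset of some $Z'\in\Z_s$). By the privacy clause of Lemma~\ref{lemma:ACS}, the adversary's view is independent of $(a_h,b_h)$, which is chosen uniformly at random. Since $a=a_h^{\star}+\sum_{P_i\in\R\setminus\{P_h\}}a_i^{\star}$ and similarly for $b$, the sums $a$ and $b$ are uniformly random from the adversary's view. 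Finally, by the privacy clause of Lemma~\ref{lemma:Mult}, invoking $\Mult([a],[b])$ leaks no additional information about $a$ or $b$, and consequently none about $c=ab$.

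For the communication complexity and the count of $\BA$ instances, I would simply add the costs of one $\ACS$ invocation on $L$-sized inputs per party and one batched $\Mult$ invocation on $L$ pairs, both of which are stated in Lemmas~\ref{lemma:ACS} and~\ref{lemma:Mult}. The dominant terms come from $\Mult$, yielding the claimed $\Order(L\cdot|\Z_s|^3\cdot n^5(\log|\K|+\log|\Z_s|+\log n)+|\Z_s|^2\cdot n^6\log n)$ bits and $\Order(|\Z_s|^2\cdot n)$ instances of $\BA$. I do not anticipate a real obstacle here; the only subtlety worth highlighting is the privacy argument, where one must be careful that even though corrupt $P_i\in\R$ may secret-share values $(a_i^{\star},b_i^{\star})$ arbitrarily correlated with honest parties' contributions (via adversarial scheduling in the asynchronous case), the randomness from $P_h$'s contribution alone—together with the static nature of $\Adv$—still renders $a$ and $b$ uniformly distributed in the adversary's view.
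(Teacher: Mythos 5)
Your proof is correct and follows essentially the same route as the paper's: invoke $\Z_s$- and $\Z_a$-correctness of $\ACS$ (with $\Q=\PartySet$, justified by the $\Q^{(3,1)}(\PartySet,\Z_s,\Z_a)$ condition) to get the common set $\R$ and shared pairs, compute $[a],[b]$ by linearity, apply the corresponding properties of $\Mult$, and derive privacy from the privacy clauses of $\ACS$ and $\Mult$ together with the existence of at least one honest dealer in $\R$. One small remark: in your closing caveat, the reason a corrupt $P_i\in\R$ cannot correlate its shared pair with $(a_h,b_h)$ is precisely the privacy property of $\ACS$ (the adversary's view, and hence its dealing strategy, is statistically independent of $a_h,b_h$), not the static nature of $\Adv$; you do cite the privacy clause, so this is only an imprecision in the side remark, not a gap in the argument.
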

\begin{proof}
We first consider a {\it synchronous} network. From the $\Z_s$-correctness of $\ACS$ in the {\it synchronous} network,
 it follows that at time $\TimeACS$, all honest parties will output a common subset of parties $\R \subseteq \PartySet$, where
  $\PartySet \setminus \R \in \Z_s$. Moreover, corresponding to every $P_i \in \R$, there will be a pair of values
  $(a_i^{\star}, b_i^{\star}) \in \K$, which will be secret-shared with respect to $\SharingSpec$ on the behalf of $P_i$, during
  the instance of $\ACS$. Furthermore, $(a_i^{\star}, b_i^{\star}) = (a_i, b_i)$ for every {\it honest}
  $P_i \in \R$. Also, if $P_i \in \R$ is {\it honest}, then the privacy property of $\ACS$ guarantees that the view of the adversary
  remains independent of $(a_i, b_i)$. The $\Z_s$-correctness property of $\ACS$ in the {\it synchronous} network also guarantees that
  {\it all} honest parties from $\PartySet$ will be present in $\R$. Now, since the {\it honest} parties in $\PartySet$ secret-share
  random pairs of values during $\ACS$, it follows that $(a, b)$ will be random from the point of view of the adversary.
  Finally, the $\Z_s$-correctness property of $\Mult$ in the {\it synchronous}
  network guarantees that the parties output $([a], [b], [c])$ at time $\TimeACS + \TimeMult$, where $c = a \cdot b$.
   Moreover, the privacy property of $\Mult$ in the {\it synchronous} network guarantees that adversary does not learn any additional
   information about $a, b$ and $c$ (except that $c = a \cdot b$), during the instance of $\Mult$  as well.
   
   The proof of the properties in an {\it asynchronous} network is almost the same as above, except that we now use the
   $\Z_a$-correctness property of $\ACS$ in the {\it asynchronous} network (which guarantees that the
   parties eventually compute a common $\R$, containing {\it at least} one honest party) and the
   $\Z_a$-correctness property of $\Mult$ in the {\it asynchronous} network. To avoid repetition, we do not give the formal details.
   
   The communication complexity follows from the communication complexity of $\ACS$ and $\Mult$.
\end{proof}
We next discuss the modifications needed in the protocol $\Offline$ to generate $c_M$ number of secret-shared multiplication-triples.
\paragraph{\bf $\Offline$ for Generating $c_M$ Random Multiplication-triples:}
 To generate secret-sharing of $c_M$ random multiplication-triples, the instance of $\ACS$ during the first stage is executed
  by setting $L = c_M$. Consequently, there will be $c_M$ pairs of secret-shared values generated on the behalf of {\it each} party in $\R$.
  Moreover, the pairs of values shared by the {\it honest} parties in $\R$ will be random from the point of view of the adversary. 
  Consequently, summing the secret-sharing of the pairs of values shared by the parties in $\R$, leads to $c_M$
  pairs of random values being secret-shared during the {\it first} stage. Then, during the {\it second} stage, the parties execute an instance of
  $\Mult$, with the input being the $c_M$ pairs of random secret-shared pairs of values from the first stage. This securely leads 
  to  a secret-sharing of the product of each pair.  
  The resultant protocol will now incur a
   communication of  $\Order(c_M \cdot |\Z_s|^3 \cdot n^5 (\log{|\K|} + \log{|\Z_s|} + \log n) + |\Z_s|^2 \cdot n^6 \log n)$ bits, and invokes $\Order(|\Z_s|^2 \cdot n)$
    instances of $\BA$.

\section{The Circuit-Evaluation Protocol and Its Properties}
\label{app:MPC}
Protocol $\PiMPC$ for securely evaluating the circuit $\ckt$ is presented in Fig \ref{fig:MPC}. 
\begin{protocolsplitbox}{$\PiMPC(\ckt, \Z_s, \Z_a)$}{A best-of-both-worlds perfectly-secure protocol for securely evaluating the arithmetic 
 circuit $\ckt$}{fig:MPC}
  \begin{myitemize}
  \item[--] {\bf Preprocessing and Input-Sharing} --- The parties do the following:
    \begin{myitemize}
     \item[--] Each $P_i \in \Partyset$, on having the input $x_i$ for $f$, 
     participates in the instance
    $\ACS(\PartySet)$ of $\ACS$ with input  $x_i$. 
      Let $\CoreSet$ be the common subset of parties obtained as the output during $\ACS(\PartySet)$, 
      where $\PartySet \setminus \CoreSet \in \Z_s$. Corresponding to every
     $P_j \not \in \CoreSet$, set $x_j = 0$, and set $[x_j]$ to the default
     secret-sharing, where $[x_j]_1 = [x_j]_2 = \ldots = [x_j]_q = 0$.     
     \item[--] Participate in an instance of $\Offline$
     to generate $c_M$ number of secret-shared, random multiplication-triples
      $\{[\mathbf{a}^{(j)}], [\mathbf{b}^{(j)}], [\mathbf{c}^{(j)}] \}_{j = 1, \ldots, c_M}$.
     \end{myitemize}  
   \item[--] {\bf Circuit Evaluation} --- Let $G_1, \ldots, G_{m}$ be a publicly-known topological ordering
       of the gates of $\ckt$. For $k = 1, \ldots, m$, the parties do the following for gate $G_k$:
         \begin{myitemize}
	   \item[--] {\it If $G_k$ is an addition gate}: the parties locally compute $[w] = [u] + [v]$, 
	   where $u$ and $v$ are gate-inputs, and $w$ is the gate-output.
	   \item[--] {\it If $G_k$ is a multiplication-with-a-constant gate with constant $c$}: the parties locally compute
	    $[v] = c \cdot [u]$, where $u$ is the gate-input, and $v$ is the gate-output.
	   \item[--] {\it If $G_k$ is an addition-with-a-constant gate with constant $c$}: the parties locally compute 
	   $[v] = c +  [u]$, where $u$ is the gate-input, and $v$ is the gate-output.
	   \item[--] {\it If $G_k$ is a multiplication gate}: Let $G_k$ be the $\ell^{th}$ multiplication gate in $\ckt$, where $\ell \in \{1, \ldots, c_M \}$,
	    and let $([a^{(\ell)}], [b^{(\ell)}], [c^{(\ell)}])$ be the $\ell^{th}$ shared multiplication-triple generated during $\Offline$.
	   Moreover,  let $[u]$ and $[v]$ be the shared gate-inputs of $G_k$. 
            Then, the parties participate in an instance $\BatchBeaver(([u], [v]), ([a^{(\ell)}], [b^{(\ell)}], [c^{(\ell)}]))$ of $\BatchBeaver$, 
            and obtain $[w]$, where $w = u \cdot v$.	   
     \end{myitemize}
    \item[--] {\bf Output Computation} --- Let $[y]$ be the secret-shared circuit-output. 
    The parties participate in an instance $\Rec(y, \SharingSpec)$ of $\Rec$ and reconstruct $y$.    
      \item[--] {\bf Termination}: Each $P_i$ does the following.
          \begin{myitemize}
          \item[--] If $y$ has been obtained during output computation, then send the message $(\ready, y)$ to all the parties.
          \item[--] If the message $(\ready, y)$ is received from a set of parties ${\cal C}$, where ${\cal C} \not \in \Z_s$, then 
          send $(\ready, y)$ message to all the parties if no $\ready$ message is sent earlier.
           \item[--] If the message $(\ready, y)$ is received from a set of parties $\PartySet \setminus Z$, for some $Z \in \Z_s$, 
           then terminate all sub-protocols, output $y$,
           and terminate.           
          \end{myitemize}   
   \end{myitemize}
\end{protocolsplitbox}

\noindent {\bf Theorem \ref{thm:MPC}.}
{\it Let $\Adv$ be an adversary characterized by an adversary structure $\Z_s$ in a synchronous network and adversary structure $\Z_a$ in an asynchronous network satisfying the conditions $\Con$ (see Condition \ref{condition:Con} in Section \ref{sec:prelims}). 
 Moreover, let $f: \K^n \rightarrow \K$ be a function represented by an arithmetic circuit $\ckt$ over $\K$, consisting of
 $c_M$ number of multiplication gates, with a multiplicative depth of $D_M$ and where each party $P_i$ has an input $x_i \in \K$
  for $f$.
 Furthermore, let $\SharingSpec = \{S_m : S_m = \PartySet \setminus Z_m \; \mbox{ and } \; Z_m \in \Z_s \}$.
 Then, $\PiMPC$ incurs a communication of 
 $\Order(c_M \cdot |\Z_s|^3 \cdot n^5 (\log{|\K|} + \log{|\Z_s|} + \log n) + |\Z_s|^2 \cdot n^6 \log n)$ bits, invokes $\Order(|\Z_s|^2 \cdot n)$ instances of $\BA$, 
   and achieves the following. 
      \begin{myitemize}
      \item[--] In a synchronous network, all honest parties output $y = f(x_1, \ldots, x_n)$ at time $(30n + D_M + 6k + 38) \cdot \Delta$,
       where $x_j = 0$ for every $P_j \not \in \CoreSet$, such that
       $\PartySet \setminus \CoreSet \in \Z_s$
       and every honest party from $\PartySet$ will be present in $\CoreSet$; here $k$ is the constant from Lemma \ref{lemma:ABAGuarantees}, as determined by the underlying
       (existing) perfectly-secure ABA protocol $\ABA$.
       \item[--] In an asynchronous network, almost-surely, the honest parties eventually output $y = f(x_1, \ldots, x_n)$, 
       where $x_j = 0$ for every $P_j \not \in \CoreSet$,
       and where $\PartySet \setminus \CoreSet \in \Z_s$.
       \item[--] The view of the adversary remains independent of the inputs of the honest parties in $\CoreSet$.                
   \end{myitemize}
}
\begin{proof}
Consider a {\it synchronous} network. Let $Z_s^{\star} \in \Z_s$ be the set of 
 {\it corrupt} parties, and let $\Honest_s \defined \PartySet \setminus Z_s^{\star}$ be the set of {\it honest} parties.
   From the $\Z_s$-correctness property of $\Offline$ in the {\it synchronous} network,
  at time $\TimeOffline$, the (honest)
 parties will have $c_M$ number of
  secret-shared multiplication-triples, shared with respect to $\SharingSpec$, from the instance of $\Offline$. 
  From the $\Z_s$-correctness property of $\ACS$ in the {\it synchronous} network, at time $\TimeACS$, the (honest)
  parties will have a common subset $\CoreSet$ from the instance of $\ACS$, where {\it all} {\it honest} parties will be present in $\CoreSet$,
   and  where $\PartySet \setminus \CoreSet \in \Z_s$.
  Moreover, corresponding to {\it every} $P_j \in \CoreSet$, there will be some $x_j \in \K$ held by $P_j$  
  (which will be the same as $P_j$'s input for $f$ for an {\it honest} $P_j$), 
  such that $x_j$ will be
  secret-shared with respect to $\SharingSpec$.
   As $\CoreSet$ will be known {\it publicly}, the parties take a default secret-sharing of $0$ on the behalf of the parties 
   $P_j$ outside $\CoreSet$, by considering $x^{(j)} = 0$.
  Since $\TimeACS < \TimeOffline$, it follows that at time $\TimeOffline$, the parties will hold a secret-sharing of $c_M$ multiplication-triples
  and secret-sharing of $x_1, \ldots, x_n$. 
  
  The circuit-evaluation will take $D_M \cdot \Delta$ time. This follows from the fact that linear gates are
   evaluated locally (non-interactively), while all the {\it independent} multiplication gates can be evaluated in parallel
  by running the corresponding instances of $\BatchBeaver$ in {\it parallel}, where each such instance requires $\Delta$ time.
   From the $\Z_s$-correctness property of $\BatchBeaver$
   in the {\it synchronous} network, the
  multiplication-gates will be evaluated correctly and hence, during the output-computation phase, the parties will hold a 
  secret-sharing of $y$ (with respect to $\SharingSpec$), where 
  $y = f(x_1, \ldots, x_n)$.
   From the properties of $\Rec$, it will take $\Delta$ time for every party to reconstruct $y$. Hence, during the termination phase,
  {\it all} honest parties will send a $\ready$ message for $y$. Since $\PartySet \setminus \Honest_s \in \Z_s$, 
   every honest party will then terminate with output $y$ at time
  $\TimeOffline + (D_M + 2) \cdot \Delta$. By substituting the values of 
   $\TimeOffline, \TimeVSS, \TimeACS, \TimeBCAST, \TimeBA, \TimeSBA$   and
  $\TimeABA$ and by noting that all instances of $\BCAST$ in $\PiMPC$ are invoked with $\Z = \Z_s$, we get that
  the parties terminate the protocol at time $(D_M + 30n + 6k + 38) \cdot \Delta$, where $k$ is the 
  constant from Lemma \ref{lemma:ABAGuarantees}, as determined by the underlying
       (existing) perfectly-secure ABA protocol $\ABA$.

     If we consider an {\it asynchronous} network, then the proof is similar as above, except that we 
     now use the security properties of the building blocks
  $\Offline, \allowbreak \ACS, \BatchBeaver$ and $\Rec$ in the {\it asynchronous} network. Let $Z_a^{\star} \in \Z_a$ be the set of {\it corrupt}
  parties, and let $\Honest_a \defined \PartySet \setminus Z_a^{\star}$ be the set of {\it honest} parties. 
  During the termination phase, 
  the parties in $Z_a^{\star}$ may send $\ready$ messages for $y'$, where $y' \neq y$. Since every subset in $\Z_a$ is a subset of some
   subset in $\Z_s$, it follows that no honest party will terminate with output $y'$, where $y' \neq y$.
   On the other hand, all the parties in $\Honest_a$ will eventually compute the output $y$, and will send a $\ready$ message
   for $y$, which is eventually delivered to every honest party. 
    Now, consider an {\it honest} party $P_h$, who  terminates with output $y$. We wish to show that every honest party eventually
  terminates the protocol with output $y$. 
  This is because $P_h$ must have received $\ready$ messages for $y$ from a subset of parties $\PartySet \setminus Z_{\alpha}$, for
  some $Z_{\alpha}$. Since $\Z_s$ satisfies the $\Q^{(3)}(\PartySet, \Z_s)$ condition, it follows that that 
  $\Honest_a \cap (\PartySet \setminus Z_{\alpha}) \not \in \Z_s$. Now the $\ready$ messages for $y$ from the set of
  parties  $\Honest_a \cap (\PartySet \setminus Z_{\alpha}) $ 
  are eventually delivered to {\it every} 
  honest party. Consequently, irrespective of which stage of the protocol an honest party is in, every party in $\Honest_a$
   (including $P_h$) eventually
   sends a $\ready$ message for $y$, which is eventually delivered. Since $\PartySet \setminus \Honest_a \in \Z_s$
   (as every subset in $\Z_a$ is a subset of some subset from $\Z_s$), 
    this implies that every honest party eventually terminates with output $y$.
  
  From the privacy property of $\ACS$, corresponding to every {\it honest} $P_j \in \CoreSet$,
   the input $x_j$ will be random from the point of view
  of the adversary. Moreover, from the 
  privacy property of $\Offline$, the multiplication-triples generated through $\Offline$ will be random from the point of view of the
  adversary. During the evaluation of linear gates, no interaction happens among the parties and hence, 
  no additional information about the inputs of the honest parties is revealed.
  The same is true during the evaluation of multiplication-gates as well, which follows from the privacy property of $\BatchBeaver$. 

  The communication complexity of the protocol follows from the communication complexity of $\Offline, \ACS$ and $\BatchBeaver$.

\end{proof}

\end{document}